\documentclass[pdflatex,sn-mathphys-num]{sn-jnl}

\usepackage{graphicx}%
\usepackage{multirow}%
\usepackage{amsmath,amssymb,amsfonts}%
\usepackage{amsthm}%
\usepackage{mathrsfs}%
\usepackage[title]{appendix}%
\usepackage{xcolor}%
\usepackage{textcomp}%
\usepackage{manyfoot}%
\usepackage{booktabs}%
\usepackage{algorithm}%
\usepackage{algorithmicx}%
\usepackage{algpseudocode}%
\usepackage{listings}%
\usepackage{subcaption}

\theoremstyle{thmstyleone}%
\newtheorem{theorem}{Theorem}
\newtheorem{proposition}[theorem]{Proposition}%

\theoremstyle{thmstyletwo}%
\newtheorem{lemma}{Lemma}%
\newtheorem{corollary}{Corollary}

\theoremstyle{thmstylethree}%

\newcommand{\be}{\begin{eqnarray}}
\newcommand{\ee}{\end{eqnarray}}
\newcommand{\bea}{\begin{eqnarray*}}
\newcommand{\eea}{\end{eqnarray*}}

\newcommand{\pa}{\pmb{a}}
\newcommand{\px}{\pmb{x}}
\newcommand{\py}{\pmb{y}}
\newcommand{\pe}{\pmb{e}}

\newcommand{\pv}{\pmb{v}}
\newcommand{\pz}{\pmb{z}}

\newcommand{\pI}{\pmb{I}}

\newcommand{\pA}{\pmb{A}}

\newcommand{\pU}{\pmb{U}}
\newcommand{\pV}{\pmb{V}}
\newcommand{\pZ}{\pmb{Z}}

\newcommand{\pX}{\pmb{X}}

\newcommand{\pE}{\pmb{E}}
\newcommand{\pP}{\pmb{P}}

\newcommand{\hx}{\hat{\pmb{x}}}

\newcommand{\pSig}{\pmb{\Sigma}}

\newcommand{\mbC}{\mathbb{C}}
\newcommand{\mbE}{\mathbb{E}}

\newcommand{\mri}{\mathrm{i}}
\newcommand{\mrd}{\mathrm{d}}

\newcommand{\tp}{\texttt{T}}

\DeclareMathOperator*{\rank}{rank}
\DeclareMathOperator*{\tr}{tr}
\DeclareMathOperator*{\Unif}{Unif}
\DeclareMathOperator*{\dist}{dist}

\begin{document}

\title[Non-asymptotic Analysis of Expected Reconstruction Risk]{Non-asymptotic Analysis of Expected Reconstruction Risk for Trigonometric Polynomial Models}


\author[1]{\fnm{Hang} \sur{Xu}}\email{hangxu@zstu.edu.cn}

\author*[2]{\fnm{Yi} \sur{Shen}}\email{yshen@zstu.edu.cn}

\affil[1,2]{\orgdiv{Department of Mathematics}, \orgname{Zhejiang Sci-Tech University}, \orgaddress{\city{Hangzhou}, \postcode{310018}, \country{P. R. China}}}



\abstract{We investigate the expected reconstruction risk of trigonometric polynomial models under different sampling schemes.
Through numerical experiments, we observe that when the sampling nodes $\{t_l\}_{l=1}^m$ are i.i.d. random variables uniformly distributed over $[0,1)$, the associated structured random matrix $\pA \in \mbC^{m \times N}$ with $A_{l,k} = e^{2\pi \mathrm{i} kt_l}, k \in \Gamma = \{-q, \dots, q\}, N = 2q+1$ frequently becomes nearly singular or severely ill-conditioned. As a consequence, the expected reconstruction risk exhibits divergent behavior. 
In contrast, when the sampling nodes $t_l$ are either equidistant points or small random perturbations of an equidistant grid, the expected reconstruction risk undergoes a sharp phase transition at the interpolation threshold $m=N$.
To better understand the underlying mechanisms behind these different phenomena, we characterize the expected reconstruction risk through the spectral quantity $\sum_{i=1}^{r} \frac{1}{\sigma_i^2(\pA)}$, where $\sigma_i(\pA)$ denotes the singular values of the sampling matrix. Based on this spectral representation, we theoretically prove that the expected reconstruction risk diverges under uniformly distributed random sampling. Furthermore, we derive an explicit formula for the expected reconstruction risk in the equidistant sampling case and establish upper and lower bounds for the expected reconstruction risk under jittered sampling.}

\keywords{Trigonometric Polynomials, Phase Transition, Non-asymptotic Analysis, Expected Reconstruction Risk}

\maketitle

\section{Introduction}
In this paper, we focus on multivariate trigonometric polynomials, which offer a particularly convenient and mathematically rich framework. These models arise naturally in scenarios where uniform sampling is unavailable, with applications spanning cardiology \cite{strohmer1996recover}, computed tomography \cite{averbuch2001fast}, nonuniform Fourier analysis \cite{dutt1993fast}, geophysics \cite{rauth1998smooth}, and image processing \cite{strohmer1997computationally}. Beyond their practical utility, trigonometric polynomials provide a mathematically tractable setting for investigating how the singular value spectrum governs statistical recoverability. Specifically, the singular value distributions of trigonometric sensing matrices exhibit rich behaviors that depend on the sampling scheme, frequency configuration, and oversampling ratio. This makes them ideal candidates for studying interpolation phenomena \cite{muthukumar2020harmless}, double descent \cite{mei2022generalization}, and spectral transitions \cite{moitra2015super}.

\subsection{Sampling Models}
For a fixed constant $q$, we set 
\[
\Gamma = \{-q, \dots, q\}, \; N = 2q+1.
\]
We aim to recover the coefficient vector $\px \in \mathbb{C}^{N}$ of a random trigonometric polynomial defined by
\begin{align}\label{x}
	f(t) = \sum_{k \in \Gamma} x_k \, e^{2\pi \mri k t}, 
\end{align}
from noisy pointwise samples acquired at random locations. 
We consider two random models for the sampling nodes, as proposed in \cite{bass2005random}:
\begin{itemize}
\item[(a)] The $t_1,\dots , t_m$ are i.i.d. random variables, each of which is uniformly distributed over $[0, 1]$. 
\item[(b)] The jittered sampling nodes are obtained as random perturbations of a uniform grid, namely
\[
t_l=\frac{l-1}{m}+\delta_l,\qquad l=1,\dots,m,
\]
where $\delta_1,\dots,\delta_m$ are i.i.d. random variables uniformly distributed on $(0,T)$.
\end{itemize}
The entries of the associated structured random matrix $\pA$ are given by
\begin{equation}\label{eq:alk}
	A_{l,k} = e^{2\pi \mathrm{i} kt_l}, \quad l=1,\dots,m,\quad k\in\Gamma.
\end{equation}
Then the recovery problem is formulated as the linear system 
\begin{align} \label{continuous_model}
\py = \pA\px + \pe,
\end{align}
where   $\pe \in \mbC^m$ represents additive noise with  $\text{Re}(\pe), \text{Im}(\pe) \sim \mathcal{N}(0,\frac{\epsilon^2}{2m}\pI_m)$.
The regression coefficients $\hx$  is fitted with
\begin{equation}\label{eq:Adagger}
	\hx = \pA^\dagger \py.
\end{equation}
We estimate the risk of $\hx$ under a random model for $\px$ where
\begin{align} \label{xx}
\mbE[\px\px^*] = \pI_N,
\end{align}
which implies that $\mbE\|\px\|^2=N$.

Since the rank $r$ of matrix $\pA$ satisfies  $r \leq \min\{m,N\}$, we assume that the singular value decomposition (SVD) of $\pA$ is $\pA = \pU\pSig\pV^*$ with $\pU \in \mbC^{m \times r}$, $\pV \in \mbC^{N \times r}$ and $\pSig \in \mbC^{r \times r}$.
Thus
\begin{align}\label{A_dagger}
\pA^\dagger = \pV\pSig^{-1}\pU^*.
\end{align}
If $\pA$ is full row rank, then 
\begin{align}
\pA^\dagger = \pA^*(\pA\pA^*)^{-1}.
\end{align}
If $\pA$ is full column rank, then 
\begin{align}
\pA^\dagger = (\pA^*\pA)^{-1}\pA^*.
\end{align}
Generally, we have
\begin{align}
\hx = \pV\pV^*\px+\pV\pSig^{-1}\pU^*\pe.
\end{align}
Specifically, when $\pA$ is full column rank, then $\pV\pV^*=\pI_N$.  For how to compute \eqref{eq:Adagger} numerically in a fast way, please refer to \cite{dutt1993fast,potts2001fast} for the non-equidistant fast Fourier transform.

\subsection{Problem Set-up}
Despite this substantial body of literature, most existing analyses are probabilistic. Typical results provide high-probability bounds for the smallest singular value, the condition number, or reconstruction errors. Comparatively little is known about the expectation of the reconstruction error itself. This issue is particularly important because the expectation is much more sensitive to rare ill-conditioned events than high-probability estimates. Consequently, conclusions based solely on probabilistic stability do not necessarily imply finite expected reconstruction errors.
Motivated by this observation, we numerically investigate the expected reconstruction error under the above two sampling models. Our experiments reveal a striking contrast.

For independent uniform sampling, the expected reconstruction error grows rapidly as the number of Monte Carlo samples increases. Moreover, MATLAB frequently reports the warning:
\begin{quote}
``Matrix is close to singular or badly scaled. Results may be inaccurate.''
\end{quote}
This phenomenon indicates that nearly singular sampling matrices  occur sufficiently often to dominate the expectation. The numerical evidence strongly suggests that the expected error may diverge, despite the fact that stable reconstruction holds with overwhelming probability.

\begin{figure}[!t] 
\setlength{\abovecaptionskip}{-0.1cm}   
\setlength{\belowcaptionskip}{0.1cm}   
\centering 
\begin{minipage}[b]{0.48\textwidth} 
\centering 
\includegraphics[width=1\textwidth]{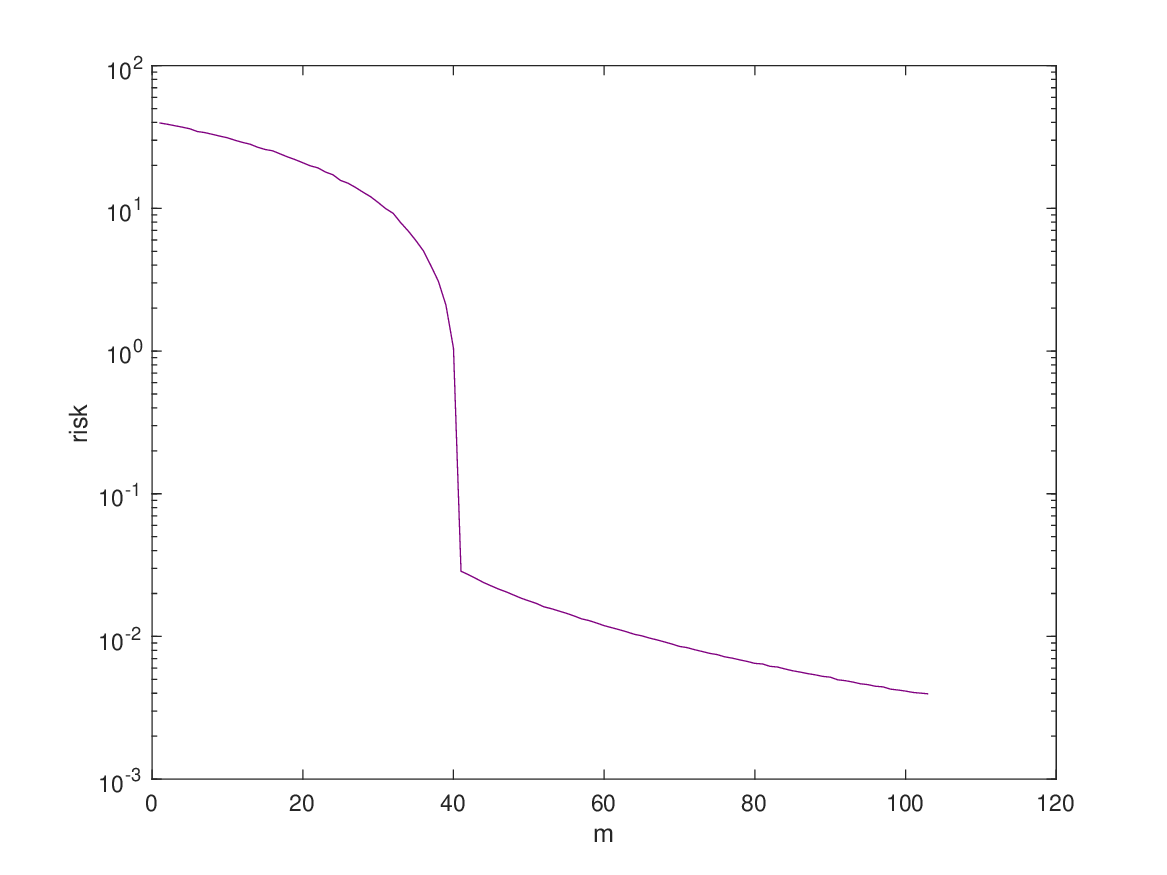} 
\vspace{-0.7cm}
\subcaption{$N=41$.}
\end{minipage}
\begin{minipage}[b]{0.48\textwidth} 
\centering 
\includegraphics[width=1\textwidth]{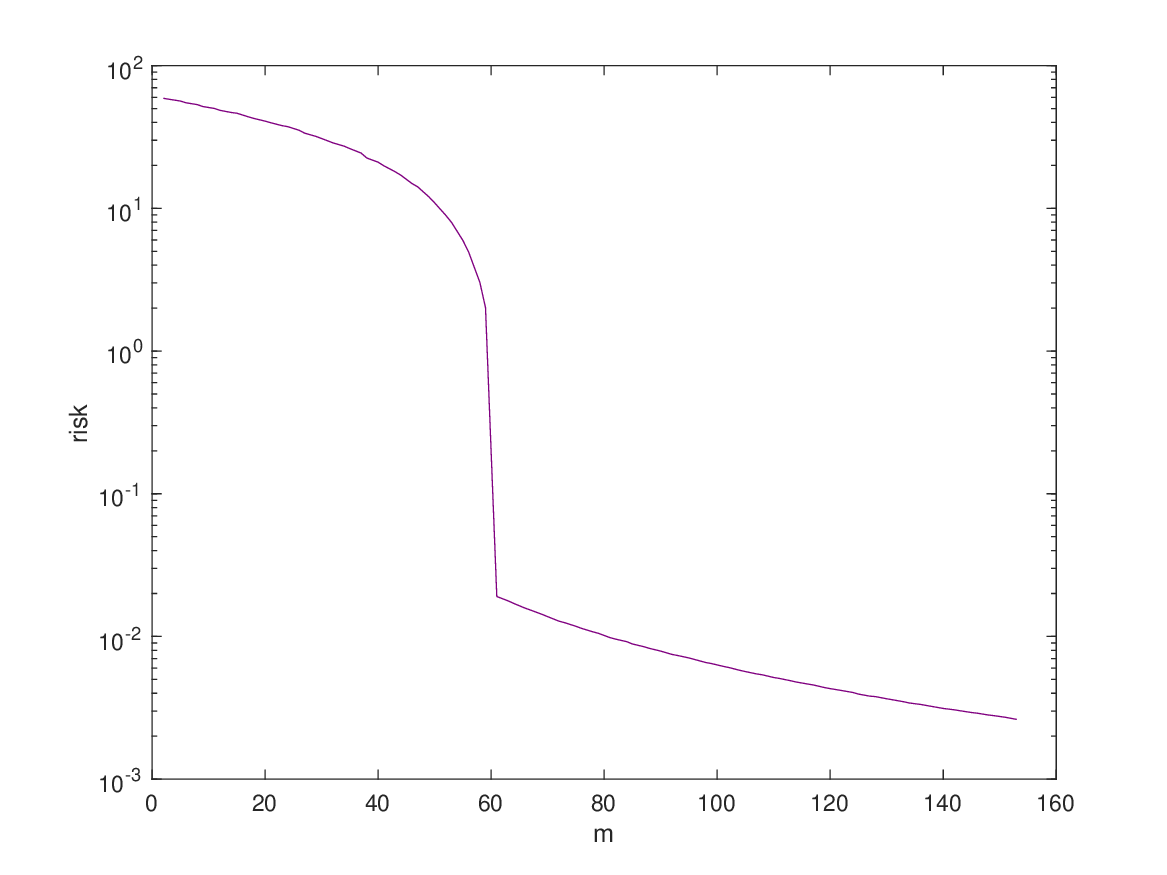} 
\vspace{-0.7cm}
\subcaption{$N=61$.}
\end{minipage}
  \caption{Plot of the average value of $\| \px - \hat{\px} \|^2$ as a function of $m$ under the random model for the jittered sampling nodes (b). Here, $N=41 \text{ or } 61$, each entry of the perturbation $\delta_l$ is drawn from i.i.d. $\Unif(0,\frac{1}{2m})$.
  A phase transition occurs at the critical sampling ratio $m/N=1$.}
\label{fig:random_risk}
\end{figure}

In contrast, jittered sampling exhibits a completely different behavior. As illustrated in Fig.~\ref{fig:random_risk}, the average error undergoes a sharp transition as the number of sampling points increases. Before the threshold $m = N$, the error decreases relatively slowly, whereas beyond this threshold it suddenly drops by several orders of magnitude and then continues to decay smoothly. 
Such a transition does not appear in independent uniform sampling, and its theoretical mechanism remains unclear.

Unlike deterministic uniform sampling, random sampling gives rise to a random Fourier--Vandermonde matrix whose spectral properties directly govern the statistical performance of least-squares reconstruction. In particular, the reconstruction risk can be expressed as a spectral function of the reciprocals of the singular values of the sampling matrix, making the statistical behavior of the smallest singular value the key factor determining reconstruction stability. As the smallest singular value collapses (approaches zero), the sampling matrix becomes increasingly ill-conditioned, leading to severe noise amplification during the inversion process and a rapid increase in the reconstruction risk, which may even cause its expectation to diverge.

In this work, we investigate two random sampling models: independent uniform sampling and jittered sampling. 
Our objective is to develop a theoretical explanation for the markedly different behaviors observed under the two sampling schemes. Specifically, we investigate why the expected reconstruction error appears to diverge for independent uniform sampling, whereas jittered sampling exhibits a finite expected error together with a sharp transition as the number of sampling points increases. We further seek to understand how these phenomena are governed by the spectral properties of the associated random Fourier sampling matrices and, ultimately, by the geometry of the sampling nodes.

\subsection{ Contribution}
We establish an exact spectral representation of the reconstruction risk for trigonometric regression, showing that the statistical performance is completely determined by the inverse singular value spectrum of the associated Fourier-Vandermonde matrix.
We identify a fundamental dichotomy between random and structured sampling. For uniformly random sampling, we prove that the expected inverse smallest singular value diverges, implying an infinite expected reconstruction risk. This demonstrates an intrinsic instability of random Fourier sampling.
For equidistant sampling, we derive the complete singular value distribution in closed form. The resulting explicit risk formula serves as the optimal benchmark for structured sampling schemes.
We develop deterministic and probabilistic perturbation theories for jittered sampling. Our estimates quantify how sampling perturbations affect the inverse singular spectrum and provide explicit finite-sample bounds on the reconstruction risk. 

We develop a spectral theory of statistical recoverability for structured trigonometric sensing systems.
By explicitly characterizing how the sampling scheme shapes the singular value spectrum of the sensing operator,
our results state that sampling scheme determines the singular value spectrum, the singular value spectrum determines statistical recoverability, and statistical recoverability, in turn, governs reconstruction risk, ill-posedness, and statistical phenomena such as spectral transitions.

\subsection{Related Work}
Random sampling of trigonometric polynomials and Fourier systems has been extensively studied in approximation theory, compressed sensing, and randomized numerical linear algebra. A major line of research concerns the stability of random Fourier sampling matrices. Seminal works by Cand\`es, Romberg, and Tao established the restricted isometry property (RIP) for randomly sampled Fourier measurements, laying the mathematical foundation of compressed sensing \cite{candes2006robust}.
Subsequently, Rauhut and coauthors developed stability estimates for random sampling of trigonometric polynomials, proving lower bounds for the smallest singular value and condition number under suitable oversampling assumptions \cite{rauhut2007random}.
Related results have also been obtained for generalized sampling and stable reconstruction from random measurements by Adcock, Hansen, and collaborators \cite{adcock2012generalized}.

From the perspective of random matrix theory, the conditioning of random sensing matrices has been investigated through the distribution of the smallest singular value and the condition number.
Fundamental results due to Edelman, Rudelson, Vershynin, Tao, Vu and Barnett provide sharp probabilistic estimates for the smallest singular value of random matrices and reveal the mechanism underlying ill-conditioning \cite{edelman1988eigenvalues,rudelson2008littlewood,tao2010random,vershynin2019high,barnett2022exponentially}.
These results offer valuable insight into the spectral behavior of random sensing operators.
Much less is known about the expectation of the reconstruction error itself.
Since the expectation is highly sensitive to rare ill-conditioned events, probabilistic stability does not necessarily imply finite expected reconstruction risk.

Another important sampling model is jittered sampling (or perturbed lattice sampling), in which sampling nodes are obtained by perturbing a uniform grid.
This model has been widely studied in nonuniform sampling theory, frame theory, and irregular Fourier analysis.
Classical results, beginning with Kadec-$1/4$ theorem, show that sufficiently small perturbations preserve the Riesz basis property of exponential systems \cite{kadets1964exact}.
These ideas were subsequently extended by Feichtinger, Gr\"ochenig, and many others to irregular sampling, frame perturbations, and stable reconstruction of bandlimited functions \cite{feichtinger1992irregular,grochenig2001foundations}.
A common conclusion of this line of research is that small perturbations preserve the stability of the sampling operator.
Our analysis reveals a fundamental dichotomy between independent uniform sampling and jittered sampling.
For independent uniform sampling, the expected inverse smallest singular value diverges, leading to an infinite expected reconstruction risk.
In contrast, the structured geometry of jittered sampling suppresses extreme clustering of sampling nodes, resulting in finite expected risk and a spectral transition that is consistent with our numerical observations.
Furthermore, several works have investigated the spectral properties of Fourier sensing matrices, particularly the behavior of their singular values, to provide theoretical explanations for the double descent phenomenon in Fourier series models \cite{farrell2011limiting,belkin2020two,xie2022overparameterization,chen2024conditioning}.

\subsection{Notation and Organization}
We introduce notation which will be used throughout the paper. 
Let $\mathbb{R}$ be the set of all real number, $\mathbb{Z}$ be the set of integers, and $\mathbb{C}$ be the set of all complex number where  $  \mathrm{i} = \sqrt{-1}$ denotes the imaginary unit.
For any given positive integer $n$, we denote $[n]=\{ 1,2,\dots,n\}$. 
Scalars, column vectors and matrices  are denoted by small letters, small bold letters  and capital bold letters, respectively, e.g., $z$, $\pmb{z} $ and $\pZ$.
The $i$-th entry of a vector $\pmb{z}$  is denoted by  $z_i$  and  the $(i, j)$-th element of a matrix $\pmb{Z}$  is denoted by $Z_{i,j}$.
 For a vector $\pmb{z}=(z_1,\ldots,z_p)^\tp \in \mbC^{p}$, let $\Vert \pmb{z}\Vert$ denote its Euclidean norm, i.e., $\Vert \pmb{z}\Vert = \sqrt{\sum_{i}\vert z_{i}\vert^2}$.
For a given matrix $\pZ \in \mbC^{n_1\times n_2}$, we use $\pZ^*$ to denote its conjugate transpose, $\pZ^\dagger$ to denote its Moore-Penrose  inverse and $\tr(\pZ)$ to denote its trace. 
Let $\Vert \pmb{Z}\Vert_F$ denote the Frobenius norm of $\pZ$, i.e., $\Vert \pmb{Z}\Vert_F = \sqrt{\sum_{ij}\vert Z_{i,j}\vert^2}$. 
If  the square matrix $\pZ$ is invertible, then we use $\pZ^{-1}$ to denote its inverse.
For two non-negative real sequences $\{a_t\}_t$ and $\{b_t\}_t$,
we write $b_t = O(a_t)$  if $b_t \leq C a_t$.
We write $f \lesssim g$ if there exists a universal constant $C$ such that $f\leq Cg$.
We write $f \asymp g$ if there exist universal constants $c,C$ such that $cg\leq f\leq Cg$.

The remainder of this paper is structured as follows. 
In Section \ref{sec:2}, we obtain an exact spectral representation of the reconstruction risk for the observation models and the associated linear regression problem. Then we present our main results on random Fourier series models for independent uniform sampling and jittered sampling. 
Section \ref{sec:5} reports numerical experiments conducted to validate the derived risk formula, thereby offering additional empirical evidence for our theoretical findings. 
Finally, Section \ref{sec:conclusions} concludes the paper.
The proofs of our main theorems are provided in Section \ref{sec:3}. 
We give the proof of all corollaries and propositions in Section \ref{sec:4}.

\section{Non-asymptotic Analysis} \label{sec:2}
Since the sampling nodes $\{t_l\}_{i=1}^m$ are random variables, the associated matrix $\pA$ 
is a random matrix. Consequently, all quantities derived from $\pA$, including its rank and singular values, are understood as random variables unless stated otherwise. In particular, $r$ denotes  the rank of the associated structured random matrix $\pA$. 
For a realization of $\pA$, we denote $\sigma_1(\pA)\geq \sigma_2(\pA)\geq \dots \geq \sigma_r(\pA)>0$ are the decreasing nonzero singular values of $\pA$.

We first derive an exact spectral representation of the reconstruction risk, reducing the statistical problem to the analysis of inverse singular values.
\begin{theorem}\label{thm1}
Assume that $\{t_l\}$ are selected from the model (a) or model (b), with $m$ and $N$ fixed. Then, we have
\begin{align}\label{points}
\mbE[\| \hx - \px \|^2] =  \frac{\epsilon^2}{m} \mbE \left[ \sum_{i=1}^r \frac{1}{\sigma_i^2(\pA)} \right] + \mbE \left[N-r \right].
\end{align}
\end{theorem}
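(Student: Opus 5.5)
I will condition on the sampling nodes, so that $\pA$ is fixed, and compute the conditional risk using the independence of $\px$, $\pe$ and the nodes. Taking the outer expectation over the nodes then gives \eqref{points}. I assume, as is implicit in the model, that $\px$, $\pe$ and $\{t_l\}$ are mutually independent and that $\mbE[\pe]=\pmb{0}$.

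\textbf{Decomposition of the error.} Starting from $\hx = \pV\pV^*\px+\pV\pSig^{-1}\pU^*\pe$, I write
\[
\hx-\px = -(\pI_N-\pV\pV^*)\px + \pV\pSig^{-1}\pU^*\pe .
\]
Expanding the squared norm gives three pieces: a bias term, a noise term, and a cross term. Conditional on $\pA$ and $\px$, the cross term has zero mean, because $\mbE[\pe]=\pmb{0}$ and $\pe$ is independent of $\px$ and the nodes.

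\textbf{Bias term.} The matrix $\pP=\pI_N-\pV\pV^*$ is the orthogonal projector onto $\ker\pA$, and $\ker\pA$ has dimension $N-r$. Using $\mbE[\px\px^*]=\pI_N$ from \eqref{xx},
\[
\mbE\big[\|\pP\px\|^2 \,\big|\, \pA\big]=\tr(\pP\,\mbE[\px\px^*]\,\pP^*)=\tr(\pP)=N-r .
\]

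\textbf{Noise term.} The real and imaginary parts of $\pe$ are independent $\mathcal N(0,\frac{\epsilon^2}{2m}\pI_m)$, so $\mbE[\pe\pe^*]=\frac{\epsilon^2}{m}\pI_m$. Then
\[
\mbE\big[\|\pV\pSig^{-1}\pU^*\pe\|^2\,\big|\,\pA\big]=\frac{\epsilon^2}{m}\tr\big(\pV\pSig^{-1}\pU^*\pU\pSig^{-1}\pV^*\big)=\frac{\epsilon^2}{m}\tr(\pSig^{-2})=\frac{\epsilon^2}{m}\sum_{i=1}^r\sigma_i^{-2}(\pA).
\]
This uses $\pU^*\pU=\pI_r$ and $\pV^*\pV=\pI_r$.

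\textbf{Outer expectation.} Summing the two pieces gives the conditional risk, and taking the expectation over the nodes yields \eqref{points}. All terms are nonnegative, so I apply the tower property in the form of Tonelli's theorem. The identity then holds in $[0,\infty]$ even when the spectral term is infinite, which matters for the later divergence result under model (a).

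\textbf{Main obstacle.} The only delicate point is measurability and integrability. The rank $r$ and the nonzero singular values depend on the nodes in a discontinuous way, because the rank can drop on a null set. I would handle this by noting that $\sum_{i=1}^r\sigma_i^{-2}(\pA)=\|\pA^\dagger\|_F^2$ and that $r=\tr(\pA^\dagger\pA)$. The map $\pA\mapsto\pA^\dagger$ is Borel measurable, since it is a limit of continuous maps such as $(\pA^*\pA+\eta\pI_N)^{-1}\pA^*$ as $\eta\to0$. The cross term is integrable conditionally on $\pA$ whenever the conditional risk is finite. On the event where the conditional risk is infinite, the identity holds trivially by nonnegativity.
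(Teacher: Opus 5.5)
Your proof is correct and follows the paper's argument. Both condition on the nodes, expand $\|\hx-\px\|^2$ into noise, bias and cross terms, remove the cross term using independence and $\mbE[\pe]=\pmb{0}$, and use $\mbE[\pe\pe^*]=\frac{\epsilon^2}{m}\pI_m$ and $\mbE[\px\px^*]=\pI_N$ to obtain $\frac{\epsilon^2}{m}\|\pA^\dagger\|_F^2+\|\pI_N-\pV\pV^*\|_F^2=\frac{\epsilon^2}{m}\sum_{i=1}^r\sigma_i^{-2}(\pA)+N-r$ before taking the outer expectation. Your additional points are left implicit in the paper: Tonelli's theorem so the identity holds in $[0,\infty]$, the explicit independence assumptions, and measurability of $\pA\mapsto\pA^\dagger$. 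They are exactly what justifies using this identity to conclude divergence of the risk in the later theorems.
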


In fact, we can obtain a deterministic and uniform lower bound for $\sum_{i=1}^{r} \frac{1}{\sigma_i^2(\pA)}$.
Since each entry of $\pA$ satisfies that $|A_{l,k}| = 1$, we know that 
\begin{align}
\| \pA\|_F^2 = \sum_{l,k} |A_{l,k}|^2 = mN.
\end{align}
Then, we have
\begin{align}
\sum_{i=1}^r \frac{1}{\sigma_i^2(\pA)} \geq \frac{r^2}{\sum_{i=1}^r\sigma_i^2(\pA)} = \frac{r^2}{\|\pA\|_F^2}= \frac{r^2}{mN}.
\end{align}

Rigorously speaking, the rank of $\pA$ satisfies $r\leq \min\{m,N\}$. 
If $t_1,\dots,t_m \in [0,1)$ are pairwise distinct, the numbers $\{e^{2\pi \mri t_l}\}_{l=1}^m$ are also pairwise distinct, and hence $\pA$ is a Vandermonde-type matrix whose maximal minors are nonvanishing, implying$r= \min \{ m, N\}$. 
Thus, when $t_1 < \dots < t_m$, 
\begin{align}
\mbE[\| \hx - \px \|^2] =
\left\{
\begin{array}{lr}
\frac{\epsilon^2}{m} \mbE \left[ \sum_{i=1}^m \frac{1}{\sigma_i^2(\pA)} \right] + N-m, &m < N,\\
\frac{\epsilon^2}{m} \mbE \left[ \sum_{i=1}^N \frac{1}{\sigma_i^2(\pA)} \right], &m \geq N.
\end{array}
\right.
\end{align}

\begin{proposition}\label{pro4}
Assume that $t_1, \dots, t_m$ are independent random variables with a distribution absolutely continuous with respect to the Lebesgue measure on $[0,1)$.
If $m \geq N$, $\pA$ has full column rank almost surely, while if $m< N$, $\pA$ has full row rank almost surely, 
and hence
\[
\sum_{i=1}^{r} \frac{1}{\sigma_i}< \infty \quad \text{almost surely}.
\]
\end{proposition}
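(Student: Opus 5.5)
The plan is to reduce the statement to the event that the nodes $t_1,\dots,t_m$ are pairwise distinct, and then use the Vandermonde structure of $\pA$ already noted above.

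\emph{Step 1: the nodes are almost surely distinct.} For $l\neq l'$, the variables $t_l$ and $t_{l'}$ are independent and each has an absolutely continuous law $\mu_l$. By Fubini,
\[
\mathbb{P}(t_l=t_{l'})=\int \mu_{l'}(\{s\})\,\mrd\mu_l(s)=0,
\]
since an absolutely continuous measure has no atoms. Taking a finite union over the pairs $l<l'$ shows that the event $\mathcal{D}=\{t_l \text{ pairwise distinct}\}$ has probability one. On $[0,1)$ the map $t\mapsto e^{2\pi\mri t}$ is injective, so on $\mathcal{D}$ the points $z_l=e^{2\pi\mri t_l}$ are pairwise distinct.

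\emph{Step 2: full rank on $\mathcal{D}$.} Factor $\pA=\pmb{D}\pmb{W}$, where $\pmb{D}=\mathrm{diag}(z_l^{-q})$ is unitary and $W_{l,j}=z_l^{j}$ for $j=0,\dots,N-1$. Unitary scaling does not change rank, so it suffices to consider $\pmb{W}$.

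If $m\ge N$, take the $N\times N$ submatrix of $\pmb{W}$ formed by any $N$ rows. It is a square Vandermonde matrix on distinct nodes, so its determinant $\prod_{a<b}(z_b-z_a)$ is nonzero. Hence $\rank\pA=N$, i.e.\ $\pA$ has full column rank.

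If $m<N$, take the first $m$ columns of $\pmb{W}$. They form the $m\times m$ Vandermonde matrix $(z_l^{j})_{j<m}$, which is again invertible. Hence $\rank\pA=m$, i.e.\ $\pA$ has full row rank.

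In both cases $r=\min\{m,N\}$ and $\sigma_r(\pA)>0$.

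\emph{Step 3: finiteness of the sum.} On $\mathcal{D}$ the sum $\sum_{i=1}^r 1/\sigma_i$ is a finite sum of $r\le\min\{m,N\}$ finite positive terms. It is bounded by $r/\sigma_r(\pA)<\infty$. The same bound holds for $\sum 1/\sigma_i^2$. Since $\mathbb{P}(\mathcal{D})=1$, the conclusion holds almost surely.

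\emph{Main obstacle.} The only step needing care is Step 1: the nodes must be independent and without atoms, and $[0,1)$ must be used rather than $[0,1]$ so that $0$ and $1$ are not identified by $t\mapsto e^{2\pi\mri t}$. Under model (a) the endpoint $t=1$ has probability zero, so this is harmless.

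The statement is almost-sure finiteness only. It says nothing about the expectation, which is the delicate issue addressed later in the paper.
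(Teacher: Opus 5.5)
Your proof is correct, but it takes a slightly different route from the paper's. The paper fixes $N$ rows and regards the corresponding $N\times N$ minor as a trigonometric polynomial in $(t_1,\dots,t_m)$. It notes the minor is not identically zero, because of the Vandermonde determinant at distinct nodes. It then invokes the fact that the zero set of a nontrivial trigonometric (real-analytic) polynomial is Lebesgue-null, so absolute continuity of the law of $(t_1,\dots,t_m)$, which follows from independence, makes each minor nonzero almost surely. The paper writes this out only for $m\ge N$ and calls $m<N$ similar.

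You instead show directly that the nodes are almost surely pairwise distinct, using Fubini and the absence of atoms. You then use the deterministic fact that the relevant square Vandermonde minors of $\pmb{W}$ are nonzero on that event, and you treat both $m\ge N$ and $m<N$ explicitly.

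Your route is more elementary and needs weaker hypotheses. It uses only independence and atomless marginals, not joint absolute continuity or the null-zero-set property of analytic functions. In fact, the paper's own reason for the minor being nontrivial already shows it is nonzero at \emph{every} distinct configuration, so the detour through zero sets is unnecessary for this $\Gamma$.

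What the paper's argument buys is robustness. It only needs the minor to be nonzero at a single configuration, so it extends to settings where distinct nodes do not force full rank. One example is non-consecutive frequency sets: for $\Gamma=\{0,2\}$, the distinct nodes $t=0,\tfrac12$ give a singular $2\times 2$ matrix. Another is the multivariate trigonometric polynomials mentioned in the introduction. Your argument depends on the consecutive-frequency Vandermonde structure and would not carry over to those cases.
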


Particularly, when $m \gtrsim N\log N$, we can get better concentration based on Matrix Bernstein Inequality \cite{tropp2012user}.
\begin{lemma}\label{Thm:Bernstein} 
Consider a finite sequence $\{\pX_k\}$ of independent, random, self-adjoint matrices with dimension $d$. Assume that
$\mbE \pX_k = 0$ and $\lambda_{\max}(\pX_k) \leq R$ almost surely.
Compute the norm of the total variance,
\[
\sigma^2:= \left\|\sum_k\mbE \pX^2_k\right\|.
\]
Then, the following chain of inequalities holds for all $t \geq 0$:
\[
\mathbb{P}\left[\lambda_{\max}\left(\sum_k \pX_k\right)\geq t\right] \leq d \cdot \exp\left(\frac{-t^2/2}{\sigma^2+Rt/3}\right).
\]
\end{lemma}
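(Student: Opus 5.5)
The statement is the matrix Bernstein inequality of \cite{tropp2012user}, which we quote rather than reprove. The plan is to follow Tropp's matrix Laplace-transform argument. Write $\pmb{Y}=\sum_k \pX_k$ and fix $\theta>0$. Because $\lambda_{\max}$ is monotone under the increasing map $x\mapsto e^{\theta x}$, and $e^{\theta\lambda_{\max}(\pmb{Y})}=\lambda_{\max}(e^{\theta\pmb{Y}})\le \tr e^{\theta \pmb{Y}}$ (all eigenvalues of $e^{\theta\pmb{Y}}$ are positive), Markov's inequality gives
\[
\mathbb{P}\left[\lambda_{\max}(\pmb{Y})\ge t\right]\le e^{-\theta t}\,\mbE\tr \exp(\theta\pmb{Y}).
\]

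The second step is to decouple the sum inside the trace exponential. The key tool is Lieb's concavity theorem: for fixed self-adjoint $\pmb{H}$, the map $\pmb{M}\mapsto\tr\exp(\pmb{H}+\log\pmb{M})$ is concave on positive definite matrices. Applying Jensen's inequality one summand at a time, conditioning on the others and using independence, yields
\[
\mbE\tr\exp\Big(\sum_k\theta\pX_k\Big)\le \tr\exp\Big(\sum_k\log \mbE e^{\theta\pX_k}\Big).
\]
This is the main obstacle. It is the only genuinely noncommutative ingredient, since $e^{\pmb{A}+\pmb{B}}\neq e^{\pmb{A}}e^{\pmb{B}}$ in general. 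I would invoke Lieb's theorem as a black box rather than reprove it.

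The third step bounds each matrix moment generating function. The scalar function $f(x)=(e^{\theta x}-\theta x-1)/x^2$ is increasing, so the transfer rule applied to $\lambda_{\max}(\pX_k)\le R$ gives
\[
e^{\theta\pX_k}\preceq \pI+\theta\pX_k+f(R)\pX_k^2 .
\]
Taking expectations and using $\mbE\pX_k=0$ gives $\mbE e^{\theta\pX_k}\preceq \pI+g(\theta)\mbE\pX_k^2$, where $g(\theta)=(e^{\theta R}-\theta R-1)/R^2$. The logarithm is operator monotone and $\log(1+a)\le a$, so $\log\mbE e^{\theta\pX_k}\preceq g(\theta)\mbE\pX_k^2$. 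Summing over $k$ and using monotonicity of $\tr\exp$ gives
\[
\tr\exp\Big(g(\theta)\sum_k\mbE\pX_k^2\Big)\le d\,e^{g(\theta)\sigma^2}.
\]

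Finally, expanding $e^{\theta R}$ in a power series and using $k!\ge 2\cdot 3^{k-2}$ gives $g(\theta)\le \frac{\theta^2/2}{1-R\theta/3}$ for $0<\theta<3/R$. Hence
\[
\mathbb{P}\left[\lambda_{\max}(\pmb{Y})\ge t\right]\le d\exp\Big(-\theta t+\frac{\theta^2\sigma^2/2}{1-R\theta/3}\Big).
\]
Choose $\theta=t/(\sigma^2+Rt/3)$, which lies in $(0,3/R)$. Then $1-R\theta/3=\sigma^2/(\sigma^2+Rt/3)$, and the exponent simplifies to exactly $-\frac{t^2/2}{\sigma^2+Rt/3}$, which is the claimed bound.
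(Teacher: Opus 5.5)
Your proposal is correct and matches the paper's treatment: the paper gives no proof of its own and simply quotes this as the matrix Bernstein inequality of \cite{tropp2012user}, and your sketch is exactly Tropp's argument (the matrix Laplace transform bound, Lieb's concavity theorem for subadditivity of the matrix cumulant generating functions, the transfer-rule bound $\mbE e^{\theta\pX_k}\preceq \pI+g(\theta)\mbE\pX_k^2$, and the choice $\theta=t/(\sigma^2+Rt/3)$). The algebra checks out; the only things needing a separate one-line remark are the degenerate cases $t=0$ (the bound is then trivial) and $\sigma^2=0$ (then every $\pX_k=0$ almost surely), where your $\theta$ does not lie in $(0,3/R)$.
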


\begin{proposition} \label{pro3}
If $m \gtrsim N\log N$, then there exist constants $c, C >0$ that the following holds
\[
\mathbb{P}\left[ \|\pA^*\pA-\pI\|\leq C\sqrt{\frac{N}{m}} \right] \geq 1-2e^{-cN}.
\]
Moreover, we have
\[
0<1- C\sqrt{\frac{N}{m}} \leq \sigma_N^2(\pA) \leq \sigma_1^2(\pA) \leq 1+C\sqrt{\frac{N}{m}}
\]
with probability at least $1- 2e^{-cN}$.
\end{proposition}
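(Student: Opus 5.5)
The plan is to work with the normalized matrix $\frac{1}{\sqrt m}\pA$. In the statement, $\pA^*\pA-\pI$ and $\sigma_i^2(\pA)\approx 1$ only make sense under this normalization, since $\|\pA\|_F^2=mN$. The steps are: write the Gram matrix as a sum of independent centred rank-one terms, apply the Matrix Bernstein Inequality (Lemma~\ref{Thm:Bernstein}) in both directions, and convert the operator-norm bound into singular value bounds by Weyl's inequality.

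\textbf{Setup (model (a)).} Let $\pa_l=(e^{-2\pi\mri k t_l})_{k\in\Gamma}\in\mbC^N$, so that $\pA^*\pA=\sum_{l=1}^m \pa_l\pa_l^*$. Because $t_l\sim\Unif[0,1)$ and $\int_0^1 e^{2\pi\mri(k-k')t}\,\mrd t=\delta_{kk'}$, we have $\mbE\,\pa_l\pa_l^*=\pI_N$. Define the independent self-adjoint matrices
\[
\pX_l=\tfrac1m\bigl(\pa_l\pa_l^*-\pI_N\bigr),\qquad \tfrac1m\pA^*\pA-\pI_N=\sum_{l=1}^m\pX_l .
\]
Since $\|\pa_l\|^2=N$, we get $\lambda_{\max}(\pX_l)\le (N-1)/m=:R$. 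Using $(\pa\pa^*)^2=N\pa\pa^*$, we get $\mbE\pX_l^2=\frac{N-1}{m^2}\pI_N$, so $\sigma^2=\frac{N-1}{m}$. The same bounds hold for $-\pX_l$, because $\lambda_{\max}(-\pX_l)\le 1/m$. Applying Lemma~\ref{Thm:Bernstein} to $\pX_l$ and to $-\pX_l$ and adding the two tails gives
\[
\mathbb P\Bigl[\bigl\|\tfrac1m\pA^*\pA-\pI\bigr\|\ge t\Bigr]\le 2N\exp\Bigl(\frac{-t^2/2}{N/m+Nt/(3m)}\Bigr).
\]

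\textbf{Choice of $t$.} The step that decides whether the stated exponent $cN$ can be reached is the choice of $t$. With $t=C\sqrt{N/m}$ and $m\gtrsim N\log N$ (so that $t\le1$), the exponent is only of order $C^2$. The hypothesis $m\gtrsim N\log N$ is what lets the $\log(2N)$ prefactor be absorbed, but it does not produce a factor $N$ in the exponent.

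To reach $e^{-cN}$, I would first try to let the constants interact with the oversampling hypothesis. Concretely, I would read $m\gtrsim N\log N$ as $m\ge K N\log N$ and take $t=C\sqrt{N/m}$ with $C^2\asymp N$. This choice is admissible only when $t<1$, i.e.\ $m\gtrsim N^2$. I would check whether the intended reading of the proposition allows this.

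If not, the second attempt is Talagrand's concentration inequality for $\sup_{\|\px\|=1}\bigl|\sum_l(|\langle\pa_l,\px\rangle|^2-1)\bigr|$, combined with Rudelson's bound $\mbE\|\frac1m\pA^*\pA-\pI\|\lesssim\sqrt{N\log N/m}$. Here I expect the main obstacle: the sub-exponential scale is $R\asymp N/m$, because the rows have norm $\sqrt N$. As a result, a deviation of size $\sqrt{N/m}$ with failure probability $e^{-cN}$ appears to require $C$ growing with $N$. I would therefore verify this step carefully and, if necessary, record the precise dependence of $C$ on $N$ needed to obtain the stated probability $1-2e^{-cN}$.

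\textbf{Singular values.} On the event $\|\frac1m\pA^*\pA-\pI\|\le C\sqrt{N/m}$, Weyl's inequality gives
\[
\bigl|\lambda_i(\tfrac1m\pA^*\pA)-1\bigr|\le C\sqrt{N/m}\quad\text{for all } i.
\]
Since $\lambda_i(\frac1m\pA^*\pA)$ are exactly the squared singular values of the normalized matrix, this is the two-sided bound of the proposition. Finally, $m\gtrsim N\log N$ with a large implied constant ensures $C\sqrt{N/m}<1$, which gives the strict positivity $1-C\sqrt{N/m}>0$.
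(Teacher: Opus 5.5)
Your route is the paper's route. Its argument, labelled a proof sketch, applies the Matrix Bernstein Inequality to $\pv(t_l)\pv(t_l)^*-\pI_N$, records $R=N+1$ and $\sigma^2\le 2mN$ for the unnormalized sum, and stops without ever choosing $t$. Your normalization by $1/m$ is the right repair of the statement, since the diagonal of $\pA^*\pA$ equals $m$. Your values $R=(N-1)/m$ and $\mbE\pX_l^2=\frac{N-1}{m^2}\pI_N$ are exact; they also avoid the paper's intermediate claim $\pX_l^2\preceq 2\pv\pv^*$, which fails on $\pv^\perp$.

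The genuine gap is the one you flag yourself: nothing in the proposal produces the probability $1-2e^{-cN}$ at the level $C\sqrt{N/m}$. Your Talagrand attempt cannot produce it either, because with $c,C$ independent of $N$ the claim is false. Let $\px$ have all entries equal to $N^{-1/2}$. Then $\px^*(\tfrac1m\pA^*\pA-\pI)\px=\tfrac1m\sum_l Y_l$, where $Y_l=F_N(t_l)-1$ and $F_N(u)=\sum_{|j|<N}(1-|j|/N)e^{2\pi\mri ju}$ is the Fej\'er kernel. The $Y_l$ are i.i.d. and centred, with variance $v=(N-1)(2N-1)/(3N)\approx 2N/3$ and $\mbE|Y_l|^3\lesssim\|F_N\|_\infty\,\mbE F_N(t_l)^2\lesssim N^2$. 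So the Berry--Esseen error is $\lesssim N^2/(v^{3/2}\sqrt m)\lesssim\sqrt{N/m}$, which tends to $0$ when $m\gtrsim N\log N$. Hence
\[
\mathbb{P}\Bigl[\bigl\|\tfrac1m\pA^*\pA-\pI\bigr\|>C\sqrt{N/m}\Bigr]\ge 1-\Phi\bigl(\sqrt{3/2}\,C\bigr)-o(1)\qquad(N\to\infty),
\]
with $\Phi$ the standard normal distribution function. The right-hand side is a positive constant independent of $N$, so it exceeds $2e^{-cN}$ for large $N$. Your observation that the Bernstein exponent is of order $C^2$ rather than $N$ therefore reflects the truth, not a weakness of the tool. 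The paper's own parameters give the same exponent, about $C^2/4$.

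Two consequences for how you finish. First, a correction. At the level $t=C\sqrt{N/m}$ with absolute $C$, your bound $2Ne^{-cC^2}$ exceeds $1$ for large $N$, so $m\gtrsim N\log N$ does not absorb the $\log(2N)$ prefactor there. What the hypothesis buys is that the level $t\asymp\sqrt{N\log N/m}$ is at most $1$. That level is presumably sharp: the quadratic forms at $\px_\theta=N^{-1/2}(e^{2\pi\mri k\theta})_{k\in\Gamma}$ form a Fej\'er kernel density estimate with bandwidth $1/N$.

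Second, state the version your computation actually proves. For $0<t\le1$,
\[
\mathbb{P}\Bigl[\bigl\|\tfrac1m\pA^*\pA-\pI\bigr\|\ge t\Bigr]\le 2N\exp\Bigl(-\frac{3mt^2}{8N}\Bigr).
\]
Taking $t=C\sqrt{N\log N/m}$ with $m\ge C^2N\log N$ gives probability at least $1-2N^{1-3C^2/8}$. From this bound, a $1-2e^{-cN}$ guarantee requires $t\asymp N/\sqrt m$. That is your first salvage, with $C\asymp\sqrt N$ and $m\gtrsim N^2$. On the resulting event, your Weyl step yields $1-t\le\sigma_N^2(\pA/\sqrt m)\le\sigma_1^2(\pA/\sqrt m)\le1+t$. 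This is all the subsequent bound on $\sum_i\sigma_i^{-2}$ needs.
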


Based on Proposition \ref{pro3}, we can get $r=N$ when $m \gtrsim N\log N$, and 
\begin{align}
 \frac{N}{1+C\sqrt{\frac{N}{m}}} \leq \frac{N}{\sigma_1^2(\pA)} \leq \sum_{i=1}^N \frac{1}{\sigma_i^2(\pA)} \leq \frac{N}{\sigma_N^2(\pA)} \leq \frac{N}{1-C\sqrt{\frac{N}{m}}}
\end{align}
with probability at least $1- 2e^{-cN}$.

The following proposition shows that all singular values of $\pA$ are bounded away from both zero and infinity when the sampling nodes do not contain large separations. We first give the definition of torus distance \cite{candes2014towards}:
\begin{align}
\dist_{T}(t_1,t_2) = \min_{n\in \mathbb{Z}}|t_1-t_2+n|, \quad \text{for } t_1, t_2 \in [0,1). 
\end{align}

\begin{proposition}[Theorem 5 in \cite{grochenig1993discrete}]\label{pro1}
Suppose that $0 \leq t_1 < \dots < t_m <1$ and define the maximal separation $\Delta_{\max}$ as \footnote{Here, we set $t_{m+1} = t_1$.}
\begin{align}
\Delta_{\max}: = \max_{i\in [m]} \dist_T(t_{i+1},t_i) < d,
\end{align}
$\sigma_{1}(\pA) \leq 1+\frac{\Delta_{\max}}{d}<2$ and $\sigma_{r}(\pA) \geq 1-\frac{\Delta_{\max}}{d}>0$.
\end{proposition}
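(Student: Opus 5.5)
The plan is to follow Gröchenig's argument and compare the discrete norm $\|\pA\px\|$ with the continuous norm $\|f\|_{L^2[0,1)} = \|\px\|$ of the trigonometric polynomial $f$ in \eqref{x}. This comparison only works if $\pA$ is read in the normalization of \cite{grochenig1993discrete}. There the rows are scaled by $\sqrt{w_l}$, with Voronoi weights $w_l = \tfrac12\bigl(\dist_T(t_{l+1},t_l)+\dist_T(t_l,t_{l-1})\bigr)$, and the threshold is $d = 1/(2q)$. These are the conventions I adopt. For the equispaced grid, $w_l = 1/m$, so this is the $1/\sqrt{m}$-scaling already implicit in Proposition~\ref{pro3}. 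The assumption $\Delta_{\max}<d$ also forces $m > 2q$, i.e. $m\ge N$, so $r=N$ and both inequalities concern $\sigma_1$ and $\sigma_N$ of a full-column-rank matrix.

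The key steps are as follows.

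\textbf{Step 1 (quasi-interpolant).} Let $I_l$ be the Voronoi cell of $t_l$ on the torus. It runs from the midpoint of $t_{l-1},t_l$ to the midpoint of $t_l,t_{l+1}$, so $|I_l| = w_l$. Define the step function $Qf = \sum_l f(t_l)\chi_{I_l}$. Then
\[
\|Qf\|_{L^2}^2 = \sum_l w_l |f(t_l)|^2 = \|\pA\px\|^2 .
\]

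\textbf{Step 2 (local error via Wirtinger).} Split each $I_l$ at $t_l$ into two subintervals $J$ of length at most $\Delta_{\max}/2$. On each such $J$, the function $g = f - f(t_l)$ vanishes at the endpoint $t_l$. The one-sided Wirtinger inequality $\int_0^a |g|^2 \le (2a/\pi)^2\int_0^a |g'|^2$ then gives $\int_J |f-f(t_l)|^2 \le (\Delta_{\max}/\pi)^2\int_J |f'|^2$. Summing over all subintervals yields
\[
\|f - Qf\|_{L^2} \le \frac{\Delta_{\max}}{\pi}\,\|f'\|_{L^2}.
\]

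\textbf{Step 3 (Bernstein).} By Parseval, $\|f'\|_{L^2} \le 2\pi q\|f\|_{L^2}$. Combining with Step 2,
\[
\|f - Qf\| \le 2q\Delta_{\max}\|\px\| = \frac{\Delta_{\max}}{d}\,\|\px\| .
\]

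\textbf{Step 4 (conclusion).} The triangle inequality gives, for every $\px$,
\[
\Bigl(1-\tfrac{\Delta_{\max}}{d}\Bigr)\|\px\| \le \|\pA\px\| \le \Bigl(1+\tfrac{\Delta_{\max}}{d}\Bigr)\|\px\|.
\]
Taking the supremum and infimum over unit vectors gives the bounds on $\sigma_1(\pA)$ and $\sigma_r(\pA)$. Since $\Delta_{\max}<d$, the lower bound is strictly positive and the upper bound is strictly less than $2$.

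The main obstacle is the normalization rather than the analysis. For the unweighted matrix \eqref{eq:alk} the stated bounds are false as written, since then $\|\pA\|_F^2 = mN$. So the argument must either be carried out for the weighted matrix, or the bounds must be rescaled, e.g. by $\sqrt{m}$ after using $\min_l w_l$ and $\max_l w_l$ to pass between the two matrices. The second delicate point is getting the sharp Wirtinger constant on half-cells: using the midpoint split is what produces exactly the factor $\Delta_{\max}/\pi$, and hence the threshold $d = 1/(2q)$.
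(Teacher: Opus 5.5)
Your argument is correct and is exactly Gr\"ochenig's proof: Voronoi quasi-interpolant $Qf$, one-sided Wirtinger inequality on the half-cells, then Bernstein's inequality. The paper gives no proof of its own and only cites \cite{grochenig1993discrete}, so this is the argument it relies on. Your diagnosis of the normalization is also right. The bounds can only hold for the matrix with rows scaled by $\sqrt{w_l}$ and with $d=1/(2q)$. The paper's own Theorem~\ref{Equidistant} already refutes the unweighted reading: for equispaced nodes with $m\ge N$ every singular value of $\pA$ equals $\sqrt m$, which is incompatible with $\sigma_1(\pA)<2$ once $m\ge 4$.

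There is a second defect in the hypothesis that you repair only silently, and you should make it explicit. Three of your steps need $\Delta_{\max}$ to be the maximal \emph{arc} gap $\max\{t_2-t_1,\dots,t_m-t_{m-1},\,1+t_1-t_m\}$, which is Gr\"ochenig's hypothesis:
the identity $|I_l|=w_l$,
the bound $\Delta_{\max}/2$ on the half-cell lengths in Step~2,
and the claim that $\Delta_{\max}<d$ forces $m>2q$.

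The paper instead maximizes $\dist_T(t_{i+1},t_i)$, so a gap $g>1/2$ is counted only as $1-g$. Under that literal reading the proposition is false. Take $q=1$ and nodes $0,\varepsilon,2\varepsilon$. Then $\Delta_{\max}=2\varepsilon<\tfrac12=d$, so the claimed bound is $\sigma_3\ge 1-4\varepsilon$. But two rows of the $3\times 3$ matrix $\pA$ differ by $O(\varepsilon)$, so $\sigma_3(\pA)=O(\varepsilon)$, as in Proposition~\ref{pro:sigmin}. Scaling rows by weights $w_l\le 1$ cannot increase $\sigma_3$, so the contradiction persists for the weighted matrix too.

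With the arc-gap reading, $\Delta_{\max}<d\le\tfrac12$ makes every arc gap equal to its torus distance. Your formula for $w_l$ is then consistent, and the rest of your proof goes through unchanged.
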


Proposition \ref{pro1} shows that the maximal separation condition provides explicit upper and lower bounds for the singular values of the matrix $\pA$. However, this analysis relies on the assumption that the sampling nodes $t_1,\dots,t_m$ are sufficiently well distributed on $[0,1)$, namely that the maximal separation satisfies $\Delta_{\max}<d$.

For randomly distributed sampling nodes, clustering effects may occur with positive probability, meaning that some nodes $t_l$ can become arbitrarily close to each other. In such situations, the assumptions of Proposition \ref{pro1} are no longer sufficient to characterize the conditioning of $\pA$. More importantly, the behavior of the smallest singular value is governed by the minimal separation. Hence, upper bounds for $\sigma_{r}(\pA)$ become more relevant for understanding the possible ill-conditioning of the matrix. The following proposition illustrates that near-collision implies small singular value.
\begin{proposition}
\label{pro:sigmin}
For pairwise distinct nodes $\{t_l\}_{l=1}^{m} \subset [0,1)^m$, we define the minimal separation $\Delta_{\min}$ as
\begin{align}
\Delta_{\min}: = \min_{i\ne j}  \dist_T(t_{i},t_j).
\end{align}
There exists a constant $C>0$ such that
\begin{align}
\sigma_{r}(\pA) \leq CN^{3/2}\Delta_{\min}.
\end{align}
Especially, for any nodes $\{t_l\}_{l=1}^{m} \subset [0,1)^m$, the minimal separation $\Delta_{\min}$ may be zero.
There exists a constant $C>0$ such that
\begin{align}
\sigma_{r}(\pA) \leq CN^{3/2}\Delta^+_{\min},
\end{align}
where $\Delta^+_{\min}$ denotes the smallest nonzero separation with 
\begin{align}
\Delta^+_{\min}: = \min_{i\ne j, t_i\ne t_j}  \dist_T(t_{i},t_j).
\end{align}
\end{proposition}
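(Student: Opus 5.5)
The plan is to bound $\sigma_r(\pA)$ through a Rayleigh-quotient argument built from two nearly colliding rows. Let $i\neq j$ attain $\Delta_{\min}$, or $\Delta^+_{\min}$ in the second statement, and write $\pa_i^*,\pa_j^*$ for the corresponding rows of $\pA$. The key observation is that the difference of these rows is small:
\[
\|\pa_i-\pa_j\|^2=\sum_{k\in\Gamma}\left|e^{2\pi\mri k t_i}-e^{2\pi\mri k t_j}\right|^2\le \sum_{k\in\Gamma}(2\pi |k|\,\dist_T(t_i,t_j))^2\lesssim N^3\Delta_{\min}^2 .
\]
This uses $|e^{\mri\theta}-1|\le|\theta|$ together with the fact that $e^{2\pi\mri k t}$ is $1$-periodic, so that the torus distance can be used. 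Hence $\|\pa_i-\pa_j\|\le CN^{3/2}\Delta_{\min}$.

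The next step turns this into a bound on $\sigma_r$, and I would split into two cases.

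\emph{Case $m\le N$.} Here $r=m$, since for distinct nodes $\pA$ has full row rank, as noted after Theorem \ref{thm1}. Thus $\sigma_r(\pA)=\sigma_{\min}(\pA^*)=\min_{\|\pz\|=1}\|\pA^*\pz\|$. Take $\pz=(\pe_i-\pe_j)/\sqrt2$. Then $\|\pA^*\pz\|=\|\pa_i-\pa_j\|/\sqrt2$, which gives the bound directly.

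\emph{Case $m>N$.} Here $r=N$, and this test vector no longer works, because $\sigma_N(\pA)$ is the minimum of $\|\pA\pv\|$ over $\pv\in\mbC^N$, not over combinations of rows. This is the main obstacle: a single collision of two nodes among many well-spread nodes does not make $\pA$ nearly rank-deficient. For example, equispaced nodes with one doubled point still have $\sigma_N\gtrsim\sqrt{m}$ in the unnormalized scaling. The statement as written therefore seems to hold only for $m\le N$, or when the constant $C$ is allowed to depend on $m$.

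For $m>N$ I would first try to read the result in the normalized convention implicit in Proposition \ref{pro3}, with $\pA$ scaled by $1/\sqrt m$ so that $\sigma_i\approx1$. Even then collisions alone do not force smallness. So I would instead use interlacing: deleting rows of $\pA$ can only decrease singular values, hence $\sigma_N(\pA)\ge\sigma_{\min}$ of any $N\times N$ submatrix. This inequality goes the wrong way for an upper bound. The honest conclusion is therefore to prove the claim in the regime $m\le N$ by the test-vector computation above. For $m>N$ I would either restrict to that regime or show that the bound requires $N$ close nodes: if $N+1$ nodes lie in an interval of length $\delta$, a divided-difference (Vandermonde) test vector $\pz$ supported on those rows makes $\|\pA^*\pz\|$ polynomially small in $\delta$.

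For the second statement, coincident nodes produce identical rows, and the rank $r$ drops accordingly. I would remove duplicate rows, which changes $\pA^*\pA$ only by positive row weights, and then apply the first argument to the reduced matrix of distinct nodes. Its minimal separation is exactly $\Delta^+_{\min}$, so the same test-vector argument applies with $\Delta^+_{\min}$ in place of $\Delta_{\min}$.
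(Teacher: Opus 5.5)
You handle the first claim correctly for pairwise distinct nodes with $m\le N$, using the same test vector $\pe_i-\pe_j$ as the paper. Your diagnosis for $m>N$ is also correct, and it points to a real flaw in the paper's own proof. That proof writes $\pA\pz$ with $\pz\in\mbC^m$ and inserts it into $\min_{\px\in\mbC^N}\|\pA\px\|$, which is legitimate only when $r=m$.

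Two corrections to that part:
\begin{itemize}
\item Letting $C$ depend on $m$ does not rescue $m>N$. Take an equispaced grid of $M\ge N$ nodes and add one node at distance $\delta$ from a grid point. Then $\pA^*\pA\succeq M\pI_N$, so $\sigma_N(\pA)\ge\sqrt M$, while $\Delta_{\min}=\delta\to 0$. Hence no $C=C(m,N)$ works. Your doubled-point example, by contrast, is excluded by the distinctness hypothesis of the first claim.
\item The divided-difference idea is vacuous. Any $N+1$ rows of an $N$-column matrix are linearly dependent, so $\pA^*\pz=0$ is attainable with no clustering at all. When $\ker\pA^*\neq\{0\}$, a left test vector only gives $\sigma_N(\pA)\le\|\pA^*\pz\|/\|\Pi\pz\|$, where $\Pi$ is the orthogonal projection onto $\operatorname{range}(\pA)$. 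Moreover, the grid example shows that no cluster can make $\sigma_N$ small once the other nodes contain such a grid.
\end{itemize}

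The genuine gap is the second claim. Deleting duplicate rows does not leave $\sigma_r$ unchanged. Let $\pmb{B}$ be the matrix of distinct rows and $\pmb{W}$ the diagonal matrix of multiplicities. Then $\pA^*\pA=\pmb{B}^*\pmb{W}\pmb{B}\succeq\pmb{B}^*\pmb{B}$, so $\sigma_r(\pA)\ge\sigma_r(\pmb{B})$, and your bound for $\pmb{B}$ is on the wrong side. What does transfer is $\sigma_r(\pA)\le\sqrt{w_{\max}}\,\sigma_r(\pmb{B})$, and this factor is sharp.

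For example, take even $m\le N$, with $m/2$ nodes at $0$ and $m/2$ nodes at $\delta\in(0,1/(2q)]$. Then $r=2$, and the nonzero eigenvalues of $\pA\pA^*$ are $\frac m2(N\pm g)$ with $g=\sum_{k\in\Gamma}\cos(2\pi k\delta)$. Using $\sin x\ge 2x/\pi$ on $[0,\pi/2]$,
\[
\sigma_r^2(\pA)=m\sum_{k\in\Gamma}\sin^2(\pi k\delta)\ \ge\ 4m\delta^2\sum_{k\in\Gamma}k^2=\frac{4m\,q(q+1)(2q+1)}{3}\,(\Delta^+_{\min})^2 .
\]
So $\sigma_r(\pA)\gtrsim\sqrt m\,N^{3/2}\Delta^+_{\min}$. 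Taking $m=N-1$, no universal $C$ works, even though $m\le N$.

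What your route actually proves is $\sigma_r(\pA)\le C\sqrt{w_{\max}}\,N^{3/2}\Delta^+_{\min}$, hence $\le CN^{2}\Delta^+_{\min}$ when $m\le N$. This holds whenever the number of distinct nodes is at most $N$. The paper's ``removing coincident nodes'' step has the same omission, so on this point your proposal repeats the paper's error rather than fixing it.
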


The dependence of the smallest singular value on the minimum node separation has been extensively studied in the context of super-resolution and the conditioning of Vandermonde matrices; see, e.g., Moitra \cite{moitra2015super} and Kunis and Nagel \cite{kunis2020smallest}. Our contribution is to connect this spectral instability with statistical recoverability. Specifically, the above estimate shows that geometric degeneracy of the sampling nodes directly translates into instability of linear reconstruction through the inverse singular value spectrum.

\subsection{Sample model (a)}
In this subsection,  we consider the case where $t_1,\dots , t_m$ are i.i.d. random variables, each of which is uniformly distributed over $[0, 1)$. 
We prove that uniformly random sampling exhibits an intrinsic instability: the expectation of the inverse smallest singular value diverges, implying an infinite expected reconstruction risk. It reveals a sharp distinction between random and structured sampling.

We need to give the estimation of $\mbE \left[ \sum_{i=1}^r \frac{1}{\sigma_i^2(\pA)} \right]$. The key is to analyze whether $\mbE \frac{1}{\sigma_r^2(\pA)}$ converges.
The fact that random variables are finite almost everywhere does not mean that their expected value is finite. For example, we set $X(\omega) = \frac{1}{\omega}$, $\omega \sim \text{Unif}(0,1)$, then $X < \infty$ a.s., but $\mbE X = +\infty$.
The key observation is that almost sure finiteness does not imply integrability; heavy tails may arise from events where sampling locations are nearly colliding.
Based on Proposition \ref{pro:sigmin}, we can get 

\begin{theorem}\label{Thm:Divergence}
The $t_1,\dots , t_m$ are i.i.d. random variables, each of which is uniformly distributed over $[0, 1)$.
Then we have
\[
\mathbb{E}\left[\frac{1}{\sigma_{r}^2(\pA)}\right] =+\infty.
\]
Furthermore, it means that $\mbE[\| \hx - \px \|^2] $ diverges.
\end{theorem}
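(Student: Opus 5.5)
We will use Proposition \ref{pro:sigmin} to turn the expectation into an integral over the minimal separation of the nodes. Almost surely the nodes are pairwise distinct, so Proposition \ref{pro4} and the discussion after Theorem \ref{thm1} give $r=\min\{m,N\}$ and $\sigma_r(\pA)>0$ almost surely. Proposition \ref{pro:sigmin} then yields, almost surely,
\[
\frac{1}{\sigma_r^2(\pA)} \geq \frac{1}{C^2N^3\Delta_{\min}^2}.
\]
It therefore suffices to show $\mbE[\Delta_{\min}^{-2}]=+\infty$. This needs at least two nodes; if $m=1$, then $r=1$ and $\sigma_1^2(\pA)=N$, so the claim requires $m\geq 2$, which we assume.

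The key estimate is a lower bound on the small-ball probability of $\Delta_{\min}$. Since $\Delta_{\min}\le \dist_T(t_1,t_2)$, we have $\Delta_{\min}^{-2}\ge \dist_T(t_1,t_2)^{-2}$, so it is enough to treat the pair $(t_1,t_2)$. For i.i.d.\ uniform $t_1,t_2$ on the torus, $D:=\dist_T(t_1,t_2)$ is uniform on $[0,1/2]$: conditionally on $t_1$, the difference $t_2-t_1 \bmod 1$ is uniform, and folding it gives a uniform law on $[0,1/2]$. Hence
\[
\mbE[D^{-2}] = 2\int_0^{1/2} s^{-2}\,\mrd s = +\infty,
\]
and consequently $\mbE[\sigma_r^{-2}(\pA)] \ge (C^2N^3)^{-1}\mbE[D^{-2}]=+\infty$. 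In fact even $\mbE[D^{-1}]$ diverges, so the conclusion persists with room to spare.

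To pass to the risk, use Theorem \ref{thm1}: every term in $\sum_{i=1}^r\sigma_i^{-2}(\pA)$ is nonnegative, so this sum is at least $\sigma_r^{-2}(\pA)$. Since $\mbE[N-r]$ is finite (it equals $\max\{N-m,0\}$), we get
\[
\mbE\|\hx-\px\|^2 \ge \frac{\epsilon^2}{m}\,\mbE\left[\sigma_r^{-2}(\pA)\right] = +\infty,
\]
provided $\epsilon>0$.

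The main obstacle is the validity of Proposition \ref{pro:sigmin} itself, which we take as given. If we wanted to avoid it, we would prove the needed bound directly. Take two rows $l=1,2$ with $\delta = t_2-t_1$ small, and form the test vector $\pmb{w}=(\pe_1-\pe_2)/\sqrt2\in\mbC^m$. Then
\[
\|\pA^*\pmb{w}\|^2=\tfrac12\sum_{k\in\Gamma}|e^{2\pi\mri k t_1}-e^{2\pi\mri k t_2}|^2 \lesssim N^3\delta^2.
\]
When $m\le N$, the matrix $\pA$ has full row rank, so $\sigma_r=\sigma_m=\min_{\|\pmb{w}\|=1}\|\pA^*\pmb{w}\|$ and the bound follows at once. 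When $m>N$, we have $\sigma_N(\pA)=\min_{\|\pv\|=1}\|\pA\pv\|$, and this simple test vector argument fails: two nearly coincident rows do not force the column space to be degenerate. In that regime the argument must rely on Proposition \ref{pro:sigmin} as stated in the paper.
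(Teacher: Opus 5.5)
For $2\le m\le N$ your argument is complete, and it is cleaner than the paper's. The paper proves $\mathbb{P}[\Delta_{\min}\le\varepsilon]\asymp m^2\varepsilon$ with a union bound plus Paley--Zygmund and then integrates the tail. You only need $\Delta_{\min}\le\dist_T(t_1,t_2)$ and the exact uniform law of one pair's torus distance. Your computation with $\pA^*\pmb{w}$, $\pmb{w}\in\mbC^m$, is also the dimensionally correct form of the paper's proof of Proposition~\ref{pro:sigmin}, which writes $\pA\pz$ with $\pz\in\mbC^m$. That computation controls $\sigma_m(\pA)$, which equals $\sigma_r(\pA)$ only when $m\le N$.

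You rightly sense that $m>N$ is different, but the gap there cannot be closed by taking Proposition~\ref{pro:sigmin} as given, because the proposition is false in that regime. Take $m=N+1$, $t_1=0$, $t_2=\delta\in(0,1/N)$, and $t_{j+2}=j/N$ for $j=1,\dots,N-1$. For $S=\{1,3,\dots,N+1\}$ the row submatrix $\pA_S$ is a DFT matrix. Hence $\pA^*\pA\succeq\pA_S^*\pA_S=N\pI_N$ and $\sigma_N(\pA)\ge\sqrt N$ for every $\delta$, while $\Delta_{\min}\le\delta\to0$. The paper's proof of Theorem~\ref{Thm:Divergence} uses the same proposition for all $m$, so it has the same hole.

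In fact the statement is not true for all $m$.
\begin{itemize}
\item If $m\ge3N$, pick disjoint index sets $S_1,S_2,S_3$ of size $N$. Lagrange interpolation (Gautschi's bound) gives $\sigma_N(\pA_{S_g})\ge c_N\min_{l\in S_g}\prod_{i\in S_g\setminus\{l\}}\dist_T(t_l,t_i)$. Conditionally on $t_l$, these $N-1$ distances are i.i.d.\ uniform on $[0,1/2]$, so $\mathbb{P}[\sigma_N(\pA_{S_g})\le\varepsilon]=O(\varepsilon\log^{N-2}(1/\varepsilon))$. Since $\sigma_N(\pA)\ge\max_g\sigma_N(\pA_{S_g})$ and the three groups are independent, $\mathbb{P}[\sigma_N(\pA)\le\varepsilon]=O(\varepsilon^3\log^{3N-6}(1/\varepsilon))$. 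Therefore $\mbE[\sigma_N^{-2}(\pA)]=\int_0^\infty\mathbb{P}[\sigma_N(\pA)<t^{-1/2}]\,\mrd t<\infty$, and the risk is finite.
\item For $N<m\le2N-1$ divergence does hold, but through clustering of all the nodes rather than one close pair. Consider the event that every $t_i$ lies within $\rho/2$ of $t_1$, which has probability $\rho^{m-1}$. Let $\pv$ be the coefficient vector of $(z-e^{2\pi\mri t_1})^{N-1}$, reindexed by $k=j-q$. Then $\|\pv\|\ge1$ and $|(\pA\pv)_i|\le(\pi\rho)^{N-1}$, so $\sigma_N(\pA)\le\sqrt m\,(\pi\rho)^{N-1}$. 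Thus $\mathbb{P}[\sigma_N(\pA)\le s]\gtrsim s^{(m-1)/(N-1)}$ for small $s$, which forces $\mbE[\sigma_N^{-2}(\pA)]=\infty$ whenever $m\le2N-1$.
\end{itemize}
So your proof establishes the theorem for $2\le m\le N$. To extend it up to $m\le 2N-1$ you need the cluster argument above, and no argument can give it for $m\ge3N$.
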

This fundamental obstruction shows that random sampling alone cannot guarantee stable reconstruction in expectation, and it motivates the need for either deterministic node perturbations.

\subsection{Sample model (b)}
In this subsection, we consider that the sampling nodes are obtained as random perturbations of a uniform grid:
\[
t_l=\frac{l-1}{m}+\delta_l,\qquad l=1,\dots,m,
\]
where $\delta_1,\dots,\delta_m$ are i.i.d. random variables uniformly distributed on $(0,T)$.
Independent uniform sampling allows arbitrarily large gaps and clusters with positive probability, whereas jittered sampling preserves the global regularity of an underlying uniform grid while introducing only local perturbations.

We first give the Cumulative Distribution Function (CDF) under this case for different $T$:
\begin{proposition}\label{prop3_8}
For fixed points $t_1 = \delta_1$ and $t_2 = \frac{1}{m} + \delta_2$ with $\delta_1, \delta_2 \overset{i.i.d.}{\sim} \Unif(0,T)$ and $T\leq \frac{1}{m}$, we set $D = t_2-t_1$ and give the CDF of $D$:
 \[
\mathbb{P}[ D \leq \varepsilon] = \left\{
\begin{array}{lr}
0, \quad &\varepsilon < \frac{1}{m}-T,\\
\frac{(\varepsilon -\frac{1}{m}+T)^2}{2 T^2}, \quad&\frac{1}{m}-T \leq  \varepsilon \leq \frac{1}{m},\\
1 - \frac{(\varepsilon -\frac{1}{m} -T)^2}{2 T^2}, \quad&\frac{1}{m} < \varepsilon \leq \frac{1}{m}+T,\\
1, \quad&\frac{1}{m}+T < \varepsilon.
\end{array}
\right.
\]
\end{proposition}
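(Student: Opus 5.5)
We write $D=\frac{1}{m}+(\delta_2-\delta_1)$. This reduces the claim to computing the distribution of $W=\delta_2-\delta_1$, where $\delta_1,\delta_2$ are independent and uniform on $(0,T)$. The joint density of $(\delta_1,\delta_2)$ is $T^{-2}$ on the square $(0,T)^2$. Hence
\[
\mathbb{P}[D\le\varepsilon]=\mathbb{P}\Bigl[W\le \varepsilon-\tfrac1m\Bigr]=\frac{1}{T^2}\,\bigl|\{(\delta_1,\delta_2)\in(0,T)^2:\ \delta_2-\delta_1\le s\}\bigr|,
\]
with $s=\varepsilon-\frac1m$. The task is therefore to compute the area of the part of the square lying below the line $\delta_2=\delta_1+s$.

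We split into cases according to the value of $s\in\mathbb{R}$.

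\emph{Case $s<-T$.} The region is empty, since $W>-T$. This gives probability $0$, which is the case $\varepsilon<\frac1m-T$.

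\emph{Case $-T\le s\le 0$.} The region is the triangle with vertices $(-s,0)$, $(T,0)$ and $(T,T+s)$. It has legs of length $T+s$, so its area is $(T+s)^2/2$. Dividing by $T^2$ gives $\frac{(\varepsilon-\frac1m+T)^2}{2T^2}$.

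\emph{Case $0<s\le T$.} We pass to the complement. The complementary region $\{\delta_2-\delta_1>s\}$ is the triangle with legs $T-s$, so its area is $(T-s)^2/2$. This yields $1-\frac{(T-s)^2}{2T^2}=1-\frac{(\varepsilon-\frac1m-T)^2}{2T^2}$, using $(T-s)^2=(s-T)^2$.

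\emph{Case $s>T$.} The whole square lies below the line, so the probability is $1$.

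Translating $s$ back to $\varepsilon$ gives exactly the four intervals in the statement. The two middle formulas agree at $\varepsilon=\frac1m$, where both equal $\tfrac12$, so the piecewise function is continuous there.

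The assumption $T\le\frac1m$ enters only in the interpretation of the result. It guarantees $\frac1m-T\ge0$, so $D\ge 0$ and $t_1<t_2$. Moreover $t_1,t_2\in[0,\frac2m)\subset[0,1)$ whenever $m\ge2$, so $D$ is an honest ordered gap with no wrap-around on the torus. The only step requiring care is getting the triangle vertices right in the two middle cases; the remaining steps are routine bookkeeping.
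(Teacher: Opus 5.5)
Your proof is correct and follows essentially the paper's route: both reduce the claim to the law of $\delta_2-\delta_1$ shifted by $\frac1m$. The only difference is that the paper first writes down the triangular density of $\delta_2-\delta_1$ on $(-T,T)$ and integrates it piecewise, while you read off the same CDF directly as a triangle's area in the square $(0,T)^2$, which spares you the density that the paper asserts without derivation.
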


\subsubsection{Sample model (b) with $\delta_l \in (0,T)$ with $T=\frac{1}{m}$}
Based on Proposition \ref{pro:sigmin} and Proposition \ref{prop3_8}, we can get 
\begin{theorem}\label{Thm2}
$t_l =  \frac{l-1}{m} + \delta_l, l =1,\dots,m$, where $\delta_l\overset{i.i.d.}{\sim} \Unif(0,\frac{1}{m})$.
Then we have
\[
\mathbb{E}\left[\frac{1}{\sigma_{r}^2(\pA)}\right] =+\infty.
\]
Furthermore, it means that $\mbE[\| \hx - \px \|^2] $ diverges.
\end{theorem}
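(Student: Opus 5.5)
The plan is to mirror the divergence argument for model (a): bound $1/\sigma_r^2(\pA)$ from below by a negative power of a single node gap via Proposition~\ref{pro:sigmin}, then show that this gap has a density near zero that is too large for the resulting integral to converge. Concretely, Proposition~\ref{pro:sigmin} gives $\sigma_r(\pA)\le CN^{3/2}\Delta^+_{\min}$. Hence
\[
\frac{1}{\sigma_r^2(\pA)}\ \ge\ \frac{1}{C^2N^3(\Delta^+_{\min})^2}\ \ge\ \frac{1}{C^2N^3 D^2},
\]
where $D=t_2-t_1$ is the gap between the first two nodes. The second inequality needs two observations. First, $t_1\ne t_2$ almost surely, so $D$ is a nonzero separation. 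Second, $\dist_T(t_1,t_2)\le D$, so $\Delta^+_{\min}\le D$. It then suffices to prove $\mbE[D^{-2}]=+\infty$.

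With $T=\frac1m$, Proposition~\ref{prop3_8} gives $\mathbb{P}[D\le\varepsilon]=\frac{m^2\varepsilon^2}{2}$ for $0\le\varepsilon\le\frac1m$. So on $[0,\frac1m]$ the density of $D$ is $f_D(\varepsilon)=m^2\varepsilon$, which is linear and does not vanish to higher order at $0$. Therefore
\[
\mbE[D^{-2}]\ \ge\ \int_0^{1/m}\varepsilon^{-2}\,m^2\varepsilon\,\mrd\varepsilon \;=\; m^2\int_0^{1/m}\frac{\mrd\varepsilon}{\varepsilon}=+\infty.
\]
Equivalently, $\mathbb{P}[1/D^2>s]=\frac{1}{2s}$ for $s\ge m^2$, and $\int^\infty \frac{\mrd s}{2s}$ diverges. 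Monotonicity of expectation then gives $\mbE[1/\sigma_r^2(\pA)]=+\infty$.

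For the risk, I apply Theorem~\ref{thm1}. Since $\sum_{i=1}^r\sigma_i^{-2}\ge\sigma_r^{-2}$ and $N-r\ge0$, we get
\[
\mbE\|\hx-\px\|^2\ \ge\ \frac{\epsilon^2}{m}\,\mbE\big[\sigma_r^{-2}(\pA)\big]=+\infty
\]
for any $\epsilon>0$.

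The main obstacle is the boundary case: when $T=\frac1m$ the supports $[0,\frac1m)$ and $[\frac1m,\frac2m)$ of $t_1,t_2$ touch, and this is exactly what lets $D$ approach $0$ with linear density. The two points to state carefully are that Proposition~\ref{pro:sigmin} applies with a constant $C$ independent of the nodes, and that this is independent of whether $m<N$ or $m\ge N$. The point requiring care is that the $N^{3/2}\Delta_{\min}$ bound also holds when $m>N$, which the proposition asserts. If it were only available for $m\le N$, I would fall back on a direct argument: for $m\ge N$, pick the unit vector $v$ supported on rows $1,2$ with entries $\frac{1}{\sqrt2}(1,-1)$. Then $\|\pA^*v\|=\frac{1}{\sqrt2}\big(\sum_k|e^{2\pi\mri kt_1}-e^{2\pi\mri kt_2}|^2\big)^{1/2}\lesssim N^{3/2}D$. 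This bounds $\sigma_{\min}(\pA^*)$, but it only bounds $\sigma_r(\pA)$ when $m\le N$. So for $m>N$ the argument does rest on Proposition~\ref{pro:sigmin} as stated.
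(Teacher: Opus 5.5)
Your proof is correct and complete for $m\le N$, and the probabilistic step is cleaner than the paper's. In that range $r=m$ and $\sigma_m(\pA)=\min_{\|v\|=1}\|\pA^*v\|$, so your test vector $v=\frac{1}{\sqrt2}(\pe_1-\pe_2)$ gives $\sigma_r(\pA)\lesssim N^{3/2}D$ directly. The exact triangular law of the single gap $D=t_2-t_1$ from Proposition~\ref{prop3_8}, with density $m^2\varepsilon$ on $(0,\frac1m]$, then yields the logarithmic divergence. The paper instead proves the two-sided estimate $\mathbb{P}[\Delta^+_{\min}\le\varepsilon]\asymp m^3\varepsilon^2$ of Lemma~\ref{lem:minspacing2}. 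It then replaces $\mrd\mathbb{P}[\Delta^+_{\min}\le\varepsilon]$ by $m^3\varepsilon\,\mrd\varepsilon$, which needs a density bound rather than a CDF bound, though it is repairable via the layer-cake formula. Since only a lower bound on the moment is needed, one pair with an exact density suffices; it changes only the power of $m$ in a constant.

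The gap is the case $m>N$. Your own diagnosis there was right, and you should not have fallen back on Proposition~\ref{pro:sigmin} ``as stated'', because that proposition is false for $m>N$. Its proof applies $\pe_{\bar i}-\pe_{\bar j}\in\mbC^m$ to the rows, so it bounds $\min_{\|z\|=1}\|\pA^{*}z\|$. For $m>N$ this quantity is trivially $0$ and says nothing about $\sigma_N(\pA)$. Write $\|\pA\px\|^2=\sum_{l}|f(t_l)|^2$ with $f(t)=\sum_{k\in\Gamma}x_ke^{2\pi\mri kt}$. This sum only grows when nodes are added, so a near-collision of two nodes cannot force $\sigma_N(\pA)$ to be small if the other nodes already sample well.

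In fact Theorem~\ref{Thm2} itself fails for large $m$. Each cell $[\frac{l-1}{m},\frac lm)$ contains exactly one node, and $|(|f|^2)'|\le 2\|f\|_\infty\|f'\|_\infty\le 2\pi N^2\|\px\|^2$. Comparing the Riemann sum with Parseval's identity $\int_0^1|f|^2=\|\px\|^2$ therefore gives, for every realization,
\[
\|\pA\px\|^2=\sum_{l=1}^m|f(t_l)|^2\ \ge\ \bigl(m-2\pi N^2\bigr)\|\px\|^2 .
\]
Hence $\mbE[1/\sigma_r^2(\pA)]\le 1/(m-2\pi N^2)<\infty$ whenever $m>2\pi N^2$. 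Gr\"ochenig's theorem behind Proposition~\ref{pro1} gives a deterministic lower bound already for $m>2(N-1)$.

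So no argument can close the case $m>N$ in general. The paper's proof rests on the same invalid use of Proposition~\ref{pro:sigmin} and has the same defect. What you have actually proved is the theorem for $m\le N$, and you should state that restriction explicitly. One minor slip: ``for $m\ge N$, pick the unit vector'' should read $m\le N$.
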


\subsubsection{Sample model (b) with $\delta_l \in (0,T)$ with $T<\frac{1}{m}$}
\paragraph{\bf{Equidistant sampling nodes}}
Before we analyze random sampling model (b) with $T<\frac1m$, we first present a deterministic result for the discrete Fourier matrix corresponding to equidistant sampling nodes.
Since $t_l = \frac{l-1}{m}, l=1,\dots,m$ are pairwise distinct, the numbers $\{e^{2\pi \mri t_l}\}_{l=1}^m$ are also pairwise distinct. Hence $\pA$ is a Vandermonde-type matrix whose maximal minors are nonvanishing, implying $\rank(\pA)=\min\{m,N\}$.

The equidistant Fourier sampling case admits an explicit singular value characterization due to the underlying discrete Fourier structure \cite{christensen2003introduction,potts2003fast}.
The following theorem gives the resulting explicit singular value characterization for equidistant Fourier--Vandermonde matrices.
It serves as the baseline for the subsequent analyses involving perturbations. 
While this spectral property is classical, its implication for statistical reconstruction risk has not been explicitly characterized. By combining this exact spectrum with the pseudoinverse estimator, we obtain a closed-form risk formula and use it as a benchmark for perturbed sampling schemes.

\begin{theorem} \label{Equidistant}
Set $\Gamma = \{-q,\dots,0, \dots,q\}$ and  $N = 2q+1$. 
Let $\pA \in \mathbb{C}^{m \times N}$ with entries
\[
A_{l,k} = e^{2\pi \mri k t_l}, \quad t_l = \frac{l-1}{m},\; l=1,\dots,m,
\]
where $k \in \Gamma$.
Let $r = \operatorname{rank}(\pA) = \min\{m,N\}$.
Then the largest and smallest nonzero singular values of $\pA\pA^*$ are given by
\[
\sigma_1(\pA\pA^*) = 
\begin{cases}
m, & m \ge N,\\[6pt]
\displaystyle m\Bigl\lceil \frac{N}{m} \Bigr\rceil, & m < N,
\end{cases}
\qquad
\sigma_r(\pA\pA^*) = 
\begin{cases}
m, & m \ge N,\\[6pt]
\displaystyle m\Bigl\lfloor \frac{N}{m} \Bigr\rfloor, & m < N.
\end{cases}
\]
Moreover, each singular value equals $m$ when $m\geq N$; each singular value can take only two values:
\[
\begin{cases}
m(a+1), & \text{for } b \text{ residue classes}, \\[4pt]
ma, & \text{for } m-b \text{ residue classes},
\end{cases}
\]
when $m < N$ and $N = am +b$ with $a = \lfloor N/m \rfloor$, $0\leq b<m$.
\end{theorem}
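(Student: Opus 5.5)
We plan to compute the Gram matrices directly, using the orthogonality of the $m$-th roots of unity. Write $\omega = e^{2\pi \mri/m}$, so that $A_{l,k} = \omega^{k(l-1)}$. The nonzero singular values of $\pA\pA^*$ (its nonzero eigenvalues) coincide with those of $\pA^*\pA$. We therefore work with whichever product is more convenient in each regime.

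\textbf{Case $m \ge N$.} We compute $\pA^*\pA \in \mbC^{N\times N}$:
\[
(\pA^*\pA)_{k,k'} = \sum_{l=1}^m \omega^{(k'-k)(l-1)}.
\]
This sum equals $m$ if $k\equiv k' \pmod m$ and $0$ otherwise. For $k,k'\in\Gamma$ we have $|k-k'|\le 2q = N-1 < m$. Hence congruence forces $k=k'$, and $\pA^*\pA = m\pI_N$. Consequently all $r=N$ nonzero eigenvalues of $\pA\pA^*$ equal $m$, which gives the first lines of both formulas.

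\textbf{Case $m<N$.} Here we use $\pA\pA^* \in \mbC^{m\times m}$, namely
\[
(\pA\pA^*)_{l,l'} = \sum_{k\in\Gamma}\omega^{k(l-l')}.
\]
We partition $\Gamma$ into residue classes modulo $m$: let $n_j = \#\{k\in\Gamma : k\equiv j \pmod m\}$ for $j\in\mathbb{Z}_m$. Since $\omega^{k(l-l')}$ depends only on $k \bmod m$, this gives $\pA\pA^* = \pmb{F}\,\mathrm{diag}(n_j)\,\pmb{F}^*$, where $\pmb{F}_{l,j} = \omega^{j(l-1)}$ is the $m\times m$ DFT matrix. Because $\pmb{F}^*\pmb{F} = m\pI_m$, the matrix $\frac{1}{\sqrt m}\pmb{F}$ is unitary. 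The eigenvalues of $\pA\pA^*$ are therefore exactly $m n_j$, for $j\in\mathbb{Z}_m$.

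It remains to count the $n_j$. The set $\Gamma$ consists of $N$ consecutive integers, and $N = am+b$ with $0\le b<m$. Covering $\Gamma$ by $a$ full blocks of $m$ consecutive integers plus $b$ leftover consecutive integers shows that exactly $b$ residue classes have $n_j = a+1$ and the remaining $m-b$ classes have $n_j = a$. Since $m<N$ implies $a\ge 1$, every eigenvalue is positive, consistent with $r=m$. The two values are $m(a+1)$ on $b$ classes and $ma$ on $m-b$ classes.

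For the extremal values: the largest eigenvalue is $m(a+1) = m\lceil N/m\rceil$ when $b>0$, and $ma = m\lceil N/m\rceil$ when $b=0$. The smallest eigenvalue is $ma = m\lfloor N/m\rfloor$, because $m-b\ge 1$ always. These match the formulas in Theorem~\ref{Equidistant}.

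The main point needing care is the diagonalization step, $\pA\pA^* = \pmb{F}\,\mathrm{diag}(n_j)\,\pmb{F}^*$. It follows by grouping the sum over $k$ by residue. The only subtlety is the edge case $b=0$ in the ceiling identity, which is handled by the remark above.
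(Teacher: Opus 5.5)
Your proof is correct and follows essentially the paper's route: the paper observes that $\pA\pA^*$ is circulant and reads off its eigenvalues $\lambda_s = m\cdot\#\{k\in\Gamma : k\equiv s \pmod m\}$ via the discrete Fourier modes, which is exactly your factorization $\pA\pA^* = \pmb{F}\,\mathrm{diag}(n_j)\,\pmb{F}^*$, and it then uses the same residue-class count. The only cosmetic difference is the case $m\ge N$: you check $\pA^*\pA = m\pI_N$ directly from column orthogonality, whereas the paper treats that case as $a=0$, $b=N$ in the same residue count.
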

Then, we can get the following two corollaries about $\sum_{i=1}^{r}\frac1{\sigma_i(\pA\pA^*)}$.
\begin{corollary}\label{cor3_11}
Let $N=am+b$ with $a=\Bigl\lfloor N/m \Bigr\rfloor$ and $ 0\le b<m$.
Then
\[
\sum_{i=1}^{r}\frac1{\sigma_i(\pA\pA^*)}
=\begin{cases}
\displaystyle \frac{N}{m}, & m\ge N, \\[10pt]
\displaystyle \frac{m+N-2b}{ma(a+1)}, & m<N.
\end{cases}
\]
\end{corollary}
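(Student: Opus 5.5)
The plan is to read off the spectrum of $\pA\pA^*$ from Theorem \ref{Equidistant} and sum the reciprocals directly. Here $\sigma_i(\pA\pA^*)$ are the nonzero eigenvalues of $\pA\pA^*$, equivalently $\sigma_i^2(\pA)$. There are $r=\min\{m,N\}$ of them, because the equidistant nodes are pairwise distinct. I split into the two cases of the theorem.

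\textbf{Case $m\ge N$.} Here $r=N$, and Theorem \ref{Equidistant} says every nonzero singular value equals $m$. Hence
\[
\sum_{i=1}^{N}\frac{1}{\sigma_i(\pA\pA^*)}=\frac{N}{m}.
\]

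\textbf{Case $m<N$.} Write $N=am+b$ with $a=\lfloor N/m\rfloor\ge 1$ and $0\le b<m$. Then $r=m$. By Theorem \ref{Equidistant}, the $m$ eigenvalues of the $m\times m$ matrix $\pA\pA^*$ are indexed by the residue classes modulo $m$:
\begin{itemize}
\item exactly $b$ of them equal $m(a+1)$;
\item the remaining $m-b$ equal $ma$.
\end{itemize}
For intuition, $\pA\pA^*$ is circulant with entries $\sum_{k\in\Gamma}e^{2\pi\mri k(l-l')/m}$. Its eigenvalue for a residue class $j$ is $m$ times the number of $k\in\Gamma$ with $k\equiv j \pmod m$. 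The $N$ consecutive integers in $\Gamma$ fill each class $a$ or $a+1$ times.

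Since $a\ge1$, all $m$ eigenvalues are positive, so the sum runs over exactly these $m$ values. This gives
\[
\sum_{i=1}^{m}\frac{1}{\sigma_i(\pA\pA^*)}=\frac{b}{m(a+1)}+\frac{m-b}{ma}=\frac{ab+(m-b)(a+1)}{ma(a+1)}=\frac{m(a+1)-b}{ma(a+1)}.
\]
Substituting $m(a+1)-b=m+(am+b)-2b=m+N-2b$ yields the stated formula.

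The only delicate point is bookkeeping. One must check that the multiplicities $b$ and $m-b$ from Theorem \ref{Equidistant} account for all $r=m$ nonzero eigenvalues, and that none of them is zero, which holds because $a\ge 1$ when $m<N$. After that, the computation is the routine algebra above.
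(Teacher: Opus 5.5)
Your proposal is correct and follows the paper's proof essentially verbatim: you read off the spectrum from Theorem \ref{Equidistant} ($N$ eigenvalues equal to $m$ when $m\ge N$; $b$ copies of $m(a+1)$ and $m-b$ copies of $ma$ when $m<N$), sum the reciprocals, and use $m(a+1)-b=m+N-2b$. Your extra check is a small addition the paper leaves implicit: $a\ge 1$ forces all $m$ eigenvalues to be positive, so the multiplicity count matches $r=m$.
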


\begin{corollary}\label{cor3_12}
Assume $m<N$. Then
\[
\frac{1}{\Bigl\lfloor\frac{N}{m}\Bigr\rfloor+1} \le \sum_{i=1}^{r}\frac{1}{\sigma_i(\pA\pA^*)} \le \frac{1}{\Bigl\lfloor\frac{N}{m}\Bigr\rfloor}.
\]
\end{corollary}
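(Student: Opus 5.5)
We need to prove Corollary \ref{cor3_12}: for $m<N$, bounds $1/(a+1) \le S \le 1/a$ where $S=\sum_{i=1}^r 1/\sigma_i(\pA\pA^*)$ and $a=\lfloor N/m\rfloor$.

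The plan is to read the bounds off the exact spectrum in Theorem \ref{Equidistant}, with Corollary \ref{cor3_11} as a cross-check. When $m<N$ we have $r=m$, and Theorem \ref{Equidistant} says each of the $m$ nonzero eigenvalues of $\pA\pA^*$ is either $m(a+1)$ or $ma$, where $N=am+b$, $a=\lfloor N/m\rfloor\ge 1$ and $0\le b<m$. Hence each summand $1/\sigma_i(\pA\pA^*)$ lies in the interval $\bigl[\frac{1}{m(a+1)},\frac{1}{ma}\bigr]$. Summing the $m$ terms immediately gives
\[
\frac{1}{a+1}=m\cdot\frac{1}{m(a+1)}\le \sum_{i=1}^{m}\frac{1}{\sigma_i(\pA\pA^*)}\le m\cdot \frac{1}{ma}=\frac1a ,
\]
which is exactly the claim.

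As a consistency check, I would verify the same inequalities directly from the closed form $\frac{m+N-2b}{ma(a+1)}$ of Corollary \ref{cor3_11}. Substituting $N=am+b$ gives
\[
\frac{m(a+1)-b}{ma(a+1)}=\frac1a-\frac{b}{ma(a+1)}.
\]
Since $0\le b<m$, the subtracted term lies in $[0,\frac{1}{a(a+1)})$. The upper bound $\frac1a$ therefore holds with equality iff $b=0$. For the lower bound, note that $\frac1a-\frac{1}{a(a+1)}=\frac1{a+1}$.

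No step is a genuine obstacle. The only point needing care is that $a\ge1$ (since $m<N$), so that the divisions by $a$ are legitimate. One should also confirm that $r=m$, so that exactly $m$ terms are summed; this is guaranteed by $\rank(\pA)=\min\{m,N\}$.
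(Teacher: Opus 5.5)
Your proof is correct and is essentially the paper's argument: by Theorem~\ref{Equidistant}, each of the $r=m$ nonzero eigenvalues of $\pA\pA^*$ lies in $[ma,\,m(a+1)]$, so each reciprocal lies in $\bigl[\frac{1}{m(a+1)},\frac{1}{ma}\bigr]$, and summing the $m$ terms gives the bounds. Your cross-check through Corollary~\ref{cor3_11} is a useful extra: the identity $S=\frac1a-\frac{b}{ma(a+1)}$ shows that the upper bound is attained exactly when $b=0$, and that the lower bound is always strict because $b<m$.
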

Based on Corollary \ref{cor3_11} and $r = \min\{m,N\}$, we derive a closed-form expression for the reconstruction risk.
\begin{theorem} \label{Equidistant2}
At the same setting of Theorem \ref{Equidistant},
we have
\begin{align}\label{points2}
\mbE[\| \hx - \px \|^2]=\begin{cases}
\displaystyle  \frac{N}{m^2}\epsilon^2, & m\ge N; \\
\displaystyle  \frac{m+N-2b}{a(a+1)m^2}\epsilon^2 + N-m, & m<N.
\end{cases}
\end{align}
\end{theorem}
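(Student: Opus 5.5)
The plan is to specialize Theorem~\ref{thm1} to the deterministic node set $t_l=\frac{l-1}{m}$ and then insert the spectral sum from Corollary~\ref{cor3_11}. Because the nodes are deterministic, $\pA$ and its rank $r=\min\{m,N\}$ are fixed. The expectations in \eqref{points} are therefore taken only over $\px$ and $\pe$, and the outer expectation over $\pA$ disappears. I will check that the derivation of Theorem~\ref{thm1} still applies here. Conditionally on $\pA$, we have $\hx-\px=-(\pI_N-\pV\pV^*)\px+\pV\pSig^{-1}\pU^*\pe$. Assuming $\px$ and $\pe$ are independent with $\pe$ mean zero, the cross term has zero expectation. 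The remaining terms give
\begin{align*}
\mbE\|(\pI_N-\pV\pV^*)\px\|^2=\tr(\pI_N-\pV\pV^*)=N-r,
\end{align*}
which uses \eqref{xx}, and
\begin{align*}
\mbE\|\pV\pSig^{-1}\pU^*\pe\|^2=\frac{\epsilon^2}{m}\tr(\pSig^{-2})=\frac{\epsilon^2}{m}\sum_{i=1}^r\frac{1}{\sigma_i^2(\pA)},
\end{align*}
which uses $\mbE[\pe\pe^*]=\frac{\epsilon^2}{m}\pI_m$.

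Next I convert the singular values of $\pA$ into those of $\pA\pA^*$. The nonzero eigenvalues of $\pA\pA^*$ are exactly $\sigma_i^2(\pA)$, so
\[
\sum_{i=1}^r \frac{1}{\sigma_i^2(\pA)}=\sum_{i=1}^r\frac{1}{\sigma_i(\pA\pA^*)}.
\]
In the notation of Theorem~\ref{Equidistant}, $\sigma_i(\pA\pA^*)$ denotes these nonzero eigenvalues. Corollary~\ref{cor3_11} then gives the value of this sum in closed form.

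The two cases follow directly. For $m\ge N$ we have $r=N$ and the sum equals $N/m$, so the risk is $\frac{\epsilon^2}{m}\cdot\frac{N}{m}=\frac{N}{m^2}\epsilon^2$. For $m<N$ we have $r=m$ and the sum equals $\frac{m+N-2b}{ma(a+1)}$. This gives noise term $\frac{m+N-2b}{a(a+1)m^2}\epsilon^2$ plus bias $N-m$, which matches \eqref{points2}.

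I expect no real obstacle. The only care needed is bookkeeping: the paper's $\sigma_i(\pA\pA^*)$ are eigenvalues, equal to the squared singular values of $\pA$, and I must not square them twice. As a cross-check on Corollary~\ref{cor3_11}, I will sum the two-valued spectrum from Theorem~\ref{Equidistant}:
\[
\frac{b}{m(a+1)}+\frac{m-b}{ma}=\frac{ab+(a+1)(m-b)}{ma(a+1)}=\frac{m+am-b}{ma(a+1)}.
\]
Since $am=N-b$, this equals $\frac{m+N-2b}{ma(a+1)}$, in agreement with the corollary.
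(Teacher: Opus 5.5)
Your proposal is correct and takes the same route as the paper. You specialize the spectral risk identity of Theorem~\ref{thm1} to the deterministic equidistant nodes with $r=\min\{m,N\}$, then insert the closed-form value of $\sum_{i=1}^r 1/\sigma_i(\pA\pA^*)$ from Corollary~\ref{cor3_11}. Your bias--variance computation and your arithmetic check of the corollary are accurate; note that the cross term in fact vanishes identically, since $\pV\pSig^{-1}\pU^*\pe$ lies in the range of $\pV$ while $(\pI_N-\pV\pV^*)\px$ is orthogonal to it.
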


\paragraph{\bf{Non-equidistant sampling nodes}}
Now, we provide a necessary discussion on the case of non-equidistant sampling nodes:
\[
t_l=\frac{l-1}{m}+\delta_l, \qquad l=1,\dots,m.
\]

\paragraph{\bf{Sample model (b) with small $T$ (deterministic perturbations)}}
We consider deterministic perturbations of non-equidistant sampling nodes, and derive a deterministic upper bound for 
$1/\sigma_r(\pA\pA^*)$, which is obtained based on Weyl's inequality.
Here, we consider 
\[
\delta_l \in (0, T)
\] 
with $0<T<1/m$. Since $\frac{l-1}{m} \le t_l < \frac{l}{m}$,
the nodes satisfy $t_1<t_2<\cdots<t_m$,
which means that the nodes are pairwise distinct and therefore
\[
r = \rank(\pA)=\min\{m,N\}.
\]
Let $\tilde{\pA}$ denote the equidistant matrix with entries $\tilde{A}_{l,k} = e^{2\pi \mri k \frac{l-1}{m}}$, then 
\[
\pA_{l,k} = e^{2\pi \mri k \delta_l} \cdot \tilde{A}_{l,k}.
\]
For convenience, we set $\theta := \sigma_{r}(\tilde{\pA}\tilde{\pA}^*)$ and $\alpha := \|\tilde{\pA}\| = \sqrt{\sigma_{1}(\tilde{\pA}\tilde{\pA}^*)}$. 
Define $\pE = \pA - \tilde{\pA}$, i.e. $E_{l,k} = \tilde{A}_{l,k}\bigl(e^{2\pi \mri k\delta_l} - 1\bigr)$. 
Since $|\delta_l| \le T$ and $|k| \le q \le N/2$, we have pointwise
\[
|e^{2\pi \mri k\delta_l} - 1| = 2|\sin(\pi k\delta_l)| \le 2\pi |k| T \le \pi N T.
\]
Hence, we can get
\begin{align}\label{eq:upperound1_general}
\|\pE\| \le \|\pE\|_F \le \sqrt{mN}\,(\pi N T) = \pi\sqrt{m}\,N^{3/2}T =: \beta_{\max}.
\end{align}
Let $\pP = \tilde{\pA} \pE^* + \pE\tilde{\pA}^* + \pE\pE^*$. Then we have 
\begin{align}\label{eq:upperound2_general}
\|\pP\| \le 2\|\tilde{\pA}\| \|\pE\| + \|\pE\|_2^2 \le 2\alpha\beta_{\max} + \beta_{\max}^2 =: \eta.
\end{align}
By Weyl's inequality, we have $|\sigma_i(\pA\pA^*) - \sigma_i(\tilde{\pA}\tilde{\pA}^*)| \le \|\pP\|$,
which gives
\[
\sigma_r(\pA\pA^*) \ge \sigma_r(\tilde{\pA}\tilde{\pA}^*) - \|\pP\| \ge \theta - \eta,
\]
\[
\sigma_1(\pA\pA^*) \le \sigma_1(\tilde{\pA}\tilde{\pA}^*) + \|\pP\| \le \alpha^2 + \eta.
\]

If $m< N$, we know that $\alpha =\sqrt{ m\Bigl\lceil N/m \Bigr\rceil }$ and $\theta = m\Bigl\lfloor N/m \Bigr\rfloor$ based on Theorem \ref{Equidistant}. 
We set $T < (\sqrt{2N} - \sqrt{N+m}) / \pi\sqrt{m}\,N^{3/2}$ 
to  guarantee $\theta \ge m > \eta$. Under this condition, $\sigma_r(\pA\pA^*) > 0$ almost surely and the following bounds hold pathwise.

From the above inequalities, for every realization, we have
\[
\frac{1}{\sigma_r(\pA\pA^*)} \le \frac{1}{\theta - \eta},\qquad
\frac{1}{\sigma_1(\pA\pA^*)} \ge \frac{1}{\alpha^2 + \eta}.
\]
Since the right‑hand sides are deterministic constants, taking expectations yields:

\begin{proposition}\label{pro7}
Under the assumptions $m < N$ and
\[
T < \min\left\{\frac{1}{m},\ \frac{\sqrt{2N} - \sqrt{N+m}}{\pi\sqrt{m} N^{3/2}}\right\},
\]
the following bounds hold:
\[
\frac{1}{\sigma_r(\pA\pA^*)} \le
\frac{1}{m\left\lfloor\frac{N}{m}\right\rfloor - 2\pi m \sqrt{\left\lceil\frac{N}{m}\right\rceil} N^{3/2}T - \pi^2 m N^3 T^2}
\]
and
\[
\frac{1}{\sigma_1(\pA\pA^*)} \ge
\frac{1}{m\left\lceil\frac{N}{m}\right\rceil +2\pi m \sqrt{\left\lceil\frac{N}{m}\right\rceil} N^{3/2}T + \pi^2 m N^3 T^2}.
\]
\end{proposition}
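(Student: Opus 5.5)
The plan is to substitute the explicit equidistant spectral data from Theorem \ref{Equidistant} into the Weyl-type perturbation argument carried out just before the statement. Write $\pA = \tilde{\pA} + \pE$ with $E_{l,k} = \tilde{A}_{l,k}(e^{2\pi \mri k\delta_l}-1)$. Then
\[
\pA\pA^* = \tilde{\pA}\tilde{\pA}^* + \pP, \qquad \pP = \tilde{\pA}\pE^* + \pE\tilde{\pA}^* + \pE\pE^*.
\]
The pointwise bound $|e^{2\pi \mri k\delta_l}-1| \le 2\pi|k|T \le \pi N T$ gives $\|\pE\|\le\|\pE\|_F\le \beta_{\max}=\pi\sqrt{m}N^{3/2}T$, and hence $\|\pP\|\le \eta = 2\alpha\beta_{\max}+\beta_{\max}^2$.

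Since $m<N$, Theorem \ref{Equidistant} gives $\alpha^2 = m\lceil N/m\rceil$ and $\theta = m\lfloor N/m\rfloor$. Substituting these values yields
\[
\eta = 2\pi m\sqrt{\lceil N/m\rceil}\,N^{3/2}T + \pi^2 m N^3 T^2 .
\]
This is exactly the correction term appearing in both claimed bounds.

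Because $\pA\pA^*$ and $\tilde{\pA}\tilde{\pA}^*$ are Hermitian $m\times m$ matrices and $r=m$ (the nodes are strictly increasing since $T<1/m$), Weyl's inequality for eigenvalues gives two estimates:
\[
\sigma_m(\pA\pA^*) \ge \theta-\eta, \qquad \sigma_1(\pA\pA^*) \le \alpha^2+\eta.
\]
Taking reciprocals gives the second claimed inequality immediately. The first inequality additionally requires $\theta-\eta>0$, so that the reciprocal is legitimate and the direction of the inequality is preserved.

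The main step is checking positivity from the assumption on $T$. Set $x=\beta_{\max}$. The assumption $T<(\sqrt{2N}-\sqrt{N+m})/(\pi\sqrt m N^{3/2})$ is equivalent to $x<\sqrt{2N}-\sqrt{N+m}$. Since $\eta=(x+\alpha)^2-\alpha^2$, it suffices to show $(x+\alpha)^2<\alpha^2+\theta$. This holds if $x+\alpha\le\sqrt{2N}$ and $\alpha^2+\theta\ge 2N$, with one inequality strict.

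\emph{Bound on $\alpha$.} Using $\lceil N/m\rceil \le N/m+1$, we get $\alpha\le\sqrt{N+m}$. Therefore $x+\alpha<\sqrt{2N}$.

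\emph{Bound on $\alpha^2+\theta$.} We have $\alpha^2+\theta = m(\lceil N/m\rceil+\lfloor N/m\rfloor)$. If $m\nmid N$, this equals $m(2\lfloor N/m\rfloor+1)\ge 2N$, since $m\lfloor N/m\rfloor \ge N-m+1$ gives $2m\lfloor N/m\rfloor + m \ge 2N-m+2 \ge 2N$ only after a more careful count. If that count fails, the fallback is a cruder route. Because $m<N$, we have $\theta\ge m$. Then it suffices that $\eta<m$, i.e. $(x+\alpha)^2<\alpha^2+m$. If $\alpha^2 \ge N$, then $\alpha^2+m\ge N+m$, and the target reduces to showing $x+\alpha<\sqrt{\alpha^2+m}$.

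I expect this positivity bookkeeping to be the only delicate point. My approach would be to verify it carefully, treating the cases $m\mid N$ and $m\nmid N$ separately, so as to confirm that the stated threshold on $T$ guarantees $\eta<\theta$. This mirrors the paper's own remark that the choice of $T$ ensures $\theta\ge m>\eta$. Once $\theta-\eta>0$ is established, both displayed bounds hold for every realization of the $\delta_l$.
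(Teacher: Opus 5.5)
Your overall route is the paper's: write $\pA\pA^*=\tilde{\pA}\tilde{\pA}^*+\pP$, bound $\|\pP\|\le\eta$, and apply Weyl's inequality with the circulant spectrum $\alpha^2=m\lceil N/m\rceil$, $\theta=m\lfloor N/m\rfloor$. Your formula for $\eta$ and your bound on $1/\sigma_1(\pA\pA^*)$ are fine. The gap is the step you flag yourself, $\theta-\eta>0$. You never establish it, and neither of your two routes works as stated.

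\emph{First route.} Write $N=am+b$ with $b\ge 1$. Then $\alpha^2+\theta=m(2a+1)=2N+m-2b$, which is $<2N$ whenever $2b>m$. For example, $N=41$, $m=11$ gives $\alpha^2+\theta=77<82$. Your step ``$2N-m+2\ge 2N$'' is false for $m\ge 3$.

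\emph{Fallback.} The claim $\eta<m$ (also what the paper's one-line remark ``$\theta\ge m>\eta$'' asserts) is false in general. For $N=41$, $m=10$ the threshold allows $\beta_{\max}$ up to $\sqrt{82}-\sqrt{51}\approx 1.91$, which gives $\eta$ up to about $30.7>m$. That value is still below $\theta=40$, so the proposition survives, but $m$ is the wrong comparison quantity.

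The missing observation is to compare $\eta$ with $N-m$. Since $\eta=2\alpha\beta_{\max}+\beta_{\max}^2$ is increasing in $\alpha$, and $\alpha\le\sqrt{N+m}$ by your own bound, the hypothesis $\beta_{\max}<\sqrt{2N}-\sqrt{N+m}$ gives
\[
\eta\le\bigl(\sqrt{N+m}+\beta_{\max}\bigr)^2-(N+m)<2N-(N+m)=N-m .
\]
On the other hand, $\theta=m\lfloor N/m\rfloor=N-b\ge N-m+1>N-m$. Hence $\eta<N-m<\theta$ in every case, with no split on whether $m\mid N$. When $m\ge N/2$ this also recovers $\eta<m$; for $m<N/2$ only $\eta<\theta$ holds. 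With $\theta-\eta>0$ in hand, both displayed bounds follow pathwise from Weyl's inequality exactly as you describe.
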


Similarly, we know that $\alpha =\sqrt{ m}$ and $\theta = m$ based on Theorem \ref{Equidistant} if $m \geq N$.
We set $T < 1 / (\sqrt{2}+1)\pi N^{3/2}$  to  guarantee $\theta = m  > \eta$. Then we have the following results.
\begin{proposition}\label{pro8}
Under the assumptions $m \geq N$ and
\[
T < \min\left\{\frac{1}{m},\ \frac{\sqrt{2}-1}{\pi N^{3/2}}\right\},
\]
the following bounds hold:
\[
\frac{1}{\sigma_r(\pA\pA^*)}\le
\frac{1}{m - 2\pi m N^{3/2}T - \pi^2 m N^3 T^2}
\]
and
\[
\frac{1}{\sigma_1(\pA\pA^*)} \ge
\frac{1}{m + 2\pi m N^{3/2}T + \pi^2 m N^3 T^2}.
\]
\end{proposition}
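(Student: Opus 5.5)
The argument is the same Weyl-perturbation chain used just before Proposition~\ref{pro7}, now with the equidistant constants for $m\ge N$ supplied by Theorem~\ref{Equidistant}: since all nonzero singular values of $\tilde{\pA}\tilde{\pA}^*$ equal $m$, we have $\alpha=\|\tilde{\pA}\|=\sqrt m$ and $\theta=\sigma_r(\tilde{\pA}\tilde{\pA}^*)=m$, where $r=N$. Because $T<1/m$, the nodes satisfy $\frac{l-1}{m}\le t_l<\frac lm$ and are pairwise distinct. Hence $r=\rank(\pA)=N$ as well, and the index $r$ refers to the same position in both spectra.

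\emph{Perturbation bound.} With $\pE=\pA-\tilde{\pA}$, the entrywise estimate $|e^{2\pi\mri k\delta_l}-1|\le 2\pi|k|T\le \pi NT$ gives, via \eqref{eq:upperound1_general}, $\|\pE\|\le \beta_{\max}=\pi\sqrt m N^{3/2}T$. Writing $\pA\pA^*-\tilde{\pA}\tilde{\pA}^*=\pP$ and using \eqref{eq:upperound2_general},
\[
\|\pP\|\le 2\alpha\beta_{\max}+\beta_{\max}^2=2\pi mN^{3/2}T+\pi^2 mN^3T^2=:\eta .
\]
Weyl's inequality for the Hermitian matrices $\pA\pA^*$ and $\tilde{\pA}\tilde{\pA}^*$ then yields $\sigma_r(\pA\pA^*)\ge m-\eta$ and $\sigma_1(\pA\pA^*)\le m+\eta$ for every realization.

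\emph{Positivity of the lower bound.} Set $s=\pi N^{3/2}T$, so that $\eta=m(2s+s^2)$. Then $m-\eta=m\bigl(2-(1+s)^2\bigr)$, which is positive exactly when $s<\sqrt2-1$, i.e.\ $T<(\sqrt2-1)/(\pi N^{3/2})$. This is the second condition in the hypothesis (note $(\sqrt2-1)=1/(\sqrt2+1)$, matching the text). Under it, $\sigma_r(\pA\pA^*)\ge m-\eta>0$, and taking reciprocals gives both stated inequalities; the second needs only $m+\eta>0$, which is automatic.

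\emph{Main obstacle.} The one delicate point is the rank bookkeeping. Here $\pA\pA^*$ is $m\times m$ with rank $N$, so its $m-N$ trailing zero eigenvalues coincide with those of $\tilde{\pA}\tilde{\pA}^*$. Weyl's inequality must therefore be applied to the $r$-th eigenvalue in the common ordering, not to the smallest eigenvalue; with that indexing the comparison is valid.
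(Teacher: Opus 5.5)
Your proposal is correct and is essentially the paper's own argument: Weyl's inequality for $\pA\pA^*=\tilde{\pA}\tilde{\pA}^*+\pP$, with $\alpha=\sqrt m$ and $\theta=m$ taken from Theorem~\ref{Equidistant}, the bound $\|\pP\|\le\eta=2\sqrt m\,\beta_{\max}+\beta_{\max}^2$, and the condition $T<(\sqrt2-1)/(\pi N^{3/2})$ ensuring $m-\eta>0$. Your index-wise use of Weyl at position $r=N$ matches the paper, and the resulting bound $\sigma_N(\pA\pA^*)\ge m-\eta>0$ also confirms $\rank(\pA)=N$ independently of the node-distinctness argument.
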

Using the elementary inequality
\[
\frac{r}{\sigma_1(\pA\pA^*)} \le \sum_{i=1}^{r}\frac{1}{\sigma_i(\pA\pA^*)} \le \frac{r}{\sigma_r(\pA\pA^*)},
\]
together with Propositions \ref{pro7} and \ref{pro8}, we obtain the following result.

\begin{theorem}\label{thm:sum_inverse_singular}
Set $\Gamma = \{-q,\dots,0, \dots,q\}$ and  $N = 2q+1$.
Let $\pA\in\mathbb C^{m\times N}$ be the Fourier--Vandermonde matrix with entries
\[
A_{l,k} = e^{2\pi \mri k t_l},
\qquad
t_l=\frac{l-1}{m}+\delta_l,
\qquad
l=1,\dots,m,
\]
where $k \in \Gamma$ and $0<\delta_l<T$. 
Since $T<1/m$, the nodes satisfy $t_1<t_2<\cdots<t_m$, and therefore $r=\rank(\pA)=\min\{m,N\}$.

\noindent
{\rm(i) Case $m\ge N$.}
Assume that $T< \min\left\{ \frac{1}{m}, \frac{1}{(\sqrt2+1)\pi N^{3/2}} \right\}$. 
Then
\[
\frac{N}{m\left( 1 +2\pi N^{3/2}T +\pi^2N^3T^2 \right)}
\le
\sum_{i=1}^{r} \frac{1}{\sigma_i(\pA\pA^*)}
\]
and 
\[
\sum_{i=1}^{r} \frac{1}{\sigma_i(\pA\pA^*)}
\le
\frac{N}{m\left( 1 -2\pi N^{3/2}T -\pi^2N^3T^2\right)}.
\]

\noindent
{\rm(ii) Case $m<N$.} Assume that $T< \min\left\{ \frac{1}{m}, \frac{\sqrt{2N}-\sqrt{N+m}}{\pi\sqrt m\,N^{3/2}}\right\}$. 
Then
\[
\frac{1}{ \left\lceil\frac Nm\right\rceil + 2\pi \sqrt{\left\lceil\frac Nm\right\rceil} N^{3/2}T + \pi^2N^3T^2}
\le \sum_{i=1}^{r} \frac{1}{\sigma_i(\pA\pA^*)}
\]
and
\[
\sum_{i=1}^{r} \frac{1}{\sigma_i(\pA\pA^*)}
\le \frac{1}{\left\lfloor\frac Nm\right\rfloor - 2\pi \sqrt{\left\lceil\frac Nm\right\rceil} N^{3/2}T - \pi^2N^3T^2}.
\]
\end{theorem}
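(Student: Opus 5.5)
The plan is to use the sandwich
\[
\frac{r}{\sigma_1(\pA\pA^*)} \le \sum_{i=1}^{r}\frac{1}{\sigma_i(\pA\pA^*)} \le \frac{r}{\sigma_r(\pA\pA^*)},
\]
which holds because every nonzero eigenvalue of $\pA\pA^*$ lies in $[\sigma_r(\pA\pA^*),\sigma_1(\pA\pA^*)]$. I would then insert the pathwise bounds of Propositions \ref{pro8} and \ref{pro7} and simplify using $r=\min\{m,N\}$. First I would check that $r$ really is $\min\{m,N\}$ here. Since $0<\delta_l<T<1/m$, we get $\frac{l-1}{m}<t_l<\frac{l}{m}$, so the nodes are strictly increasing in $[0,1)$ and hence pairwise distinct. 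The Vandermonde argument given before Theorem \ref{Equidistant} then yields full rank.

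\textbf{Case (i), $m\ge N$, so $r=N$.} First I would check that the hypothesis $T<1/((\sqrt2+1)\pi N^{3/2})$ coincides with the one in Proposition \ref{pro8}, using $\frac{1}{\sqrt2+1}=\sqrt2-1$. Under it, $x:=\pi N^{3/2}T<\sqrt2-1$, so $1-2x-x^2>0$ and the denominators are positive. Proposition \ref{pro8} gives
\[
\frac{1}{\sigma_r}\le \frac{1}{m(1-2x-x^2)}, \qquad \frac{1}{\sigma_1}\ge \frac{1}{m(1+2x+x^2)}.
\]
Here $2\pi mN^{3/2}T=2mx$ and $\pi^2 mN^3T^2=mx^2$. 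Multiplying both by $r=N$ gives exactly the two inequalities in (i).

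\textbf{Case (ii), $m<N$, so $r=m$.} Proposition \ref{pro7} gives
\[
\frac{1}{\sigma_r}\le \frac{1}{m\bigl(\lfloor N/m\rfloor-2\pi\sqrt{\lceil N/m\rceil}N^{3/2}T-\pi^2N^3T^2\bigr)},
\]
and the analogous lower bound for $1/\sigma_1$ with $\lceil N/m\rceil$ and plus signs. Multiplying by $r=m$ cancels the factor $m$ and yields the stated bounds.

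The main obstacle is confirming that the denominator in the upper bound of (ii) is positive, so that the bound is not vacuous. The stated threshold on $T$ only guarantees $\eta<m$, where $\eta=2\alpha\beta_{\max}+\beta_{\max}^2$. Since $\theta=m\lfloor N/m\rfloor\ge m$, this gives $\theta-\eta>0$, which is exactly the positivity needed. I would verify that the threshold $(\sqrt{2N}-\sqrt{N+m})/(\pi\sqrt m N^{3/2})$ implies $\eta<m$ via a quadratic inequality in $\beta_{\max}$:
\[
\beta_{\max}<\sqrt{\alpha^2+m}-\alpha .
\]
Here $\alpha^2=m\lceil N/m\rceil$. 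If this step is not already packaged inside Proposition \ref{pro7}, I would bound $\alpha^2\le N+m$ and compare it with the given threshold. If that direct comparison fails for some parameter values, I would simply take positivity of the denominators as part of what Proposition \ref{pro7} guarantees, since the proposition is stated under the same hypothesis.
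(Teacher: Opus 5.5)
\textbf{Overall.} Your route is the paper's own. There the theorem is obtained exactly by combining $r/\sigma_1(\pA\pA^*) \le \sum_{i=1}^r 1/\sigma_i(\pA\pA^*) \le r/\sigma_r(\pA\pA^*)$ with Propositions~\ref{pro8} and \ref{pro7}, and then using $r=\min\{m,N\}$. Your bookkeeping in both cases is correct, including the identity $1/(\sqrt2+1)=\sqrt2-1$ and the fact that $1-2x-x^2=2-(1+x)^2>0$ in case (i).

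\textbf{The positivity check in case (ii).} This is the one step that would fail. The threshold on $T$ does \emph{not} imply $\eta<m$ in general. The paper's remark before Proposition~\ref{pro7} asserts the same thing, but that remark is inaccurate. Take $m=2$, $N=21$, so $\alpha^2=m\lceil N/m\rceil=22$. Then $\eta<m$ requires $\beta_{\max}<\sqrt{24}-\sqrt{22}\approx 0.21$. The hypothesis, however, allows $\beta_{\max}$ up to $\sqrt{42}-\sqrt{23}\approx 1.68$. For example, $\beta_{\max}=1$ (i.e.\ $T\approx 0.00234$, which is admissible) gives $\eta=2\sqrt{22}+1\approx 10.4>2$. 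For the same reason, your planned comparison via $\alpha^2\le N+m$ fails whenever $N>2m$: it only yields $\beta_{\max}<\sqrt{N+2m}-\sqrt{N+m}$.

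\textbf{What is actually needed.} You need $\eta<\theta=m\lfloor N/m\rfloor$, and this does hold. Write $N=am+b$ with $0\le b<m$. Since $\theta-\eta=\alpha^2+\theta-(\alpha+\beta_{\max})^2$, the requirement is $\beta_{\max}<\sqrt{\alpha^2+\theta}-\alpha$.
\begin{itemize}
\item If $b=0$, then $\alpha^2=\theta=N$, and the right side is $\sqrt{2N}-\sqrt N>\sqrt{2N}-\sqrt{N+m}$.
\item If $b>0$, put $x=\alpha^2=m(a+1)\in(N,N+m)$, so that $\theta=x-m$. The right side is then $g(x)=\sqrt{2x-m}-\sqrt x$. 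This function is increasing, because $\sqrt{2x-m}<2\sqrt x$. Hence $g(x)>\sqrt{2N-m}-\sqrt N\ge\sqrt{2N}-\sqrt{N+m}$. The last inequality follows from concavity of $\sqrt{\cdot}$: the numbers $N+m$ and $2N-m$ lie in $[N,2N]$ and have the same sum $3N$ as $N$ and $2N$.
\end{itemize}
With this check in place of $\eta<m$ (or, as in your fallback, by reading positivity off Proposition~\ref{pro7}), the proof is complete.
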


Particularly, we know that the nonzero singular values of $\tilde{\pA}\tilde{\pA}^*$ are given by
\[
\sigma_i(\tilde{\pA}\tilde{\pA}^*) =
\begin{cases}
m(a+1), & \text{for } b \text{ indices}, \\[4pt]
ma, & \text{for } m-b \text{ indices},
\end{cases}
\]
from Theorem~\ref{Equidistant}.  Propositions~\ref{pro7} and \ref{pro8} imply that
\[
\bigl| \sigma_i(\pA\pA^*) - \sigma_i(\tilde{\pA}\tilde{\pA}^*) \bigr| \le \eta, \qquad i=1,\dots,m.
\]

If $m<N$, by Theorem~\ref{Equidistant},
the nonzero singular values of $\tilde{\pA}\tilde{\pA}^*$ consist of $ma$ with multiplicity $m-b$, and $m(a+1)$ with multiplicity $b$.
Hence
\begin{align*}
\sum_{i=1}^{m} \frac{1}{\sigma_i(\pA\pA^*)} \ge \sum_{i=1}^{m} \frac{1}{\sigma_i(\tilde{\pA}\tilde{\pA}^*)+\eta} &=
\frac{m-b}{ma+\eta} + \frac{b}{m(a+1)+\eta} \\
&= \frac{m(N+m+\eta-2b)}{(ma+\eta)[m(a+1)+\eta]} 
\end{align*}
and
\begin{align*}
\sum_{i=1}^{m} \frac{1}{\sigma_i(\pA\pA^*)} \le \sum_{i=1}^{m} \frac{1}{\sigma_i(\tilde{\pA}\tilde{\pA}^*)-\eta} &=
\frac{m-b}{ma-\eta} + \frac{b}{m(a+1)-\eta} \\
&= \frac{m(N+m-\eta-2b)}{(ma-\eta)[m(a+1)-\eta]} 
\end{align*}
for $T< \min\left\{ \frac{1}{m}, \frac{\sqrt{2N}-\sqrt{N+m}}{\pi\sqrt m\,N^{3/2}}\right\}$.
Here, $\eta = 2\alpha\beta_{\max} + \beta_{\max}^2$, $\alpha = \sqrt{ m\Bigl\lceil \frac{N}{m} \Bigr\rceil }$, $\beta_{\max} = \pi\sqrt{m} N^{3/2}T$, $a = \left\lfloor\frac Nm\right\rfloor$ and $b = N-am$.

If $m\ge N$, then by Theorem~\ref{Equidistant}, all nonzero singular values of $\tilde{\pA}\tilde{\pA}^*$ are equal to $m$,
and there are exactly $N$ such singular values.
Therefore
\[
\frac{N}{m + \eta} \le \sum_{i=1}^{N} \frac{1}{\sigma_i(\pA\pA^*)} \le \frac{N}{m - \eta},
\]
for $T< \min\left\{ \frac{1}{m}, \frac{1}{(\sqrt2+1)\pi N^{3/2}} \right\}$.
Here, $\eta = 2\alpha\beta_{\max} + \beta_{\max}^2$, $\alpha = \sqrt{ m}$ and $\beta_{\max} = \pi\sqrt{m} N^{3/2}T$.

Therefore, we obtain the following theorem:

\begin{theorem}
\label{thm:improved_sum_inverse} Set $\Gamma = \{-q,\dots,0, \dots,q\}$ and  $N = 2q+1$.
Let $\pA \in \mathbb{C}^{m \times N}$ with entries
\[
A_{l,k} = e^{2\pi \mri k t_l}, \quad t_l = \frac{l-1}{m}+\delta_l,\ l=1,\dots,m,
\]
where $k \in \Gamma$ and $\delta_l \in  (0,T)$, $ 0<T<\frac{1}{m}$. 
Since $T<1/m$, the nodes satisfy $t_1<t_2<\cdots<t_m$, and therefore $r=\rank(\pA)=\min\{m,N\}$.
Write $N=am+b, a=\left\lfloor N/m\right\rfloor, 0\le b<m$. 
Then the following holds.

\noindent
{\rm(i) Case $m\ge N$.}
Assume that $\eta = 2\alpha\beta_{\max} + \beta_{\max}^2$, $\alpha = \sqrt{ m}$, $\beta_{\max} = \pi\sqrt{m} N^{3/2}T$ and 
\[
T< \min\left\{ \frac{1}{m}, \frac{1}{(\sqrt2+1)\pi N^{3/2}} \right\}.
\]
Then we have
\[
 \frac{N}{m(m + \eta)} \epsilon^2 \leq \mbE[\| \hx - \px \|^2] \leq \frac{N}{m(m - \eta)} \epsilon^2 .
\]

\noindent
{\rm(ii) Case $m<N$.} Assume that $\eta = 2\alpha\beta_{\max} + \beta_{\max}^2$, $\alpha = \sqrt{ m\Bigl\lceil N/m \Bigr\rceil }$, $\beta_{\max} = \pi\sqrt{m} N^{3/2}T$ and
\[
T< \min\left\{ \frac{1}{m}, \frac{\sqrt{2N}-\sqrt{N+m}}{\pi\sqrt m\,N^{3/2}}\right\}.
\]
Then we have
\begin{align*}
 \frac{N+m+\eta-2b}{(ma+\eta)[m(a+1)+\eta]} \epsilon^2 + N-m \leq \mbE[\| \hx - \px \|^2] \quad \quad\\
 \leq  \frac{N+m-\eta-2b}{(ma-\eta)[m(a+1)-\eta]} \epsilon^2+ N-m.
\end{align*}
\end{theorem}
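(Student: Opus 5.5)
The plan is to combine Theorem~\ref{thm1} with the pathwise two-sided bounds on $\sum_{i=1}^r 1/\sigma_i(\pA\pA^*)$ derived just before the statement. First note that $\sigma_i^2(\pA)=\sigma_i(\pA\pA^*)$ for the nonzero singular values, so Theorem~\ref{thm1} reads
\[
\mbE[\|\hx-\px\|^2]=\frac{\epsilon^2}{m}\,\mbE\Bigl[\sum_{i=1}^r\frac{1}{\sigma_i(\pA\pA^*)}\Bigr]+\mbE[N-r].
\]
Because $0<\delta_l<T<1/m$, the nodes are strictly increasing in $[0,1)$ for every realization. Hence $r=\min\{m,N\}$ deterministically, and the term $\mbE[N-r]$ equals $0$ when $m\ge N$ and $N-m$ when $m<N$.

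Next, I will use the Weyl perturbation argument already set up. Write $\pA=\tilde{\pA}+\pE$ with $\|\pE\|\le\beta_{\max}$ and $\|\pP\|\le\eta$ for every realization. Weyl's inequality then gives $|\sigma_i(\pA\pA^*)-\sigma_i(\tilde{\pA}\tilde{\pA}^*)|\le\eta$ for each $i$. Insert the exact spectrum of $\tilde{\pA}\tilde{\pA}^*$ from Theorem~\ref{Equidistant}:
\begin{itemize}
\item for $m\ge N$, there are $N$ eigenvalues, all equal to $m$;
\item for $m<N$, the value $ma$ has multiplicity $m-b$ and $m(a+1)$ has multiplicity $b$.
\end{itemize}
Bounding each reciprocal termwise gives deterministic sandwiches. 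For $m\ge N$ this is $N/(m+\eta)\le\sum_i 1/\sigma_i\le N/(m-\eta)$. For $m<N$ it is the displayed expression $\frac{m-b}{ma\pm\eta}+\frac{b}{m(a+1)\pm\eta}$. Since these bounds hold pathwise with deterministic right-hand sides, taking expectations preserves them. Multiplying by $\epsilon^2/m$ and adding $N-r$ then yields the claimed inequalities.

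The main obstacle is checking that the hypotheses on $T$ make the denominators positive, i.e.\ $\eta<\theta$.
\begin{itemize}
\item Case $m\ge N$: here $\alpha=\sqrt m$, so $\eta=m\bigl(2\pi N^{3/2}T+\pi^2N^3T^2\bigr)=m\bigl((1+\pi N^{3/2}T)^2-1\bigr)$. Thus $\eta<m$ iff $\pi N^{3/2}T<\sqrt2-1$, which is implied by the stated bound $T<1/((\sqrt2+1)\pi N^{3/2})$.
\item Case $m<N$: I would verify similarly that $T<(\sqrt{2N}-\sqrt{N+m})/(\pi\sqrt m N^{3/2})$ gives $\eta<m\le ma$, by writing $\eta=(\alpha+\beta_{\max})^2-\alpha^2$ and using $\alpha^2=m\lceil N/m\rceil\le N+m$.
\end{itemize}
Finally, I will simplify the combined fraction, $\frac{m-b}{ma\pm\eta}+\frac{b}{m(a+1)\pm\eta}=\frac{m(N+m\pm\eta-2b)}{(ma\pm\eta)(m(a+1)\pm\eta)}$, using $N=am+b$. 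Multiplying by $\epsilon^2/m$ cancels the factor $m$ and gives exactly the form in the statement.
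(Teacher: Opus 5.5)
Your route is the paper's own: Theorem~\ref{thm1}, the pathwise Weyl sandwich around the exact spectrum of Theorem~\ref{Equidistant}, taking expectations of deterministic bounds, and the algebraic simplification. Case (i) is fine; there the condition on $T$ is in fact equivalent to $\eta<m$.

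The step that fails as written is the positivity check in case (ii). The hypothesis $\beta_{\max}=\pi\sqrt m N^{3/2}T<\sqrt{2N}-\sqrt{N+m}$ does \emph{not} imply $\eta<m$. Take $N=41$, $m=5$, so $a=8$, $b=1$, $\alpha^2=45$. The choice $\beta_{\max}=2$ is allowed, since $\sqrt{82}-\sqrt{46}\approx 2.27$, and it gives $\eta=4\sqrt{45}+4\approx 30.8>5$. The paper's sentence ``to guarantee $\theta\ge m>\eta$'' contains the same slip.

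Your suggested tool, $\alpha^2\le N+m$, does not rescue it if used naively. It yields only $\alpha+\beta_{\max}<\sqrt{2N}$, i.e. $\eta<2N-\alpha^2=N+b-m$ for $b>0$. This is below $ma=N-b$ only when $2b\le m$.

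What the termwise bound actually needs is only $\eta<\sigma_r(\tilde{\pA}\tilde{\pA}^*)=ma$, and that does hold. Indeed $\eta<ma$ iff $\beta_{\max}<\sqrt{\alpha^2+ma}-\alpha$. The map $x\mapsto\sqrt{x+c}-\sqrt{x}$ is decreasing in $x$ and increasing in $c$. Using $\alpha^2\le N+m$ and $ma=N-b>N-m$, you get
\[
\sqrt{\alpha^2+ma}-\alpha\;\ge\;\sqrt{N+m+ma}-\sqrt{N+m}\;>\;\sqrt{2N}-\sqrt{N+m}\;>\;\beta_{\max}.
\]
In the example above this gives $\eta\approx 30.8<40=ma$, consistent with the claim. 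Replace your target $\eta<m$ by $\eta<ma$, verified this way, and the rest of your argument goes through unchanged.
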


\paragraph{\bf Sample model (b) with small $T$ (random perturbations)}
In contrast to the deterministic analysis above, the randomness allows us to exploit concentration inequalities and obtain substantially sharper estimates for $\mathbb E\!\left[ \sum_{i=1}^{r} \frac{1}{\sigma_i(\pA\pA^*)} \right]$. 
We now consider random perturbations of the equidistant sampling nodes $t_l=\frac{l-1}{m}+\delta_l,  l=1,\ldots,m$ with
$\delta_l \overset{\mathrm{i.i.d.}}{\sim} \mathrm{Unif}(0,T), 0<T<1/m$.

We also set $A_{l,k} = e^{2\pi \mri k t_l}$ with $k\in\Gamma=\{-q,\ldots,q\}$ and $N=2q+1$.
Denote by $\tilde{\pA}$ the equidistant Fourier matrix with entries $\tilde{A}_{l,k} = e^{2\pi \mri k(l-1)/m}$, 
and define $\pE = \pA-\tilde{\pA}$ and $\pP = \pA\pA^* - \tilde{\pA}\tilde{\pA}^* = \tilde{\pA}\pE^* + \pE\tilde{\pA}^* + \pE\pE^*$.
Since
\[
|e^{2\pi \mri k\delta_l}-1| = 2|\sin(\pi k\delta_l)| \le 2\pi |k|T \le \pi NT,
\]
we obtain $\|\pE\| \le \|\pE\|_F \le \pi \sqrt{m} N^{3/2}T$. 
Define $\beta_{\max} = \pi \sqrt{m} N^{3/2}T$. 
Furthermore, we have
\[
\|\pP\| \le 2\|\tilde{\pA}\| \|\pE\| + \|\pE\|^2.
\]
Let $\alpha = \|\tilde{\pA}\|$ and $\eta = 2\alpha\beta_{\max} + \beta_{\max}^2$. We then have $\|\pP\| \le \eta$. 
Define
\[
\mu_k = \mathbb{E}\Big[ |e^{2\pi \mri k\delta}-1|^2 \Big] =
\begin{cases}
0, &k=0, \\[4pt]
2- \frac{\sin(2\pi kT)}{\pi kT}, &k\neq 0,
\end{cases}
\]
and $S_1 = \sum_{k=-q}^{q}\mu_k$. 
Since $
\|\pE\|_F^2= \sum_{l=1}^{m} \sum_{k=-q}^{q} |e^{2\pi \mri k\delta_l}-1|^2$, 
we have $\mathbb{E}\|\pE\|_F^2 = mS_1$. 
Consequently,
\[
\mathbb{E}\|\pP\| \le 2\alpha\sqrt{mS_1} + mS_1.
\]
For any $t_0\in(0,\eta)$, 
define $y_{t_0} = -\alpha+\sqrt{\alpha^2+t_0}$. 
Since $\|\pP\| \le 2\alpha\|\pE\|_F+\|\pE\|_F^2$, 
the event $\{\|\pP\|>t_0\}$ implies $\|\pE\|_F>y_{t_0}$, or equivalently $\|\pE\|_F^2>y_{t_0}^2$. 
Let $ Z_l=\sum_{k=-q}^{q} |e^{2\pi \mri k\delta_l}-1|^2$. 
Then, we have
\[
\|\pE\|_F^2 = \sum_{l=1}^{m}Z_l .
\]
The variables $Z_l$ are independent and satisfy $0\le Z_l\le 4N$. 
Let $v = m \mathrm{Var}(Z_1)$. 
Assume $y_{t_0}^2>mS_1$, 
we have
\[
\mathbb{P}[\|\pP\|>t_0] \le
\mathbb{P}\left[\|\pE\|_F^2>y_{t_0}^2 \right] = \mathbb{P}\left[ \sum_{l=1}^{m} Z_l > y_{t_0}^2 \right].
\]
Since $\mathbb{E} Z_l=S_1$, we obtain
\[
\mathbb{P}[\|\pP\|>t_0] \le
\mathbb{P} \left[ \sum_{l=1}^{m}(Z_l-S_1) > y_{t_0}^2-mS_1 \right].
\]
Then we can use  Bernstein's inequality to get $\mathbb{P}[\|\pP\|>t_0] \le p_{t_0}$, 
where
\[
p_{t_0} = \exp\left(-\frac{(y_{t_0}^2-mS_1)^2}{2v+\frac{8}{3}N(y_{t_0}^2-mS_1)}\right).
\]
On the other hand,
\[
y_{\eta} = -\alpha + \sqrt{\alpha^2 + 2\alpha\beta_{\max} + \beta_{\max}^2} = \beta_{\max} = \pi \sqrt{m} N^{3/2}T.
\]
Using $|e^{\mri x} -1|\leq |x|$, we can get $\mu_k = \mathbb{E}\Big[ |e^{2\pi \mri k\delta}-1|^2 \Big] \leq 4\pi^2k^2T^2$. Then
\[
mS_1 = m\sum_{k=-q}^{q}\mu_k \le 4\pi^2 m T^2 \sum_{k=-q}^q k^2 = \frac{\pi^2 m T^2 N(N^2-1)}{3}.
\]
Consequently, we can get
\[
\frac{mS_1}{y_\eta^2} \le \frac{N^2-1}{3N^2} < 1,
\]
which implies $mS_1<y_\eta^2$. 
Since $y_t$ is strictly increasing in $t$, $y_t^2 > mS_1$ is equivalent to $t>mS_1+2\alpha\sqrt{mS_1}$. 
Therefore, 
\[
y_{t_0}^2>mS_1
\]
holds whenever $t_0 \in (mS_1+2\alpha\sqrt{mS_1}, \eta)$. 

We can now derive the explicit bounds for $\mbE[\| \hx - \px \|^2]$.

\begin{theorem}
\label{thm:random_smallT_explicit} Set $\Gamma = \{-q,\dots,0, \dots,q\}$ and  $N = 2q+1$.
Let $A \in \mathbb{C}^{m \times N}$ with entries
\[
\pA_{l,k} = e^{2\pi \mri k t_l}, \quad t_l = \frac{l-1}{m}+\delta_l,\ l=1,\dots,m,
\]
where $k \in \Gamma$ and $\delta_l \in  (0,T)$, $0<T<\frac{1}{m}$. 
Since $T<1/m$, the nodes satisfy $t_1<t_2<\cdots<t_m$, and therefore $r=\rank(\pA)=\min\{m,N\}$.
Write $N=am+b, a=\left\lfloor\frac Nm\right\rfloor, 0\le b<m$. 
Let $\beta_{\max}= \pi\sqrt{m} N^{3/2}T$. Define
\[
\mu_k =
\begin{cases}
0,
&
k=0,
\\[4pt]
2-
\dfrac{\sin(2\pi kT)}
{\pi kT},
&
k\neq0,
\end{cases}
\]
and $S_1=\sum_{k=-q}^{q}\mu_k$. 
Then the following  holds.

\noindent
{\rm(i) Case \(m\ge N\).}
In this case, we set $\eta = 2\sqrt{m}\beta_{\max} + \beta_{\max}^2$. 
For any $t_0 \in (mS_1+2m\sqrt{S_1}, \eta)$, we define $y_{t_0} = -\sqrt{m}+\sqrt{m+t_0}$ 
and 
\[
p_{t_0} = \exp \left( -\frac{\bigl(y_{t_0}^2-mS_1\bigr)^2}{N \left(8mS_1+ \frac{8}{3} \bigl(y_{t_0}^2-mS_1\bigr) \right)}\right).
\]
If  $T < \frac{\sqrt{2}-1}{\pi N^{3/2}}$, 
then we have
\[
\frac{N}{m^2(1+\sqrt{S_1})^2} \epsilon^2 \le \mbE[\| \hx - \px \|^2] \le \frac{N}{m} \left(\frac{1-p_{t_0}}{m-t_0}+\frac{p_{t_0}}{m-\eta}\right)\epsilon^2.
\]

\noindent
{\rm(ii) Case \(m<N\).} In this case, we set $\eta = 2\sqrt{m(a+1)}\,\beta_{\max} + \beta_{\max}^2$. 
For any $t_0 \in \Bigl(mS_1+2m\sqrt{(a+1)S_1}, \eta \Bigr)$, 
define $y_{t_0} = -\sqrt{m(a+1)} + \sqrt{m(a+1)+t_0}$, 
and
\[
p_{t_0} =\exp \left( -\frac{\bigl(y_{t_0}^2-mS_1\bigr)^2}{N\left(8mS_1+ \frac{8}{3}\bigl(y_{t_0}^2-mS_1\bigr)\right)}\right).
\]
If $T < \frac{\sqrt{2a+1} - \sqrt{a+1}}{\pi N^{3/2}}$, 
then we have
\begin{align*}
\frac{m\Bigl(a+1+2\sqrt{(a+1)S_1}+S_1\Bigr)-b}{m^2\Bigl(a+2\sqrt{(a+1)S_1}+S_1\Bigr)\Bigl(a+1+2\sqrt{(a+1)S_1}+S_1\Bigr)}\epsilon^2 + N-m\\
\le
\mbE[\| \hx - \px \|^2]
\end{align*}
and
\begin{align*}
\mbE[\| \hx - \px \|^2]
\le&
\frac{\epsilon^2}{m}(m-b)\left( \frac{1-p_{t_0}}{ma-t_0}+ \frac{p_{t_0}}{ma-\eta} \right)\\
&+
\frac{b\epsilon^2}{m}\left(\frac{1-p_{t_0}}{m(a+1)-t_0}+ \frac{p_{t_0}}{m(a+1)-\eta}\right) + N-m.
\end{align*}
\end{theorem}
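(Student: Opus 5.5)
Everything starts from Theorem \ref{thm1}. Since $T<1/m$, we have $r=\min\{m,N\}$ deterministically, so the risk reduces to the deterministic bias term $N-r$ plus $\frac{\epsilon^2}{m}\mbE\bigl[\sum_{i=1}^r 1/\sigma_i(\pA\pA^*)\bigr]$. The squared singular values of $\pA$ coincide with the nonzero eigenvalues of $\pA\pA^*$. The task is therefore to bound $\mbE\sum_i 1/\sigma_i(\pA\pA^*)$ from below and above. I would compare $\pA\pA^*$ with $\tilde{\pA}\tilde{\pA}^*$, whose spectrum is given by Theorem \ref{Equidistant}, through $\pP=\pA\pA^*-\tilde{\pA}\tilde{\pA}^*$ and Weyl's inequality $|\sigma_i(\pA\pA^*)-\sigma_i(\tilde{\pA}\tilde{\pA}^*)|\le\|\pP\|$.

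\textbf{Upper bound.} First I would check that the hypothesis on $T$ forces $\eta<\theta$, where $\theta=m$ or $\theta=ma$. For $m\ge N$ this is $\beta_{\max}^2+2\sqrt m\beta_{\max}<m$, i.e.\ $\beta_{\max}<(\sqrt2-1)\sqrt m$, which is exactly $T<(\sqrt2-1)/(\pi N^{3/2})$. For $m<N$ the analogous computation with $\alpha=\sqrt{m(a+1)}$ gives $\sqrt{m(a+1)}+\beta_{\max}<\sqrt{m(2a+1)}$, which is the stated condition. Hence every denominator below is positive pathwise.

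Next I would split on the event $G=\{\|\pP\|\le t_0\}$. On $G$, Weyl gives $1/\sigma_i(\pA\pA^*)\le 1/(\sigma_i(\tilde{\pA}\tilde{\pA}^*)-t_0)$. On $G^c$ one always has $\|\pP\|\le\eta$, so the same bound holds with $t_0$ replaced by $\eta$. Since $t_0<\eta$ and each term is increasing, the bound on $G^c$ is the larger one. Replacing $\mathbb P[G^c]$ by its upper bound $p_{t_0}$ can therefore only increase the convex combination. This yields $(1-p_{t_0})/(\lambda-t_0)+p_{t_0}/(\lambda-\eta)$ for each eigenvalue level $\lambda\in\{m\}$ or $\lambda\in\{ma,m(a+1)\}$, weighted by the multiplicities $N$, or $m-b$ and $b$.

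The tail estimate $p_{t_0}$ comes from the Bernstein argument already sketched before the theorem, using $\|\pP\|\le 2\alpha\|\pE\|_F+\|\pE\|_F^2$ and the requirement $t_0>mS_1+2\alpha\sqrt{mS_1}$. The only new ingredient is the variance bound $v=m\,\mathrm{Var}(Z_1)\le m\,\mbE Z_1^2\le m\cdot 4N\cdot S_1$. With this, $2v\le 8NmS_1$, and together with the $R=4N$ term this matches the denominator $N(8mS_1+\frac83(y_{t_0}^2-mS_1))$.

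\textbf{Lower bound.} Here I would use Jensen rather than a tail argument, since $x\mapsto 1/(\lambda+x)$ is convex. Weyl gives $\sigma_i(\pA\pA^*)\le\sigma_i(\tilde{\pA}\tilde{\pA}^*)+\|\pP\|$, so
\[
\mbE\frac{1}{\sigma_i(\pA\pA^*)}\ge\frac{1}{\sigma_i(\tilde{\pA}\tilde{\pA}^*)+\mbE\|\pP\|}.
\]
Then I bound $\mbE\|\pP\|\le 2\alpha\sqrt{mS_1}+mS_1$, using $\mbE\|\pE\|\le\sqrt{\mbE\|\pE\|_F^2}$.

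For $m\ge N$ this gives $m+2m\sqrt{S_1}+mS_1=m(1+\sqrt{S_1})^2$, hence $N/(m^2(1+\sqrt{S_1})^2)\,\epsilon^2$. For $m<N$, with $\alpha^2=m(a+1)$, set $c=2\sqrt{(a+1)S_1}+S_1$. Summing $(m-b)/(m(a+c))+b/(m(a+1+c))$ and simplifying with $N=am+b$ gives $\bigl(m(a+1+c)-b\bigr)/\bigl(m(a+c)(a+1+c)\bigr)$. Multiplying by $\epsilon^2/m$ matches the stated expression.

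\textbf{Main obstacle.} I expect the delicate step to be the Bernstein bookkeeping: verifying the variance bound, and checking that the admissible interval for $t_0$ is nonempty. The latter follows from $mS_1<y_\eta^2$, shown above, and requires care with the $m<N$ normalization $\alpha=\sqrt{m(a+1)}$. The algebraic simplification of the $m<N$ lower bound is routine but error-prone.
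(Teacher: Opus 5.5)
Your proposal is correct and follows the paper's proof step for step. It uses Weyl comparison of $\pA\pA^*$ with the equidistant spectrum of $\tilde{\pA}\tilde{\pA}^*$, Jensen's inequality with $\mbE\|\pP\|\le 2\alpha\sqrt{mS_1}+mS_1$ for the lower bound, and the split on $\{\|\pP\|\le t_0\}$ with the Bernstein tail $p_{t_0}$ for the upper bound. Your explicit variance estimate $v\le m\,\mbE Z_1^2\le 4NmS_1$ (from $0\le Z_1\le 4N$) also fills in the step the paper leaves implicit when replacing $2v$ in the Bernstein exponent by the $8NmS_1$ that appears in the stated $p_{t_0}$.
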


\section{Numerical Experiments} \label{sec:5}
In this section, we construct numerical experiments based on the  two sampling models to demonstrate our conclusions.
\subsection{Sample (a)} \label{sec:num1}
To investigate the empirical behavior of $1/\sigma^2_r(\pA)$, we perform Monte Carlo simulations for a random Fourier matrix $\pA\in\mathbb{C}^{m\times N}$ with $N=41$, $m=33$ ($m/N\approx 0.8$), and increasing numbers of independent trials. Table~\ref{tab:mean_median} summarizes the empirical mean, median, and maximum of $1/\sigma_r(A)^2$ for different sample sizes.

\begin{table}[!h]
\vspace{-0.5cm}  
\setlength{\abovecaptionskip}{0.1cm}   
\centering
\caption{Empirical mean, median, and maximum of $1/\sigma_r^2(\pA)$ for different numbers of trials.}
\label{tab:mean_median}
\begin{tabular}{|c|c|c|c|}
\hline
Number of trials & Mean & Median & Maximum \\ \hline
$10$ & $2.51577\times 10^8$ & $5.4750\times 10^{5}$ & $2.0938\times 10^{9}$ \\ \hline
$10^2$ & $1.4267\times 10^{17}$ & $1.1782\times 10^{6}$ & $1.4266\times 10^{19}$ \\ \hline
$10^3$ & $1.3380\times 10^{21}$ & $5.6998\times 10^{5}$ & $1.2980\times 10^{24}$ \\ \hline
$10^4$ & $9.1525\times 10^{22}$ & $7.8869\times 10^{5}$ & $7.3821\times 10^{26}$ \\ \hline
$10^5$ & $1.9479\times 10^{24}$ & $7.8721\times 10^{5}$ & $2.2461\times 10^{29}$ \\ \hline
\end{tabular}
\vspace{-0.3cm}
\end{table}
As the number of trials increases from $10$ to $10^5$, the empirical mean of $1/\sigma_r^2(\pA)$ grows dramatically, from approximately $\sim10^8$ to $\sim10^{24}$, whereas the median remains remarkably stable at the level of $10^5$--$10^6$.
At the same time, the maximum observed value increases even more rapidly, reaching $10^{29}$ for $10^5$ trials. 
The pronounced discrepancy between the behavior of the mean and the median suggests that the sample mean is not representative of a typical realization but is instead dominated by rare extreme events.
As additional samples are collected, increasingly large realizations of $1/\sigma_r(\pA)^2$ continue to appear and exert a disproportionate influence on the empirical mean, while the central behavior captured by the median remains essentially unchanged. 

The above observation  suggests that the expectation $\mathbb{E}[1/\sigma^2_r(\pA)]$ may be infinite, i.e., the first moment does not exist. To further investigate this possibility, we examine the distributional properties of $1/\sigma^2_r(\pA)$ and its reciprocal $\sigma_r(\pA)$ using $10^5$ independent trials. The results are summarized in Fig.~\ref{fig:running_mean}--Fig.~\ref{fig:spacing}.

Fig.~\ref{fig:running_mean} plots the running  mean of $1/\sigma^2_r(\pA)$ as a function of the number of samples. Instead of converging to a finite limit, The running mean remains highly unstable and fails to exhibit clear convergence, which is a hallmark of a heavy‑tailed distribution with infinite mean.

\begin{figure}[!h]  
\vspace{-0.5cm}  
\setlength{\abovecaptionskip}{-0.2cm}   
\setlength{\belowcaptionskip}{0.2cm}   
\centering 
\begin{minipage}[b]{0.48\textwidth} 
\centering 
\includegraphics[width=1\textwidth]{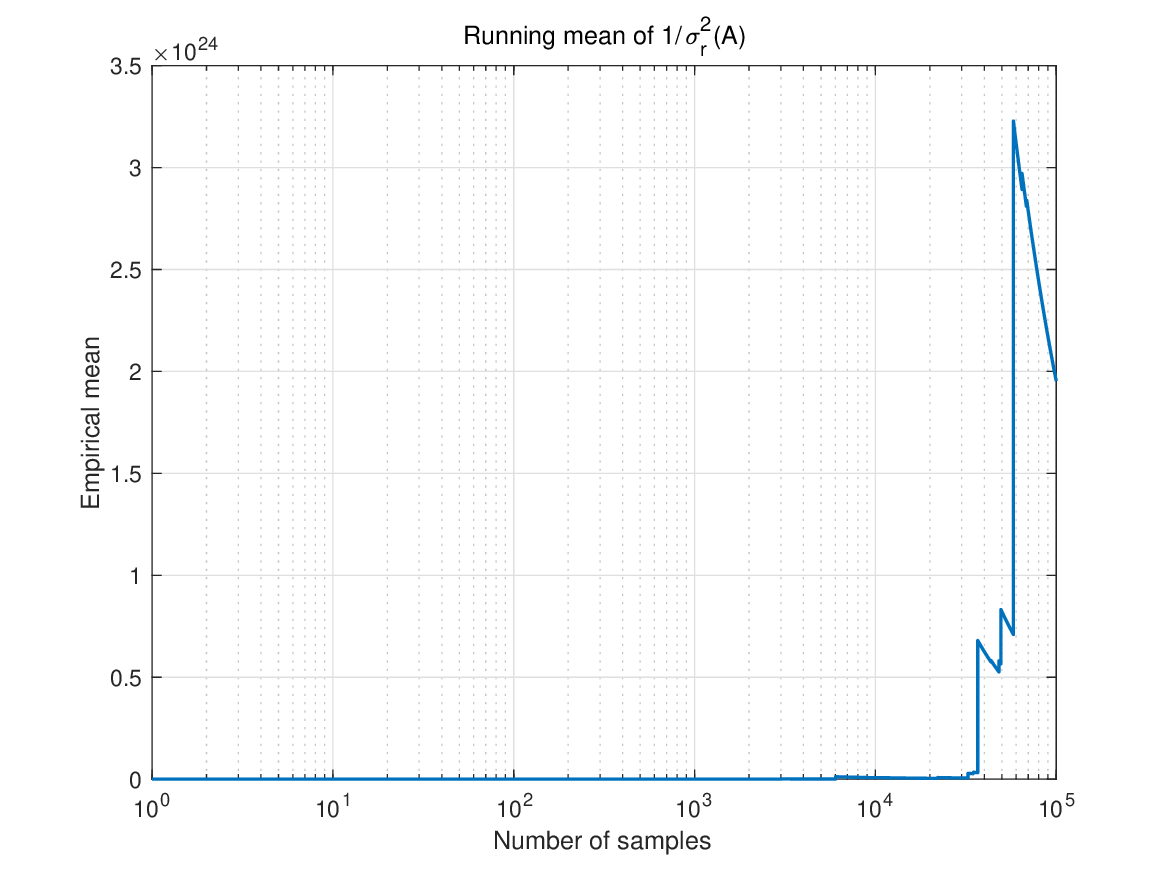} 
\vspace{-0.7cm}
\subcaption{Running mean of $1/\sigma^2_r(\pA)$ shows no convergence, suggesting infinite mean.\\ }
\label{fig:running_mean}
\end{minipage}
\vspace{-0.2cm}
\begin{minipage}[b]{0.48\textwidth} 
\centering 
\includegraphics[width=1\textwidth]{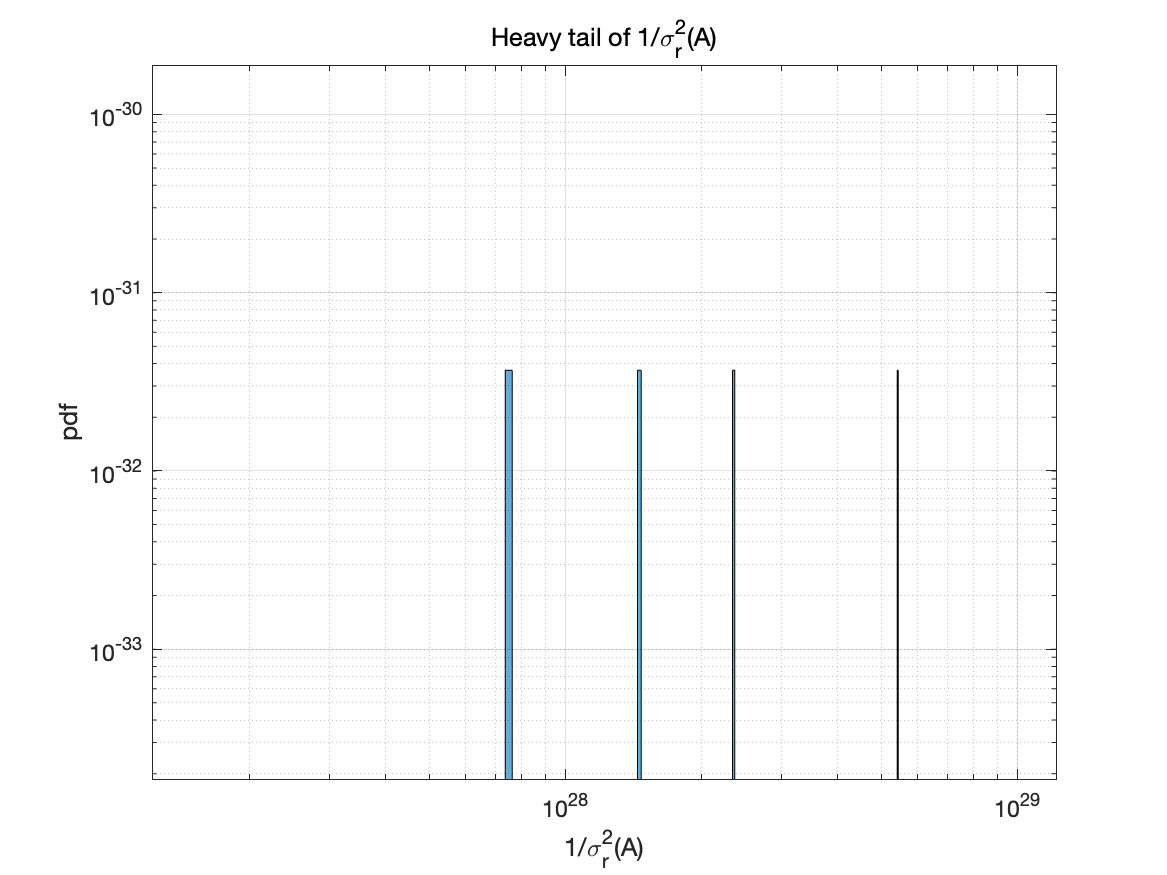} 
\vspace{-0.7cm}
\subcaption{Log–log pdf of $1/\sigma^2_r(\pA)$ reveals an extremely heavy tail spanning $>28$ orders of magnitude.}
\label{fig:hist_x}
\end{minipage}
\vspace{-0.2cm}
\begin{minipage}[b]{0.48\textwidth} 
\centering 
\includegraphics[width=1\textwidth]{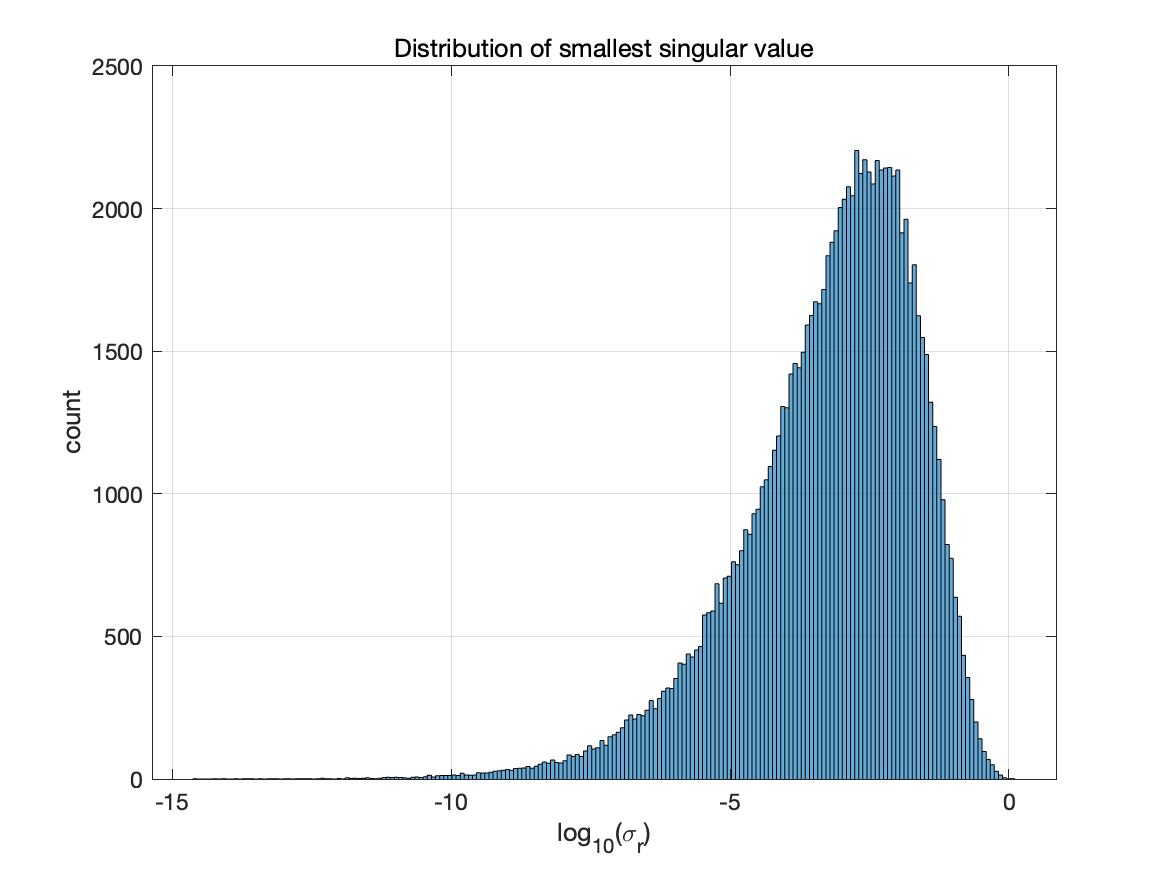} 
\vspace{-0.7cm}
\subcaption{Histogram of $\log_{10}\sigma_r(\pA)$ confirms $\sigma_r(\pA)$ can be as small as $\sim10^{-15}$. }
\label{fig:hist_sigma}
\end{minipage}
\begin{minipage}[b]{0.48\textwidth} 
\centering 
\includegraphics[width=1\textwidth]{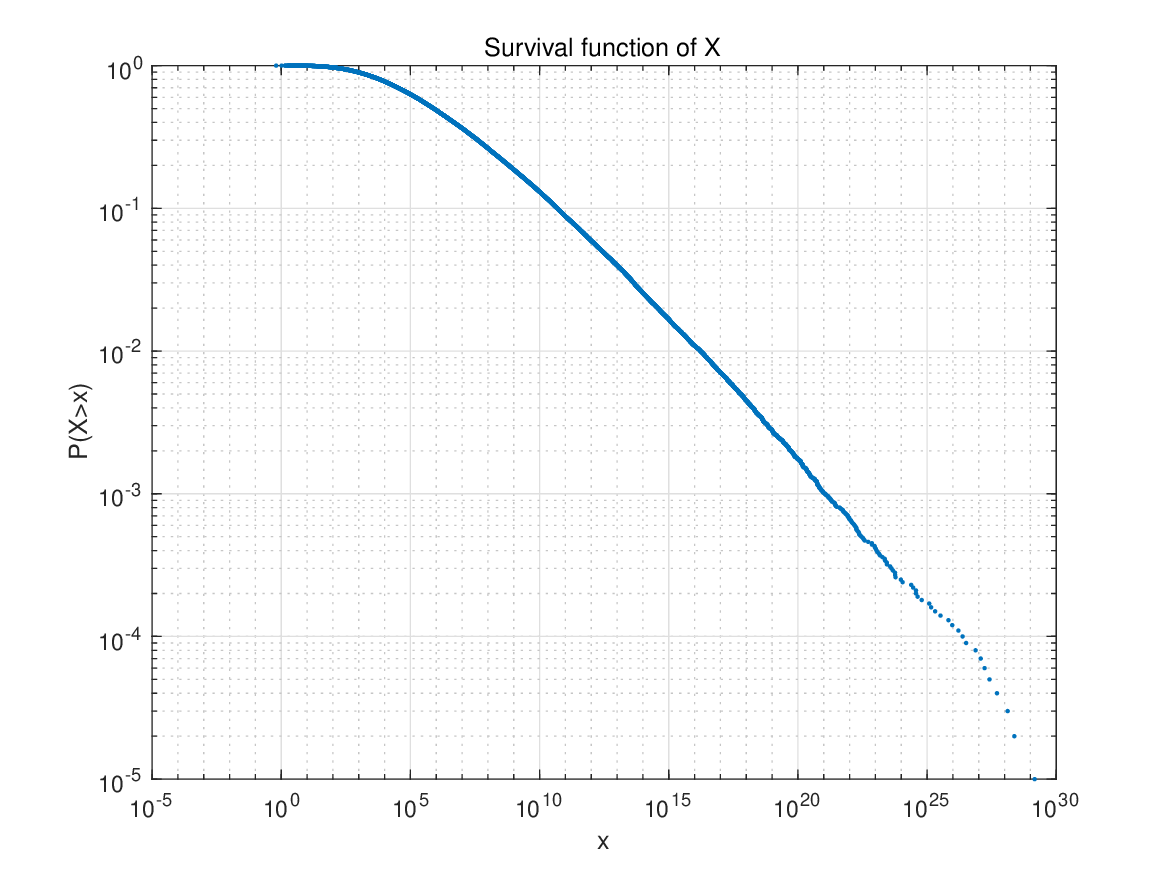} 
\vspace{-0.7cm}
\subcaption{Survival function $\mathbb{P}[X>x]$ with Hill estimate $\hat{\alpha}=0.2018<1$ implies $\mathbb{E}[X]=\infty$.}
\label{fig:survival}
\end{minipage}
\begin{minipage}[b]{0.48\textwidth} 
\centering 
\includegraphics[width=1\textwidth]{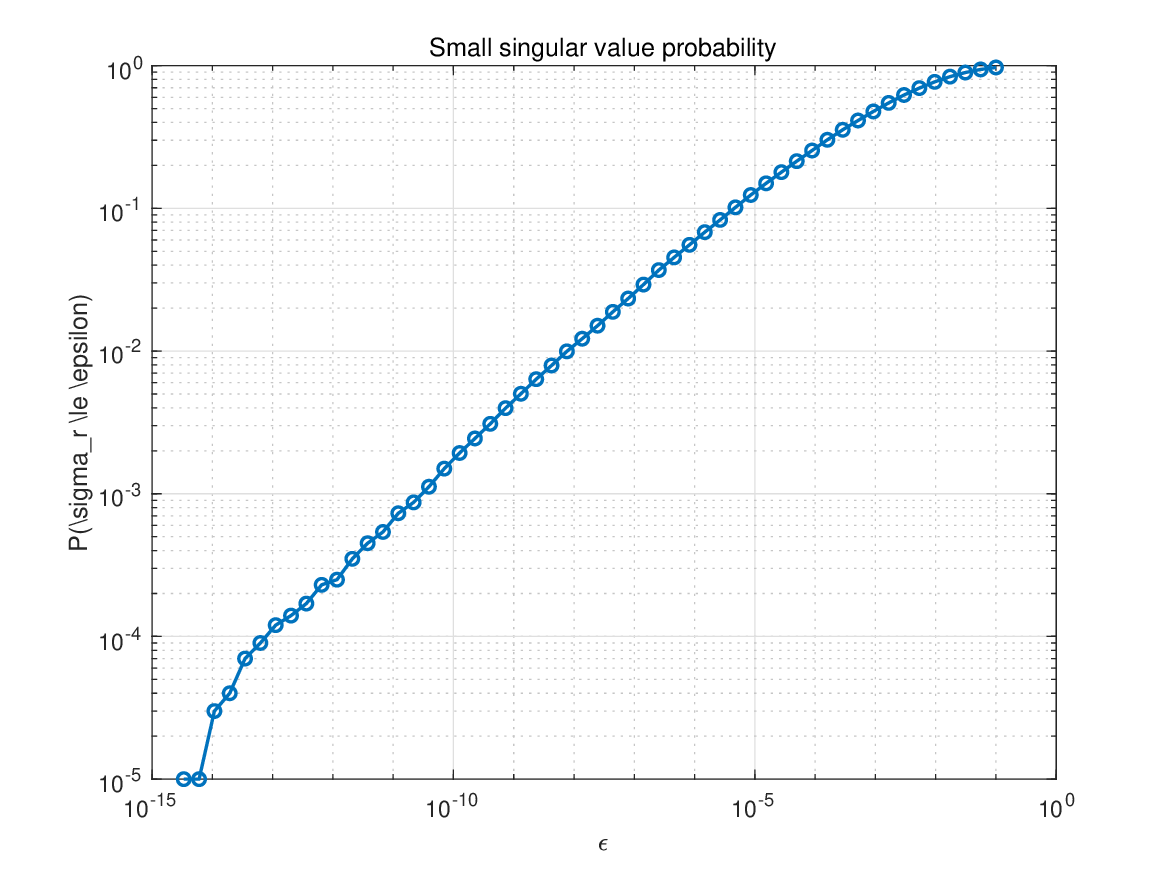} 
\vspace{-0.7cm}
\subcaption{Small singular value probability $\mathbb{P}[\sigma_r(\pA)\le\varepsilon]\sim\varepsilon^{\beta}$ with $\hat{\beta}=0.3669<2$ implies $\mathbb{E}[1/\sigma_r^2]=\infty$.}
\label{fig:small_sv}
\end{minipage}
\begin{minipage}[b]{0.48\textwidth} 
\centering 
\includegraphics[width=1\textwidth]{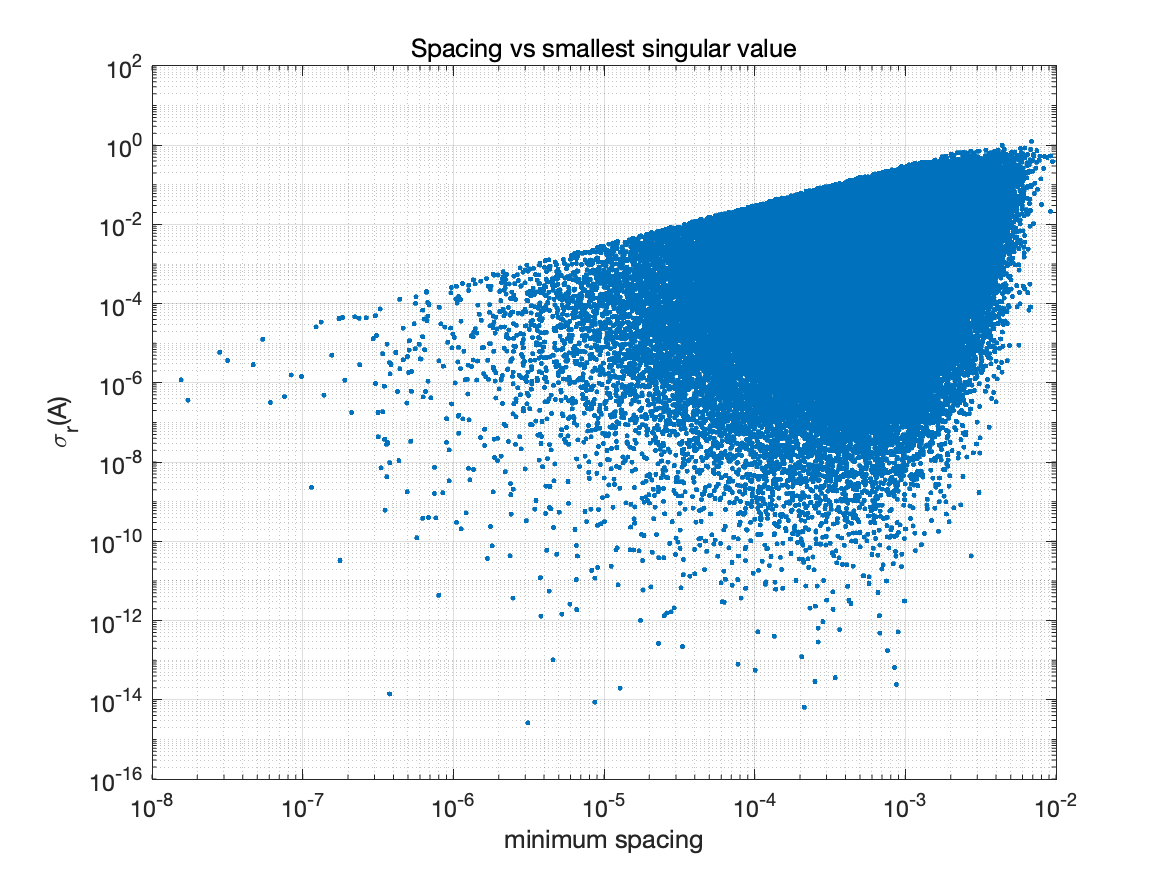} 
\vspace{-0.7cm}
\subcaption{Positive correlation between minimum spacing and $\sigma_r(\pA)$ reveals the geometric origin of extreme events.}
\label{fig:spacing}
\end{minipage}
\caption{Analysis of smallest singular value under the random uniform sampling nodes setting.}
\label{Fig:uniform}
\vspace{-0.6cm}
\end{figure}

Fig.~\ref{fig:hist_x} shows the probability density function (PDF) of $1/\sigma^2_r(\pA)$ in log--log scale. The support spans more than 28 orders of magnitude (up to $10^{28}$), and the density decays slowly over this enormous range, indicating an extremely heavy tail. In contrast, Fig.~\ref{fig:hist_sigma} displays the histogram of $\log_{10}\sigma_r(\pA)$, which is significantly left‑skewed, confirming that $\sigma_r(\pA)$ can take extremely small values (as low as $10^{-15}$), leading to the huge values observed for its reciprocal squared.

To quantify the tail decay, Fig.~\ref{fig:survival} presents the survival function $\mathbb{P}[X > x]$ with $X = 1/\sigma^2_r(\pA)$ in log--log coordinates. The curve is approximately linear, suggesting a power‑law tail $\mathbb{P}[X > x] \sim C x^{-\alpha}$. Applying the Hill estimator yields $\hat{\alpha} = 0.2018$ using the largest tail sample size $k=5000$ observations, which is far below 1. For a non‑negative random variable, $\mathbb{E}[X] < \infty$ if and only if $\alpha > 1$; thus $\hat{\alpha} < 1$ provides strong evidence that $\mathbb{E}[X]$ diverges.

A complementary perspective is given by the small singular value probability. Fig.~\ref{fig:small_sv} plots $\mathbb{P}[\sigma_r(\pA) \le \varepsilon]$ against $\varepsilon$ on a log--log scale. The relation is well approximated by a power law $\mathbb{P}[\sigma_r(\pA) \le \varepsilon] \sim C \varepsilon^{\beta}$ with $\hat{\beta}=0.3669$. 
Since
\[
\mathbb{E}\left[\frac{1}{\sigma_r^2}\right] = \int_0^\infty \mathbb{P}\left[\frac{1}{\sigma_r^2} > t\right] \mrd t = \int_0^\infty \mathbb{P}\left[\sigma_r \le \frac{1}{\sqrt{t}}\right] \mrd t,
\]
the expectation is finite if and only if $\beta > 2$. The estimated $\hat{\beta}=0.3669 < 2$ again indicates divergence.

Finally, Fig.~\ref{fig:spacing} provides evidence for a possible geometric mechanism underlying the observed extreme events. When two sampling nodes become exceptionally close, the corresponding rows of the Fourier matrix become nearly indistinguishable, which can substantially reduce the smallest singular value.

Taken together, these numerical observations provide compelling evidence that the distribution of $1/\sigma^2_r(\pA)$ possesses an extremely heavy tail and are consistent with the conjecture that $\mathbb{E}[1/\sigma^2_r(\pA)] = \infty$\footnote{It is noted that some extreme values approach the double‑precision limit; nevertheless, the conclusion still holds even when these samples are excluded.}. This finding motivates the subsequent theoretical investigation into the precise tail asymptotics and the (non‑)existence of moments for random Fourier matrices in the undersampled regime.

\subsection{Sample (b)}
We first set the noise to satisfy $\pe = (e_1, \dots, e_m)^\tp$ and $\text{Re}(\pe), \text{Im}(\pe) \sim \mathcal{N}(0,\frac{\epsilon^2}{2m}\pI_m)$ in the equidistant sampling model ($\delta_l=0$).
Fig.~\ref{Fig:equid_risk} illustrates the reconstruction risk as a function of the number of sampling points $m$ under the equidistant sampling. The red dots, which correspond to the theoretical values from Theorem \ref{Equidistant2}, closely match the empirical averages for all values of 
$m$.
The numerical results reveal two distinct regimes separated by a sharp transition near $m=N$. 
When $m<N$, the risk decreases slowly with increasing $m$, reflecting the persistent bias caused by the rank deficiency of the sampling matrix. 
As $m$ approaches $N$,  the bias vanishes and the risk exhibits a sudden drop. 
Beyond the transition ($m\ge N$), the reconstruction risk enters a different regime and decays much more smoothly.
In this region, the decay follows the theoretical prediction
\[
\mathbb{E}\|\hat{\px}-\px\|^2 =\frac{N\epsilon^2}{m^2},
\]
showing an algebraic rate of order $m^{-2}$.
The excellent agreement between the numerical results and the theoretical values confirms the sharpness of the analysis in Theorem~\ref{Equidistant2}.

\begin{figure}[!t] 
\vspace{-0.3cm}  
\setlength{\abovecaptionskip}{-0.1cm}   
\setlength{\belowcaptionskip}{0.1cm}   
\centering 
\begin{minipage}[b]{0.45\textwidth} 
\centering 
\includegraphics[width=1\textwidth]{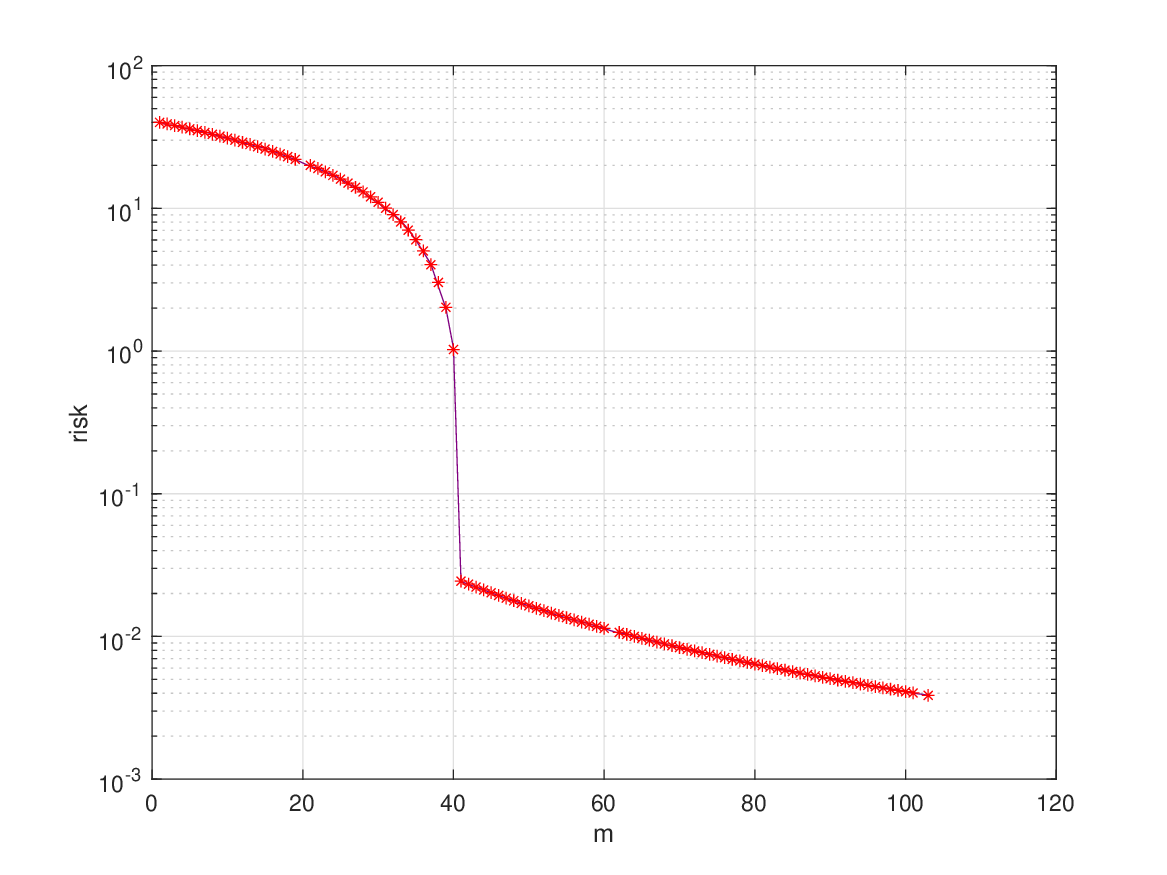} 
\vspace{-0.7cm}
\subcaption{$N=41$.}
\end{minipage}
\begin{minipage}[b]{0.45\textwidth} 
\centering 
\includegraphics[width=1\textwidth]{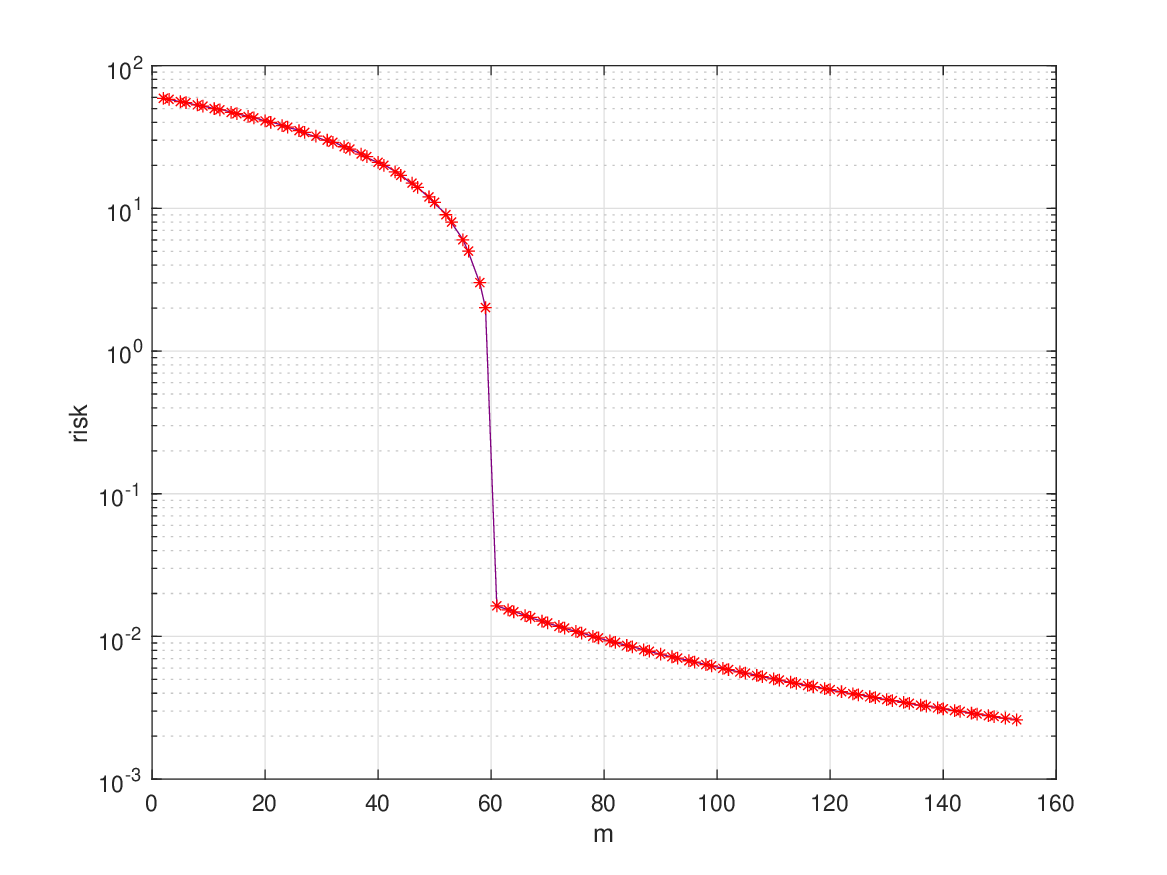} 
\vspace{-0.7cm}
\subcaption{$N=61$.}
\end{minipage}
\caption{Plot of the risk as a function of $m$ in the equidistant sampling model $t_l = \frac{l-1}{m}, l=1,\dots m$. Here, the noise $\pe = (e_1, \dots, e_m)^\tp$ satisfies $\text{Re}(\pe), \text{Im}(\pe) \sim \mathcal{N}(0,\frac{\epsilon^2}{2m}\pI_m)$.
For each  $m$, we generated $1000$ independent realizations of the noise $\pe$, computed the least-squares estimator $\hat{\px}$, and plotted the empirical average of $\|\hat{\px}-\px\|^2$.
The red dots represent numerical evaluations of the theoretical expression in Theorem \ref{Equidistant2}.}
\label{Fig:equid_risk}
\end{figure}

In Fig.~\ref{Fig:random_risk_ud}, we present numerical experiments for the sampling model (b) with small random perturbations in the sampling nodes. The perturbations $\delta_l$ are independently generated according to the prescribed random model. The results show that the sharp phase-transition phenomenon persists in the presence of random perturbations. In particular, a significant change in the reconstruction risk is observed around the critical sampling level $m=N$, where the system transitions from the underdetermined regime ($m<N$) to the overdetermined regime ($m \geq N$).
The blue and red dots represent the lower and upper bounds derived from Theorem \ref{thm:random_smallT_explicit}, respectively. We observe that the lower bound provides a relatively accurate characterization of the reconstruction error over the entire range of $m$. In contrast, the upper bound becomes considerably loose when $m<N$ and $m$ approaches $N$, leading to excessively large estimates. Improving this upper bound, especially near the critical transition region, remains an important direction for future work.
 
\begin{figure}[!t] 
\vspace{-0.3cm}  
\setlength{\abovecaptionskip}{-0.1cm}   
\setlength{\belowcaptionskip}{0.1cm}   
\centering 
\begin{minipage}[b]{0.4\textwidth} 
\centering 
\includegraphics[width=1\textwidth]{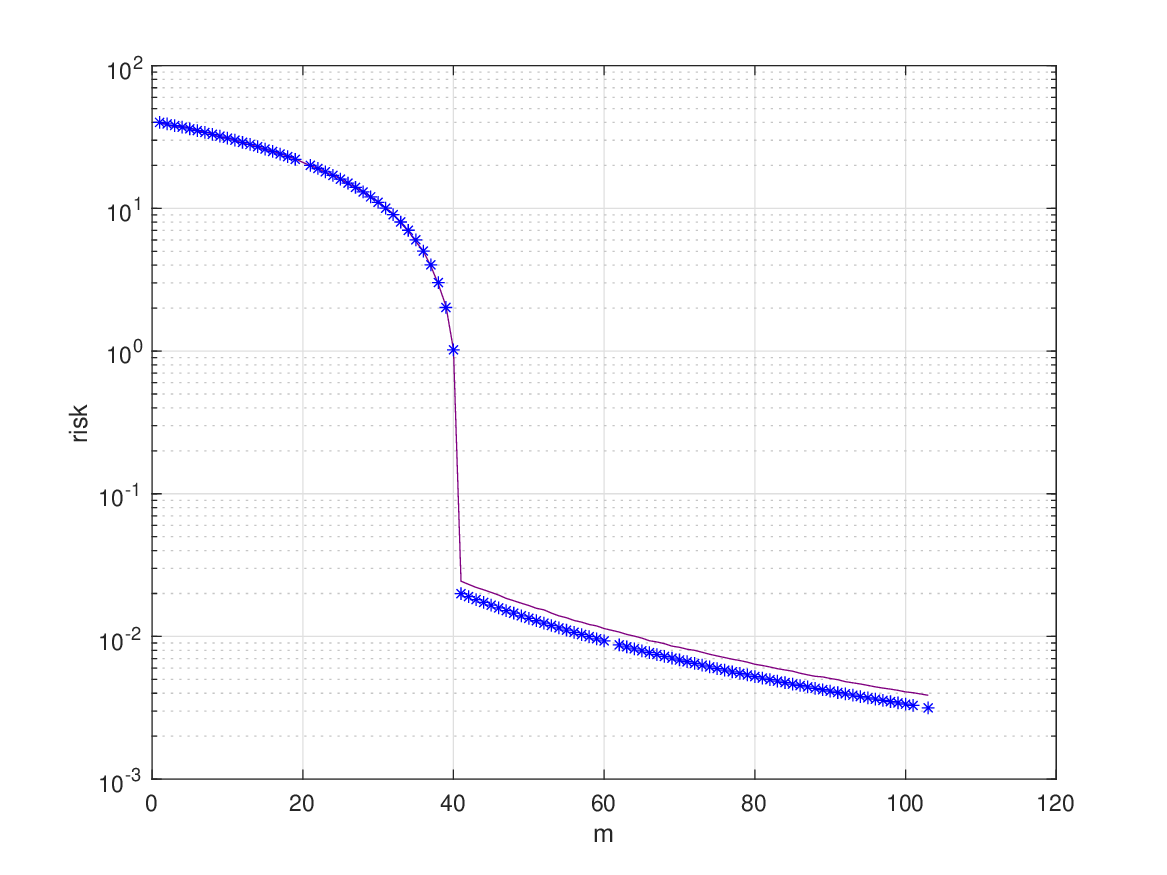} 
\vspace{-0.8cm}
\subcaption{$N=41$.}
\end{minipage}
\begin{minipage}[b]{0.4\textwidth} 
\centering 
\includegraphics[width=1\textwidth]{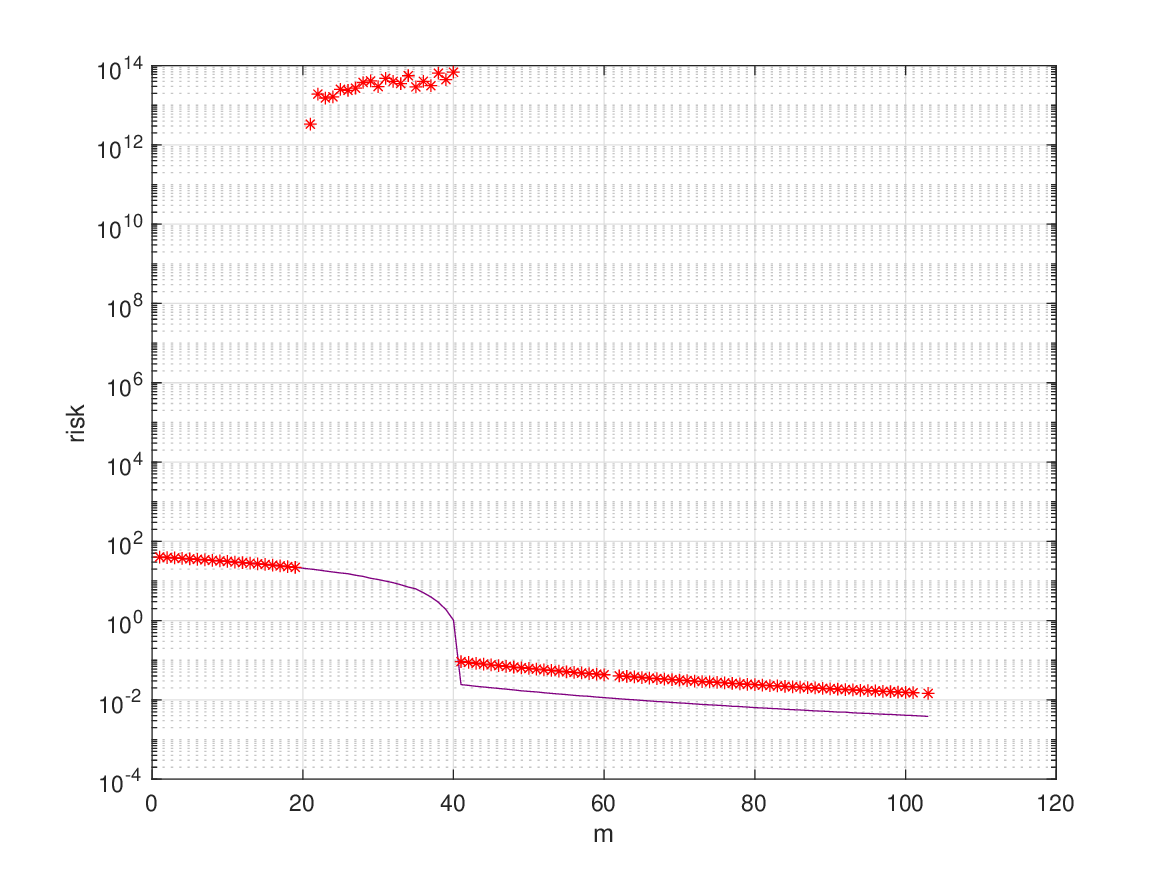} 
\vspace{-0.8cm}
\subcaption{$N=41$.}
\end{minipage}
\begin{minipage}[b]{0.4\textwidth} 
\centering 
\includegraphics[width=1\textwidth]{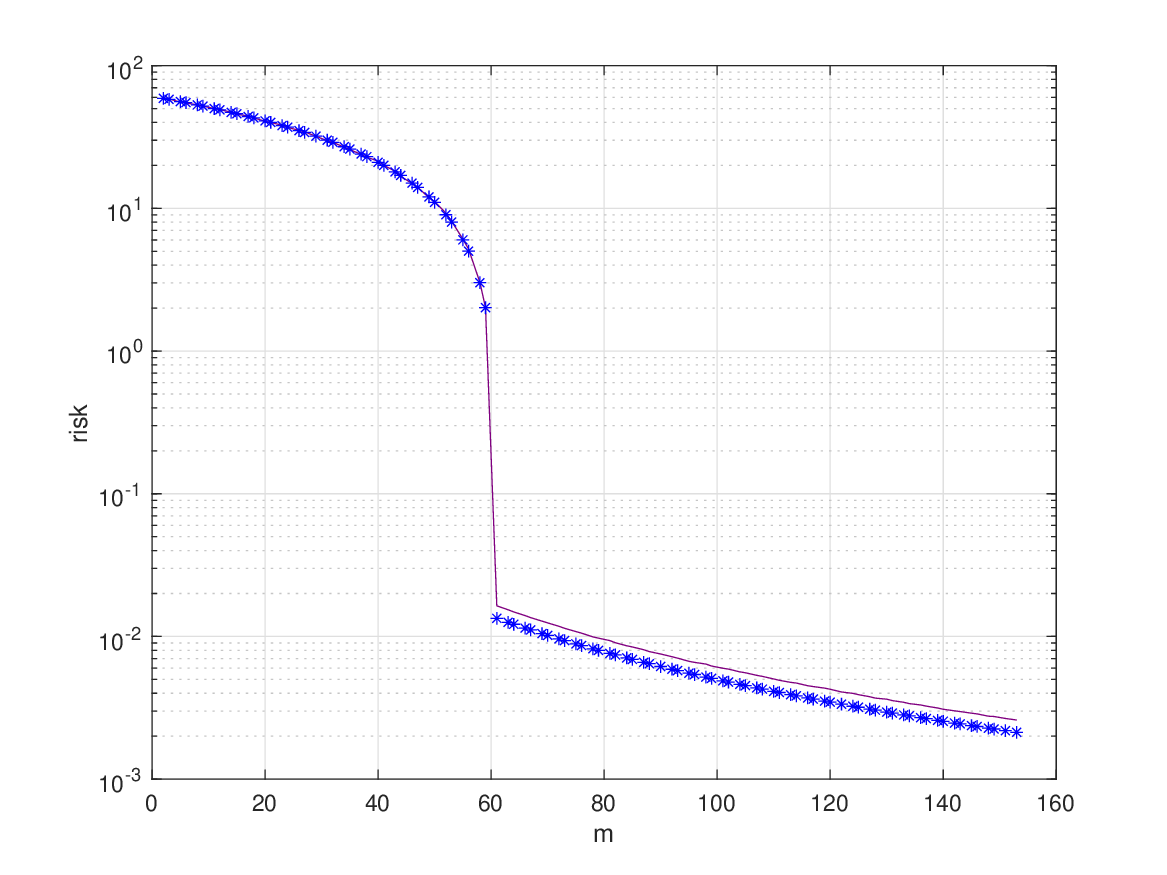} 
\vspace{-0.7cm}
\subcaption{$N=61$.}
\end{minipage}
\begin{minipage}[b]{0.4\textwidth} 
\centering 
\includegraphics[width=1\textwidth]{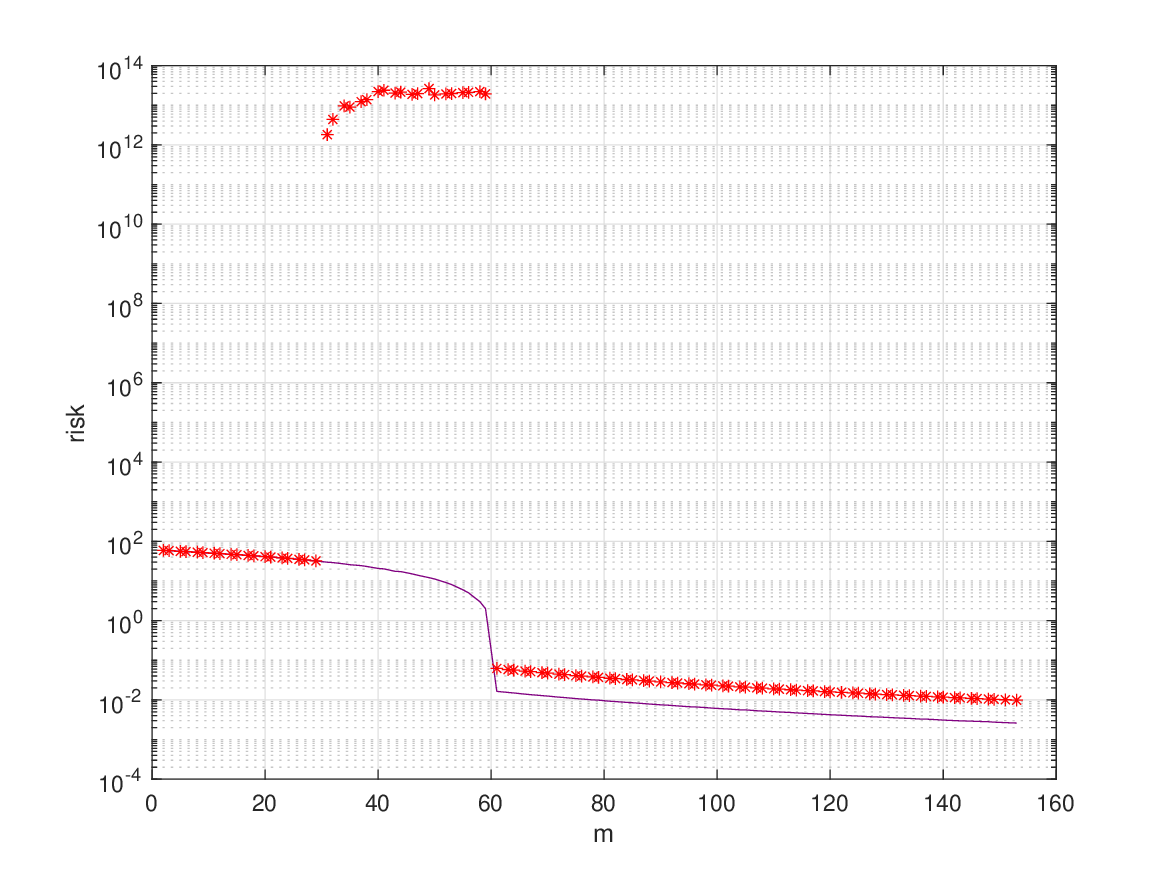} 
\vspace{-0.7cm}
\subcaption{$N=61$.}
\end{minipage}
\caption{
Plot of the risk as a function of $m$ in the Sample model (b) with small random perturbation. The blue and red dots denote the lower and upper bounds, respectively, obtained from Theorem \ref{thm:random_smallT_explicit}.}
\label{Fig:random_risk_ud}
\end{figure}

Following the numerical investigation of the distributional properties of the smallest singular value under randomly and uniformly distributed sampling nodes in Section \ref{sec:num1}, we further perform numerical experiments for two other sampling models: deterministic equidistant sampling and random non-equidistant sampling. The results reveal a significant difference in the spectral behavior of the smallest singular value between these two settings.

Fig.~\ref{Fig:equid1} shows the running mean of $1/\sigma^2_r(\pA)$. Unlike the random sampling case, the empirical mean remains essentially constant throughout the entire sampling process, indicating that no heavy-tail behavior is observed. This is due to the deterministic nature of the sampling matrix, where the smallest singular value remains nearly unchanged across trials.
Fig.~\ref{Fig:equid2} presents the empirical distribution of $1/\sigma^2_r(\pA)$ in log--log scale. The distribution is highly concentrated around a single value, suggesting that the reciprocal squared smallest singular value does not exhibit significant variability. Correspondingly, Fig.~\ref{Fig:equid3} shows that $\log_{10}\sigma_r(\pA)$ is concentrated at a fixed location, confirming the absence of extremely small singular values. Therefore, the instability caused by rare ill-conditioned realizations, which is observed in the random sampling setting, does not occur for deterministic equidistant sampling.

\begin{figure}[!t] 
\vspace{-0.3cm}  
\setlength{\abovecaptionskip}{-0.1cm}   
\setlength{\belowcaptionskip}{0.1cm}   
\centering 
\begin{minipage}[b]{0.45\textwidth} 
\centering 
\includegraphics[width=1\textwidth]{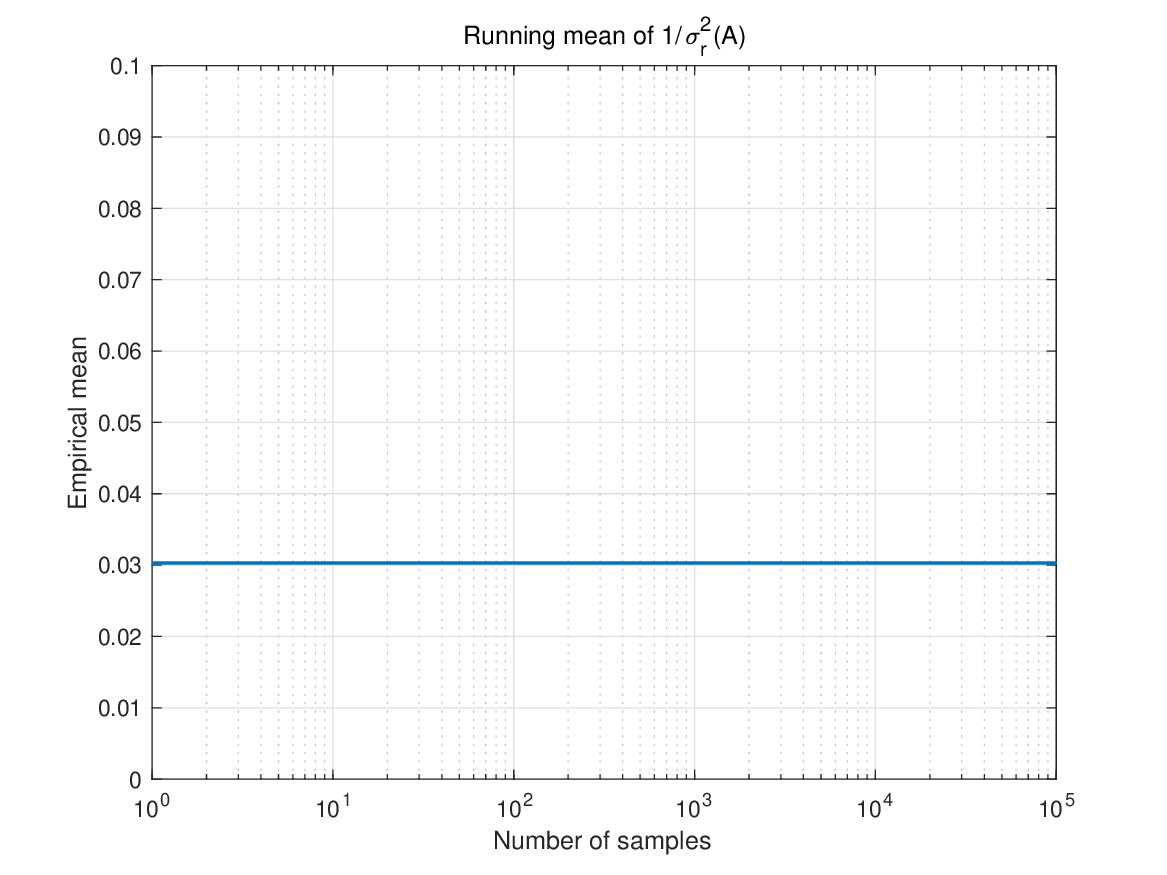} 
\vspace{-0.7cm}
\subcaption{Running mean of $1/\sigma^2_r(\pA)$.}
\label{Fig:equid1}
\end{minipage}
\begin{minipage}[b]{0.45\textwidth} 
\centering 
\includegraphics[width=1\textwidth]{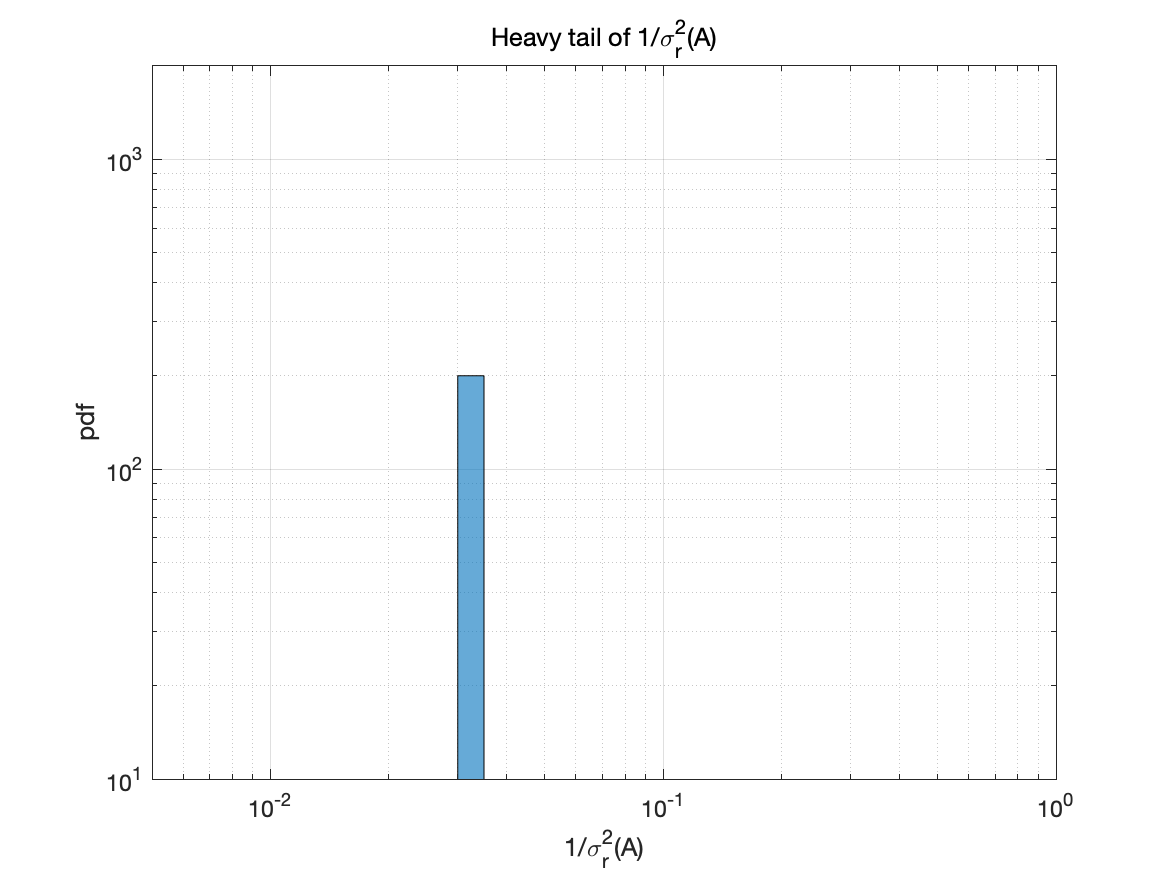} 
\vspace{-0.7cm}
\subcaption{Log–log pdf of $1/\sigma^2_r(\pA)$.}
\label{Fig:equid2}
\end{minipage}
\begin{minipage}[b]{0.45\textwidth} 
\centering 
\includegraphics[width=1\textwidth]{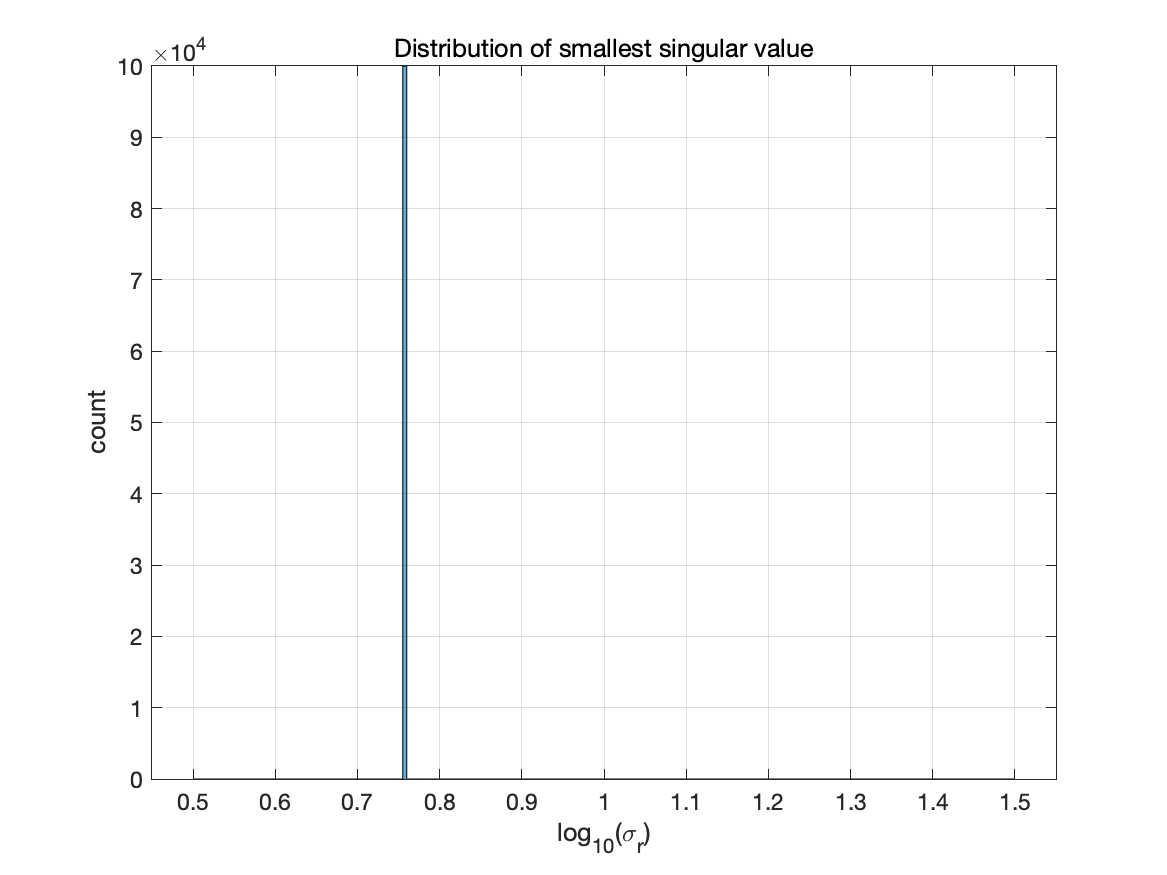} 
\vspace{-0.7cm}
\subcaption{Histogram of $\log_{10}\sigma_r(\pA)$.}
\label{Fig:equid3}
\end{minipage}
\caption{Analysis of smallest singular value under the deterministic equidistant sampling nodes setting.}
\label{Fig:equid}
\vspace{-0.3cm}
\end{figure}

Fig.~\ref{Fig:random1} illustrates the running mean of $1/\sigma^2_r(\pA)$. Although the empirical mean appears to stabilize for the current number of samples, the convergence behavior is much slower and the fluctuations indicate the presence of a nontrivial tail distribution. This suggests that rare sampling configurations with very small singular values may significantly affect the expectation.
Fig.~\ref{Fig:random2} displays the empirical density of $1/\sigma^2_r(\pA)$. Compared with the deterministic case, the distribution spreads over a much wider range, revealing substantial variability caused by random perturbations of the sampling nodes. Fig.~\ref{Fig:random3} further shows the distribution of $\log_10\sigma_r(\pA)$, which has a non-negligible spread and allows $\sigma_r(\pA)$ to attain relatively small values. These small singular-value events lead to large values of $1/\sigma^2_r(\pA)$, explaining the increased instability of the reconstruction process under random sampling.

\begin{figure}[!t] 
\vspace{-0.3cm}  
\setlength{\abovecaptionskip}{-0.1cm}   
\setlength{\belowcaptionskip}{0.1cm}   
\centering 
\begin{minipage}[b]{0.45\textwidth} 
\centering 
\includegraphics[width=1\textwidth]{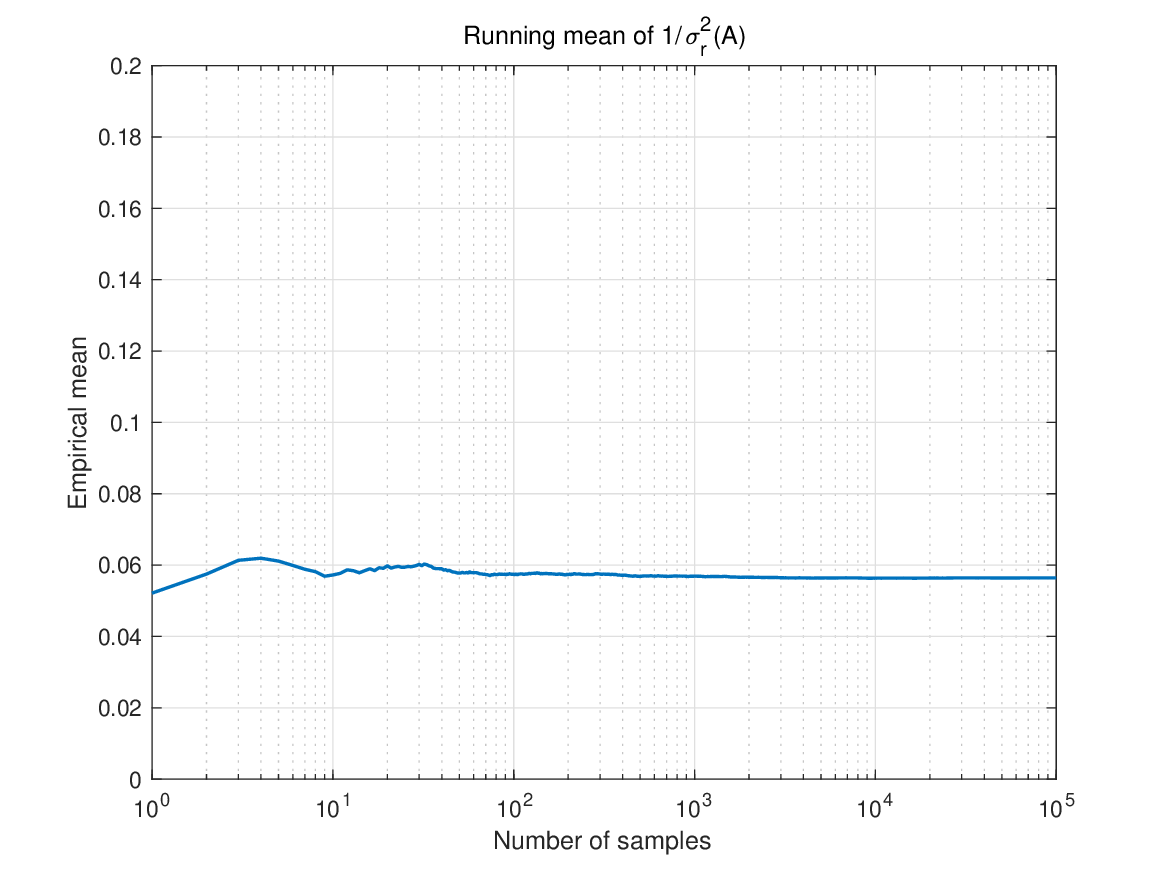} 
\vspace{-0.7cm}
\subcaption{Running mean of $1/\sigma^2_r(\pA)$.}
\label{Fig:random1}
\end{minipage}
\begin{minipage}[b]{0.45\textwidth} 
\centering 
\includegraphics[width=1\textwidth]{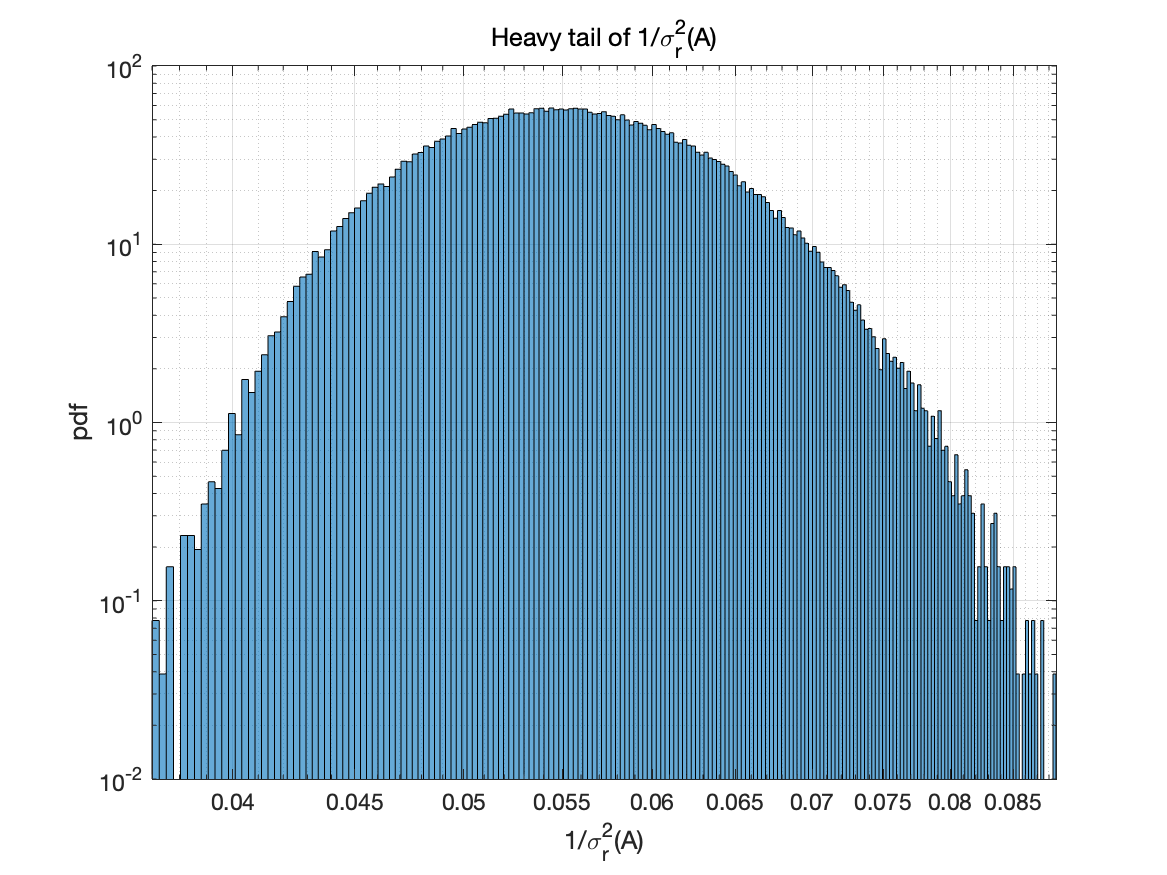} 
\vspace{-0.7cm}
\subcaption{Log–log pdf of $1/\sigma^2_r(\pA)$.}
\label{Fig:random2}
\end{minipage}
\begin{minipage}[b]{0.45\textwidth} 
\centering 
\includegraphics[width=1\textwidth]{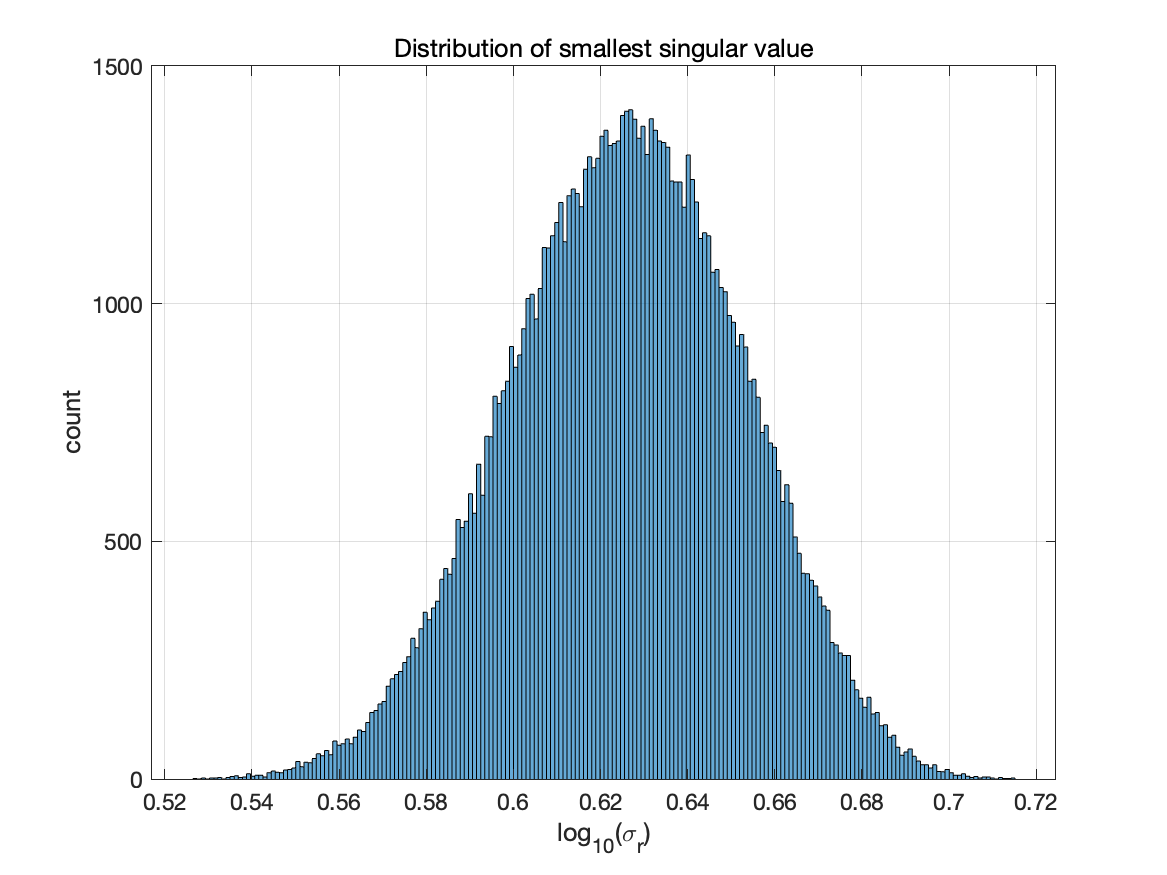} 
\vspace{-0.7cm}
\subcaption{Histogram of $\log_{10}\sigma_r(\pA)$.}
\label{Fig:random3}
\end{minipage}
\caption{Analysis of smallest singular value under the random non-equidistant sampling nodes setting.}
\label{Fig:random}
\end{figure}

\section{Conclusions} \label{sec:conclusions}
In this work, we investigate two sampling models for the random trigonometric polynomial considered in \cite{bass2005random} and provide a theoretical explanation for the distinct reconstruction behaviors observed in numerical experiments. Based on the non-asymptotic expressions for the reconstruction risk, we characterize how the sampling scheme influences the statistical performance of least-squares reconstruction.
For independent uniform sampling, where the sampling nodes $t_l$are drawn i.i.d. from the uniform distribution on $[0,1)$, we prove that the expected reconstruction risk diverges. This result provides a theoretical explanation for the instability observed in numerical experiments, where rare nearly singular sampling configurations dominate the reconstruction error. For equidistant sampling, we derive an explicit expression for the expected reconstruction risk, which agrees with the numerical observations and provides a benchmark for structured sampling schemes.
Furthermore, we investigate jittered sampling, where the sampling nodes are generated as random perturbations of a uniform grid. We establish upper and lower bounds for the expected reconstruction risk in this setting. While the obtained lower bound captures the empirical behavior reasonably well, the upper bound becomes loose in the regime $m<N$ with $m$ close to $N$. This limitation arises from the use of Bernstein-type concentration inequalities in our analysis, which generally do not provide a sufficiently precise characterization of the location of spectral transitions.
Improving these bounds remains an interesting direction for future research. We believe that a more refined analysis of the Fourier--Vandermonde structure, particularly the associated Gram matrix $\pA\pA^*$, may provide a promising approach. Since this matrix can be viewed as a randomly perturbed Toeplitz-type matrix, a direct investigation of its smallest eigenvalue and its concentration behavior may lead to sharper characterizations of the reconstruction risk and a deeper understanding of the spectral mechanisms underlying interpolation transitions.


\section*{Declarations}
The authors declare no competing interests.

\begin{appendices}

\section{Proof of Main Theorems} \label{sec:3}
In this section, we provide the proofs of the main theorems.
\subsection{Proof of Theorem \ref{thm1}}
\begin{proof}[Proof of Theorem \ref{thm1}]
We can get
\begin{align*}
\| \hx - \px \|^2 =& \| \pA^{\dagger}\pA\px -\px +\pA^{\dagger}\pe \|^2  \\
=& \| \pA^{\dagger}\pe \|^2 + \| (\pA^{\dagger}\pA-\pI_N)\px \|^2 + 2 \langle \pA^{\dagger}\pe,  (\pA^{\dagger}\pA-\pI_N)\px\rangle.
\end{align*}
Since $\text{Re}(\pe), \text{Im}(\pe) \sim \mathcal{N}(0,\frac{\epsilon^2}{2m}\pI_m)$, we get
\begin{align}\label{t0}
\mbE[\pe\pe^*] = \frac{\epsilon^2}{m} \pI_m.
\end{align}
From  \eqref{xx} and \eqref{t0}, for given $\{t_l\} \in [0,1]^m$ in $A_{l,k}$,
we then obtain
\begin{align}\label{t1}
\mbE[\| \hx - \px \|^2 | t_l] = \frac{\epsilon^2}{m} \| \pA^{\dagger}\|_F^2 +  \| \pA^{\dagger}\pA-\pI_N \|_F^2,
\end{align}
because $\pe$ is independent of $\px$. 
Assume that the SVD of $\pA$ is $\pA = \pU\pSig\pV^*$ with $\pU \in \mbC^{m \times r}$, $\pV \in \mbC^{N \times r}$ and $\pSig \in \mbC^{r \times r}$.
Based on \eqref{A_dagger}, we first compute that 
\begin{align}\label{t2}
\| \pA^{\dagger}\|_F^2 = \| \pV\pSig^{-1}\pU^* \|_F^2 = \| \pSig^{-1}\|_F^2 = \sum_{i=1}^r \frac{1}{\sigma_i^2(\pA)},
\end{align}
where $r=\rank(\pA)$.
Based on \eqref{A_dagger}, we also compute again that
\begin{align}
\| \pA^{\dagger}\pA-\pI_N \|_F^2 = \| \pI_N -\pV\pV^*\|_F^2.
\end{align}
Since $\pV^*\pV=\pI_{r}$, there exists a matrix $\pV_{\perp} \in \mbC^{N\times (N-r)}$ satisfying $\pV_{\perp}\pV_{\perp}^*+ \pV\pV^* = \pI_N$ and $\pV_{\perp}^*\pV_{\perp} = \pI_{N-r}$.
Then we have
\begin{align}\label{t3}
\| \pA^{\dagger}\pA-\pI_N \|_F^2 = \| \pV_{\perp}\pV_{\perp}^* \|_F^2 =  \langle \pV_{\perp}^*\pV_{\perp},\pV_{\perp}^*\pV_{\perp} \rangle = N-r.
\end{align}
Combining with \eqref{t1}, \eqref{t2} and \eqref{t3}, we can get 
\begin{align}
\mbE[\| \hx - \px \|^2 | t_l] = \frac{\epsilon^2}{m}  \sum_{i=1}^r \frac{1}{\sigma_i^2(\pA)} +N-r.
\end{align}
Then, we have
\begin{align}
\mbE[\| \hx - \px \|^2] =  \frac{\epsilon^2}{m} \mbE \left[ \sum_{i=1}^r \frac{1}{\sigma_i^2(\pA)} \right] + \mbE \left[N-r \right].
\end{align}
\end{proof}

\subsection{Proof of Theorem \ref{Thm:Divergence}}
\begin{proof}[Proof of Theorem \ref{Thm:Divergence}]
Define the smallest nonzero separation
\begin{align}  \label{def:delta}
\Delta^+_{\min} = \min_{i\ne j, t_i\ne t_j}  \dist_T(t_{i},t_j).
\end{align}
And we know that $r = \min\{m,N\}$ almost surely from Proposition \ref{pro4}.

\paragraph{Step 1: Upper bound on the smallest singular value}
By Proposition~\ref{pro:sigmin},  we have
\[
\sigma_{r}(\pA) = \min_{\|\px\|_2=1} \|\pA\px\|_2
\le \frac{\|\pA\pz\|_2}{\|\pz\|_2}  \lesssim N^{3/2}\Delta^+_{\min}.
\]
Therefore, we can get
\begin{align}  \label{eq:lowerbound}
\frac1{\sigma_{r}^2(\pA)}
\gtrsim
\frac1{N^3(\Delta^+_{\min})^2}.
\end{align}

\paragraph{Step 2: Minimal spacing}

\begin{lemma}
\label{lem:minspacing}
Let  $t_1,\dots,t_m $ be i.i.d. random variables uniformly distributed on  $[0,1) $, and assume that  $\varepsilon \le O(\frac{1}{m^2})$. 
Then there exist constants $c,C>0$ such that for all sufficiently small  $\varepsilon>0 $,
\[
c m^2\varepsilon \le \mathbb{P}[\Delta_{\min}\le\varepsilon] \le C m^2\varepsilon.
\]
Particularly, 
\[
c m^2\varepsilon \le \mathbb{P}[\Delta^+_{\min}\le\varepsilon] \le C m^2\varepsilon
\]
also holds for $c,C>0$ and sufficiently small positive constant $\varepsilon \le O(\frac{1}{m^2})$.
\end{lemma}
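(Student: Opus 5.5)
The plan is a union bound for the upper estimate and a pairwise second-moment (Bonferroni) argument for the lower estimate, applied to the events $E_{ij}=\{\dist_T(t_i,t_j)\le\varepsilon\}$ for $1\le i<j\le m$. By construction $\{\Delta_{\min}\le\varepsilon\}=\bigcup_{i<j}E_{ij}$.

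\textbf{Single-pair probability.} Because the $t_l$ are i.i.d. uniform on the torus $[0,1)$, the difference $t_i-t_j \bmod 1$ is again uniform on $[0,1)$, by translation invariance. Hence, for $\varepsilon\le 1/2$,
\[
\mathbb{P}[E_{ij}]=2\varepsilon .
\]

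\textbf{Upper bound.} The union bound immediately gives
\[
\mathbb{P}[\Delta_{\min}\le\varepsilon]\le \binom{m}{2}2\varepsilon\le m^2\varepsilon .
\]

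\textbf{Lower bound.} Bonferroni gives
\[
\mathbb{P}\Bigl[\bigcup E_{ij}\Bigr]\ge \sum_{i<j}\mathbb{P}[E_{ij}]-\sum_{\{i,j\}\ne\{k,l\}}\mathbb{P}[E_{ij}\cap E_{kl}].
\]
I would bound the intersections by cases.
\begin{itemize}
\item If the pairs are disjoint, the events are independent, so the probability is $4\varepsilon^2$.
\item If they share one index, say $E_{ij}\cap E_{il}$, condition on $t_i$. Then $t_j$ and $t_l$ independently fall in an arc of length $2\varepsilon$, so the probability is again $4\varepsilon^2$.
\end{itemize}
The number of pairs of pairs is at most $m^4$, so the correction term is at most $4m^4\varepsilon^2$. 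This gives
\[
\mathbb{P}[\Delta_{\min}\le\varepsilon]\ge m(m-1)\varepsilon-4m^4\varepsilon^2\ge \tfrac{m^2}{2}\varepsilon\cdot\bigl(1-8m^2\varepsilon\bigr)\ \text{(for } m\ge2\text{)},
\]
which is $\ge c\,m^2\varepsilon$ as soon as $\varepsilon\le c_0/m^2$. This is exactly where the hypothesis $\varepsilon\le O(1/m^2)$ enters. Controlling this second-order term is the only real obstacle; the work lies in counting the overlapping-index configurations correctly and in using the shared-index conditioning, rather than assuming independence, to get $O(\varepsilon^2)$.

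\textbf{Passing to $\Delta^+_{\min}$.} Since the $t_l$ are almost surely pairwise distinct (the law is absolutely continuous, as in Proposition \ref{pro4}), we have $\Delta^+_{\min}=\Delta_{\min}$ almost surely. The same two-sided bounds therefore hold for $\mathbb{P}[\Delta^+_{\min}\le\varepsilon]$ with the same constants.
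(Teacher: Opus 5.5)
Your proof is correct and follows the paper's argument: a union bound for the upper estimate, the same $O(\varepsilon^2)$ bounds on $\mathbb{P}[E_{ij}\cap E_{kl}]$ for disjoint and index-sharing pairs in the lower estimate, and $\Delta^+_{\min}=\Delta_{\min}$ almost surely at the end. The only differences are minor. You close the lower bound with the second-order Bonferroni inequality rather than the paper's second-moment (Paley--Zygmund) bound $\mathbb{P}[X\ge 1]\ge(\mathbb{E}X)^2/\mathbb{E}[X^2]$ for $X=\sum\mathbf{1}_{E_{ij}}$, which gives you explicit constants. You also use the torus distance throughout, so $\mathbb{P}[E_{ij}]=2\varepsilon$ exactly; this matches the definition of $\Delta_{\min}$ more faithfully than the paper's use of $|t_i-t_j|$.
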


\begin{proof}
For any fixed pair $(x,y)$ and $\epsilon \in (0,1)$, we can obtain formula 
\[
\mathbb{P}[ |x-y| \leq \varepsilon]=1-(1-\varepsilon)^2 = 2\varepsilon-\varepsilon^2 \leq 2\varepsilon
\]
by calculating the area of the shaded region (see Fig.~\ref{Fig:area_method}). 
\begin{figure}[htbp]
\vspace{-0.4cm}  
\setlength{\abovecaptionskip}{-0.1cm}   
\setlength{\belowcaptionskip}{-0.2cm}   
  \centering
\includegraphics[width=0.45\textwidth]{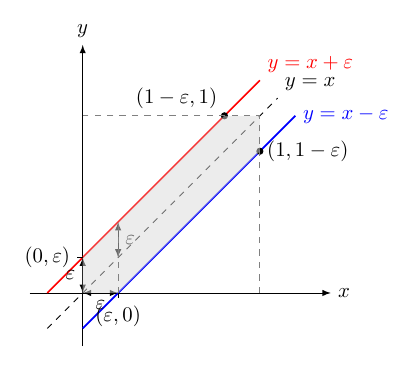}
  \vspace{-0.2cm}
  \caption{Area of the shaded region represents the probability of $|x-y| \leq \varepsilon$.}
  \label{Fig:area_method}
\end{figure}

Define $E_{ij}:=\{|t_i-t_j|\le\varepsilon\}$. 
Then, we know
\vspace{-0.2cm}
\[
 \{\Delta_{\min}\le\varepsilon\} = \bigcup_{i\ne j}E_{ij}.
\vspace{-0.2cm}
\]
By a union bound,
\begin{align*}
 \mathbb{P}[\Delta_{\min}\le\varepsilon] =& \mathbb{P}\Big[\bigcup_{i\ne j} E_{ij} \Big] \leq \sum_{i\ne j}\mathbb{P} [E_{ij}] \\
\leq& \frac{m(m-1)}{2}\mathbb{P}[|{x}-y|\le\varepsilon]
\le m(m-1) \varepsilon \le m^2 \varepsilon.
\end{align*}

For the lower bound, define $X:=\sum_{i\ne j}\mathbf{1}_{E_{ij}}$. 
Then, we know
\[
\{\Delta_{\min}\le\varepsilon\} = \{X\ge1\}.
\]
Moreover,
\[
 \mathbb{E}[X] = \sum_{i\ne j}\mathbb{P}[E_{ij}]= \frac{m(m-1)}{2} (2\varepsilon-\varepsilon^2)
\asymp m^2\varepsilon.
\]
We next estimate the second moment:
\[
X^2 = \sum_{i\ne j}\mathbf{1}_{E_{ij}} + 2\sum_{(i\ne j)\ne(k\ne \ell)} \mathbf{1}_{E_{ij}}\mathbf{1}_{E_{k\ell}}.
\]
Hence, 
\[
\mathbb{E}[X^2] = \mathbb{E}[X] + 2\sum_{(i\ne j)\ne(k\ne \ell)}
\mathbb{P}[E_{ij}\cap E_{k\ell}].
\]
If the pairs \((i,j)\) and \((k,\ell)\) are disjoint, then the events are independent and
\[
\mathbb{P}[E_{ij}\cap E_{k\ell}] = \mathbb{P}[E_{ij}]\mathbb{P}[E_{k\ell}] \lesssim \varepsilon^2.
\]
There are $O(m^4)$ such terms. If the pairs are overlapping, for example
\[
E_{12}\cap E_{13},
\]
which means $|t_1-t_2|\leq \varepsilon$ and $|t_1-t_3|\leq \varepsilon$,
then fixing \(t_1\), both \(t_2\) and \(t_3\) must lie in an interval of length \(2\varepsilon\), thus
\[
\mathbb{P}[E_{12}\cap E_{13}] \lesssim \varepsilon^2.
\]
There are $O(m^3)$ such terms. 
Consequently, we can get
\[
\mathbb{E}[X^2] \lesssim m^2\varepsilon + m^4\varepsilon^2.
\]
Assume now that $\varepsilon \le O(\frac{1}{m^2})$, then $m^4\varepsilon^2 \lesssim m^2\varepsilon$,
and therefore
\[
\mathbb{E}[X^2] \lesssim m^2\varepsilon.
\]
Applying the Paley--Zygmund inequality, we can get
\[
\mathbb{P}[X\ge1] \ge \frac{(\mathbb{E}[X])^2}{\mathbb{E}[X^2]}
\gtrsim m^2\varepsilon.
\]
Since $\mathbb{P}[X\ge1] = \mathbb{P}[\Delta_{\min}\le\varepsilon]$, 
we then have
\[
c m^2\varepsilon \le \mathbb{P}[\Delta_{\min}\le\varepsilon] \le C m^2\varepsilon.
\]

Since $t_1,\dots,t_m$ are continuously distributed,
\[
\mathbb P[t_i=t_j]=0,\qquad i\neq j,
\]
and hence, $\Delta_{\min}=\Delta_{\min}^+$ holds almost surely.
Therefore, the same estimate holds for $\Delta_{\min}^+$.
\end{proof}

\paragraph{Step 3: Divergence of the inverse moment}

Combining \eqref{eq:lowerbound} and Lemma~\ref{lem:minspacing}, we obtain
\[
\mathbb{E}\left[\frac{1}{\sigma_{r}^2(\pA)}\right] \gtrsim \frac{1}{N^3} \mathbb{E}\left[\frac{1}{(\Delta^+_{\min})^2}\right].
\]
We now estimate the inverse moment of $\Delta^+_{\min}$. For any nonnegative random variable $Y$,
\[
\mathbb{E}[Y] = \int_0^\infty \mathbb{P}[Y>t] \mrd t.
\]
Applying this identity to $Y=(\Delta_{\min}^+)^{-2}$, we get
\[
\mathbb{E} \left[\frac1{(\Delta_{\min}^+)^2}\right] = \int_0^\infty \mathbb{P}\left[(\Delta_{\min}^+)^{-2}>t\right]\mrd t.
\]
Since
\[
\{(\Delta_{\min}^+)^{-2}>t\} = \{\Delta^+_{\min}<t^{-1/2}\},
\]
the change of variables $\varepsilon=t^{-1/2}$ and $\mrd t=2\varepsilon^{-3} \mrd\varepsilon$ 
yield
\[
\mathbb{E} \left[\frac1{(\Delta_{\min}^+)^2}\right] = 2\int_0^\infty \varepsilon^{-3} \mathbb{P}[\Delta_{\min}^+<\varepsilon] \mrd\varepsilon.
\]
By Lemma~\ref{lem:minspacing}, we have $\mathbb{P}[\Delta_{\min}^+<\varepsilon] \gtrsim m^2\varepsilon$
for sufficiently small \(\varepsilon\) satisfying $\varepsilon \le c / m^2 $.
Therefore,
\[
 \mathbb{E} \left[\frac{1}{(\Delta_{\min}^+)^2}\right] \gtrsim m^2 \int_0^{c/m^2}
\varepsilon^{-2}\mrd\varepsilon = +\infty.
\]
Consequently,
\[
 \mathbb{E}\left[\frac1{\sigma_{r}^2(\pA)}\right] = +\infty.
\]
\end{proof}

\subsection{Proof of Theorem \ref{Thm2}}
\begin{proof}[Proof of Theorem \ref{Thm2}]
Similar to the proof Theorem \ref{Thm:Divergence}, we have  the following lemma:
\begin{lemma}
\label{lem:minspacing2}
$t_l=\frac{l-1}{m}+\delta_l, l=1,\dots,m$, where $\delta_l\overset{i.i.d.}{\sim} \Unif(0,\frac{1}{m})$, and assume that  $\varepsilon \le O(1 / m^{3/2})$. 
Then there exist constants $c,C>0$ such that for all sufficiently small  $\varepsilon>0 $,
\[
c m^3 \varepsilon^2 \leq \mathbb{P}[ \Delta^+_{\min}\leq \varepsilon] \leq C m^3 \varepsilon^2,
\]
\end{lemma}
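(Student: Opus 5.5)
The plan is to exploit the rigid geometry of the jittered grid: with $T=\frac1m$, each node $t_l=\frac{l-1}{m}+\delta_l$ lies in its own cell $[\frac{l-1}{m},\frac lm)$. Consequently only cyclically adjacent nodes can be close. First I will show that for $\varepsilon<\frac1m$ the event $\{\Delta^+_{\min}\le\varepsilon\}$ is the union of the $m$ events
\[
E_l:=\{D_l\le\varepsilon\},\qquad l=1,\dots,m .
\]
Here $D_l=t_{l+1}-t_l=\frac1m+\delta_{l+1}-\delta_l$ for $l<m$, and for the wrap-around pair $D_m=1+t_1-t_m=\frac1m+\delta_1-\delta_m$.

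Indeed, for $|i-j|\ge2$ (cyclically) one has $t_{l+2}\ge\frac{l+1}{m}>\frac lm>t_l$ together with the analogous wrap-around estimate, so the torus distance exceeds $\frac1m$. Moreover, since the $\delta_l$ are continuous, ties have probability zero, so $\Delta^+_{\min}=\Delta_{\min}$ almost surely. Every $D_l$ has the law of $D$ in Proposition~\ref{prop3_8} with $T=\frac1m$. For $0\le\varepsilon\le\frac1m$ this gives
\[
\mathbb P[E_l]=\frac{m^2\varepsilon^2}{2}.
\]

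\emph{Upper bound.} A union bound over the $m$ adjacent pairs gives
\[
\mathbb P[\Delta^+_{\min}\le\varepsilon]\le m\cdot\frac{m^2\varepsilon^2}{2}=\frac{m^3\varepsilon^2}{2}.
\]

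\emph{Lower bound.} I will use the Bonferroni inequality
\[
\mathbb P\Big[\bigcup_l E_l\Big]\ge\sum_l\mathbb P[E_l]-\sum_{l<l'}\mathbb P[E_l\cap E_{l'}].
\]
The pairwise intersections split into two kinds.

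\begin{itemize}
\item \emph{Overlapping pairs} ($l'=l+1$ cyclically). Here $E_l$ forces $\delta_l-\delta_{l+1}\ge\frac1m-\varepsilon$, hence $\delta_{l+1}\le\varepsilon$. At the same time $E_{l+1}$ forces $\delta_{l+1}\ge\frac1m-\varepsilon$. These are incompatible once $\varepsilon<\frac1{2m}$, so $E_l\cap E_{l+1}=\emptyset$.
\item \emph{Non-overlapping pairs.} These involve disjoint sets of $\delta$'s, so the events are independent and
\[
\mathbb P[E_l\cap E_{l'}]=\frac{m^4\varepsilon^4}{4}.
\]
There are at most $\frac{m^2}{2}$ such pairs.
\end{itemize}

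Putting the two cases together,
\[
\mathbb P[\Delta^+_{\min}\le\varepsilon]\ge\frac{m^3\varepsilon^2}{2}-\frac{m^6\varepsilon^4}{8}.
\]
This is at least $\frac{m^3\varepsilon^2}{4}$ precisely when $m^3\varepsilon^2\le 2$, i.e.\ for $\varepsilon\lesssim m^{-3/2}$, which explains the hypothesis $\varepsilon\le O(m^{-3/2})$. Alternatively, Paley--Zygmund applied to $X=\sum_l\mathbf 1_{E_l}$ works in the same way, because the second moment is controlled by the same two cases.

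The main point needing care is the case analysis of the intersections: verifying that non-adjacent nodes can never be $\varepsilon$-close, and handling the wrap-around pair $(t_m,t_1)$ under the torus distance. The disjointness of consecutive events is what makes the quadratic $m^3\varepsilon^2$ rate clean, rather than the $m^2\varepsilon$ rate of Lemma~\ref{lem:minspacing}.
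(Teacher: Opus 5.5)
Your argument is correct. The upper bound is the same union bound over neighbouring pairs that the paper uses; the real difference is in the lower bound. The paper discards half of the events and keeps only the pairs $(t_{2k-1},t_{2k})$, $k=1,\dots,\lfloor m/2\rfloor$. These depend on disjoint sets of $\delta$'s, so independence gives the closed form $1-(1-m^2\varepsilon^2/2)^{\lfloor m/2\rfloor}$, which is bounded below by $m^3\varepsilon^2/8$ once $m^3\varepsilon^2\lesssim 1$. That route needs no information about how overlapping events interact. You keep all $m$ cyclic neighbour events and apply second-order Bonferroni. This obliges you to prove $E_l\cap E_{l+1}=\emptyset$ for $\varepsilon\le\frac{1}{2m}$, which is a correct and neat consequence of each node being confined to its own cell. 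In return you get the constant $\frac14$ instead of $\frac18$. Your write-up is also more careful on the reduction step. The paper writes $\{\Delta_{\min}\le\varepsilon\}$ as $\bigcup_{i=1}^{m-1}\{t_{i+1}-t_i\le\varepsilon\}$ without explaining why non-neighbouring nodes cannot be $\varepsilon$-close. It also omits the wrap-around pair $(t_m,t_1)$ that the torus distance requires; its generous factor $2(m-1)$ happens to absorb the missing term. Your cell argument and the explicit $D_m=\frac1m+\delta_1-\delta_m$ settle both points. One wording fix: your lower bound needs both $m^3\varepsilon^2\le 2$ and $\varepsilon\le\frac{1}{2m}$, so ``precisely when'' should read ``whenever''. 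The second condition follows from the first for $m\ge 8$ and is otherwise covered by taking $\varepsilon$ sufficiently small.
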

where $\Delta^+_{\min}$ is defined in \eqref{def:delta}.
\begin{proof}[Proof of Lemma \ref{lem:minspacing2}]
For any fixed $i$ with pair $(t_i,t_{i+1})$ and $\epsilon \in (0,1/m)$, we can obtain formula 
\[
 \mathbb{P}[ t_{i+1} - t_i \leq \varepsilon]=\frac{(\varepsilon -\frac{1}{m}+T)^2}{2 T^2} = \frac{m^2 \varepsilon^2}{2}
\]
by setting $T= 1/m$. 
Recalling the definition of $\Delta_{\min}$ in Proposition \ref{pro:sigmin},
by a union bound,
\begin{align*}
 \mathbb{P}[\Delta_{\min}\le\varepsilon] &=\mathbb{P}\left[\bigcup_{i=1}^{m-1}(t_{i+1} - t_i\leq \varepsilon) \right] \\
&\leq \sum_{i=1}^{m-1} \mathbb{P}[t_{i+1}-t_i\le\varepsilon]
\le 2(m-1) \cdot \frac{m^2 \varepsilon^2}{2} \le m^3 \varepsilon^2.
\end{align*}
A matching lower bound follows by considering disjoint events
\[
E_k = \{ t_{2k}-t_{2k-1} \le\varepsilon \}, \quad k=1,\dots, \left\lfloor\frac{m}{2}\right\rfloor.
\]
The events $\{E_k\}_k$  are mutually independent because each event involves only disjoint random variables. It implies that 
$\mathbb{P}[E_k] = m^2 \varepsilon^2 /2$. 
Since $\bigcup_{k=1}^{\lfloor m/2\rfloor} E_k \subset \{ \Delta_{\min} \leq \varepsilon\}$, 
we then get
\begin{align*}
\mathbb{P}[\Delta_{\min} \leq \varepsilon] \geq \mathbb{P}\left[ \bigcup_{k=1}^{\lfloor m/2\rfloor} E_k \right] = 1 - \Big(1- \frac{m^2 \varepsilon^2}{2}\Big)^{\left\lfloor m /2 \right\rfloor}.
\end{align*}
When $\varepsilon \le O(\frac{1}{m^{3/2}})$, we have
$\mathbb{P}[\Delta_{\min} \leq \varepsilon] \geq m^3 \varepsilon^2 /8$. 

Since $t_1,\dots,t_m$ are continuously distributed, we have  $\mathbb P[t_i=t_j]=0$, for $i\neq j$. Hence, $\Delta_{\min}=\Delta_{\min}^+$ holds almost surely. 
Therefore, the same estimate holds for $\Delta_{\min}^+$.
\end{proof}
Then, based on Lemme \ref{lem:minspacing2} and Proposition \ref{pro:sigmin}, we have
\begin{align*}
\mathbb{E}\left[\frac{1}{\sigma_{r}^2(\pA)}\right] &\gtrsim  \mathbb{E} \left[ \frac{1}{(\Delta^+_{\min})^2} \right]
\gtrsim  \int_{0}^{\eta} \frac{1}{\varepsilon^2} \mrd \mathbb{P}[\Delta^+_{\min} \leq \varepsilon] \\
&\gtrsim  \int_{0}^{\eta} \frac{m^3\varepsilon}{\varepsilon^2} \mrd \varepsilon \gtrsim m^3 \int_{0}^{\eta} \frac{1}{\varepsilon} \mrd \varepsilon = + \infty, 
\end{align*}
where $\eta$ is a sufficiently small positive constant.

The asymptotic regimes can be obtained similarly.
\end{proof}

\subsection{Proof of Theorem \ref{Equidistant}}
\begin{proof}[Proof of Theorem \ref{Equidistant}]
For $1\le l,l'\le m$,
\[ 
(\pA\pA^*)_{l,l'} = \sum_{k\in\Gamma} e^{2\pi \mri k(t_l-t_{l'})}.
\vspace{-0.2cm}
\]
Since $t_l-t_{l'} = \frac{l-l'}{m}$, the matrix $\pA\pA^*$ is circulant.

The eigenvectors of a circulant matrix are the discrete Fourier modes. Therefore, for $s=0,\dots,m-1$, the corresponding eigenvalues are
\[
 \lambda_s = \sum_{r=0}^{m-1} (\pA\pA^*)_{1,r+1} e^{-2\pi \mri sr/m}.
\]
Substituting the first row of $\pA\pA^*$ yields
\[
 \lambda_s = \sum_{k\in\Gamma} \sum_{r=0}^{m-1} e^{2\pi \mri (k-s)r/m}.
\]
Using
\[
\sum_{r=0}^{m-1} e^{2\pi \mri (k-s)r/m} = 
\begin{cases}
m, & k\equiv s \pmod m,
\\[4pt]
0, & k\not\equiv s \pmod m,
\end{cases}
\]
we can obtain
\[
 \lambda_s = m \cdot \#\{ k \in \Gamma : k \equiv s \ (\bmod\ m) \}, \quad s=0,\dots,m-1.
\]
Write $N = am + b$ with $a = \lfloor N/m \rfloor$, $0 \le b < m$.
Since $\Gamma$ consists of $N$ consecutive integers, exactly $b$ residue classes modulo $m$ occur $a+1$ times, while the remaining $m-b$ residue classes occur $a$ times.

If $m\ge N$, then $a=0$ and $b=N$. In this case, $N$ residue classes occur once and the remaining $m-N$ residue classes do not occur. Hence
\vspace{-0.2cm}
\[
\lambda_s=
\begin{cases}
m, & \text{for } N \text{ residue classes}, \\[4pt]
0, & \text{for } m-N \text{ residue classes}.
\end{cases}
\vspace{-0.2cm}
\]
Since $r=N$, every nonzero eigenvalue equals $m$.

If $m<N$, when $b=0$, then every residue class occurs exactly $a$ times. Hence
\[
 \lambda_s=ma, \qquad s=0,\dots,m-1.
\]
Therefore, every nonzero eigenvalue equals $ma$.
Since $a = \left\lceil N /m \right\rceil = \left\lfloor N/m \right\rfloor$, we can also say 
the largest eigenvalue equals $ma = m\left\lceil N/m \right\rceil$, 
and the smallest positive eigenvalue equals $ma = m\left\lfloor N/m \right\rfloor$. 
Assume now that $b>0$. Then
\vspace{-0.2cm}
\[
\lambda_s =
\begin{cases}
m(a+1), & \text{for } b \text{ residue classes}, \\[4pt]
ma, & \text{for } m-b \text{ residue classes}.
\end{cases}
\vspace{-0.2cm}
\]
Consequently, the largest eigenvalue equals $m(a+1) = m\left\lceil N/m \right\rceil$, and the smallest positive eigenvalue equals $ma = m\left\lfloor N/m \right\rfloor$.

Since $\pA\pA^*$ is Hermitian positive semidefinite,
its nonzero singular values coincide with its nonzero eigenvalues.
This completes the proof.
\end{proof}

\subsection{Proof of Theorem \ref{thm:random_smallT_explicit}}
\begin{proof}[Proof of Theorem \ref{thm:random_smallT_explicit}]

The proof combines the perturbation estimate established above with the explicit spectrum of the equidistant matrix $\tilde{\pA}\tilde{\pA}^*$. 
Let $\theta_1\ge \theta_2\ge \cdots \ge \theta_r>0$ denote the nonzero singular values of $\tilde{\pA}\tilde{\pA}^*$.
By Weyl's inequality, we know
\[
|\sigma_i(\pA\pA^*)-\theta_i| \le \|\pP\|, \qquad i=1,\ldots,r.
\]
Since $\eta<\theta_r$ and $\|\pP\|\le \eta$, we obtain $\theta_i-\|\pP\|>0, i=1,\ldots,r$. 
Hence, 
\[
 \frac{1}{\theta_i+\|\pP\|}
\le
\frac{1}{\sigma_i(\pA\pA^*)}
\le
\frac{1}{\theta_i-\|\pP\|}.
\]
Summing over $i$ gives
\[
\sum_{i=1}^{r} \frac{1}{\theta_i+\|\pP\|} \le \sum_{i=1}^{r} \frac{1}{\sigma_i(\pA\pA^*)} \le \sum_{i=1}^{r}\frac{1}{\theta_i-\|\pP\|}.
\]
Taking expectations yields
\[
 \mathbb{E} \Bigg[ \sum_{i=1}^{r} \frac{1}{\sigma_i(\pA\pA^*)} \Bigg]
\ge
\sum_{i=1}^{r} \mathbb{E} \Bigg[\frac{1}{\theta_i+\|\pP\|} \Bigg].
\]
Since the function $f_i(x)=\frac1{\theta_i+x}$ is convex on \([0,\infty)\), Jensen's inequality implies
\[
\mathbb{E} \Bigg[ \frac{1}{\theta_i+\|\pP\|} \Bigg]
\ge \frac{1}{\theta_i+\mathbb{E} \|\pP\|}.
\]
Therefore
\[
\mathbb{E} \Bigg[\sum_{i=1}^{r} \frac{1}{\sigma_i(\pA\pA^*)} \Bigg]
\ge \sum_{i=1}^{r} \frac{1}{ \theta_i+\mathbb{E}\|\pP\|
}.
\]
Using $\mathbb{E}\|\pP\| \le 2\alpha\sqrt{mS_1}+mS_1$, we obtain
\[
\mathbb{E} \Bigg[ \sum_{i=1}^{r} \frac{1}{\sigma_i(\pA\pA^*)} \Bigg]
\ge \sum_{i=1}^{r} \frac{1}{\theta_i+2\alpha\sqrt{mS_1}+mS_1}.
\]

For the upper bound, splitting according to the events
\[
\{\|\pP\|\le t_0\} \qquad\text{and}\qquad \{\|\pP\|>t_0\},
\]
gives
\[
 \mathbb{E} \Bigg[\frac{1}{\theta_i-\|\pP\|}\Bigg]
\le \frac{1}{\theta_i-t_0} +\mathbb{P}[\|\pP\|>t_0] \left(\frac{1}{\theta_i-\eta}-\frac{1}{\theta_i-t_0}\right).
\]
Summing over $i$ yields
\[
 \mathbb{E}\Bigg[\sum_{i=1}^{r}\frac{1}{\sigma_i(\pA\pA^*)}\Bigg]
\le\sum_{i=1}^{r}\frac{1}{\theta_i-t_0}+\mathbb{P}[\|\pP\|>t_0]\sum_{i=1}^{r}\left(\frac{1}{\theta_i-\eta}-\frac{1}{\theta_i-t_0}\right).
\]
By the Bernstein estimate established above,
\[
\mathbb{P}[\|\pP\|>t_0] \le p_{t_0}.
\]
Hence, we have
\[
 \mathbb{E} \Bigg[ \sum_{i=1}^{r} \frac{1}{\sigma_i(\pA\pA^*)} \Bigg]
\le \sum_{i=1}^{r} \frac{1}{\theta_i-t_0} + p_{t_0} \sum_{i=1}^{r} \left( \frac{1}{\theta_i-\eta} - \frac{1}{\theta_i-t_0} \right).
\]
\smallskip

\noindent
{\bf Case \(m\ge N\).}
By Theorem~\ref{Equidistant}, we know that $\theta_1=\cdots=\theta_N=m$. 
Moreover,
\[
\alpha=\|\tilde{\pA}\|=\sqrt m.
\]
And if  $T < \frac{\sqrt{2}-1}{\pi N^{3/2}}$, 
we have $\eta<\theta_r$.
Substituting these identities into the general bounds gives
\[
 \mathbb{E}\Bigg[\sum_{i=1}^{r}\frac{1}{\sigma_i(\pA\pA^*)}\Bigg] \ge \frac{N}{m+2m\sqrt{S_1}+mS_1} = \frac{N}{m(1+\sqrt{S_1})^2},
\]
and
\[
 \mathbb{E} \Bigg[ \sum_{i=1}^{r} \frac{1}{\sigma_i(\pA\pA^*)} \Bigg] \le N \left( \frac{1}{m-t_0}+p_{t_0}\Bigl(\frac{1}{m-\eta}-\frac{1}{m-t_0}
\Bigr)\right).
\]
This proves part {\rm(i)}.

\noindent
{\bf Case \(m<N\).}
Write $N=am+b$, $0\le b<m$. 
Again by Theorem~\ref{Equidistant}, the nonzero eigenvalues of $\tilde{\pA}\tilde{\pA}^*$ are
\[
\theta_1=\cdots=\theta_b=m(a+1), \quad \text{and} \quad \theta_{b+1} = \cdots = \theta_m = ma.
\]
Furthermore,
\[
\alpha=\|\tilde{\pA}\| = \sqrt{m(a+1)}.
\]
And if  $T < \frac{\sqrt{2a+1} - \sqrt{a+1}}{\pi N^{3/2}}$, 
we have $\eta<\theta_r$.
Substituting these values into the lower bound gives
\begin{align*}
&\mathbb{E} \left[\sum_{i=1}^{r} \frac{1}{\sigma_i(\pA\pA^*)}\right] \ge \\
&\tiny \qquad \frac{m-b}{m\Bigl(a+2\sqrt{(a+1)S_1}+S_1\Bigr)} + \frac{b}{m\Bigl(a+1+2\sqrt{(a+1)S_1}+S_1\Bigr)} 
\end{align*}

Similarly, substituting into the upper bound yields
\begin{align*}
 \mathbb{E} \left[\sum_{i=1}^{r} \frac{1}{\sigma_i(\pA\pA^*)}\right]
\le
(m-b)\left( \frac{1}{ma-t_0}+ p_{t_0}\Bigl(\frac{1}{ma-\eta}-\frac{1}{ma-t_0}\Bigr) \right) \\
\tiny +
b\left(\frac{1}{m(a+1)-t_0}+p_{t_0}\Bigl(\frac{1}{m(a+1)-\eta}-\frac{1}{m(a+1)-t_0}\Bigr) \right).
\end{align*}
This proves part {\rm(ii)}.
\end{proof}

\section{Proof of Corollaries and Propositions}\label{sec:4}
In this section, we provide  proofs of the corollaries and propositions.
\subsection{Proof of Proposition \ref{pro4}}
\begin{proof}[Proof of Proposition \ref{pro4}]
We only consider the case $m \geq N$, and the case $m<N$ is similar. \\
\indent For any fixed choice of $N$ rows, the corresponding $N\times N$ subdeterminant of $\pA$ is a trigonometric polynomial in $(t_1,\dots , t_m)$.
Moreover, this polynomial is not identically zero, because if the selected nodes are pairwise distinct, the corresponding submatrix is a classical Vandermonde matrix and hence has nonvanishing determinant.
Since the zero set of a nontrivial trigonometric polynomial has Lebesgue measure zero, and since the distributions of $t_1, \dots, t_m$ are absolutely continuous with respect to the Lebesgue measure, each such determinant is nonzero almost surely.
Therefore the event that all $N\times N$ minors vanish simultaneously has probability zero. Hence, $\mathbb{P}[\rank(\pA)=N]=1$. 
Thus $\pA^*\pA$ is almost surely Hermitian positive definite. Its eigenvalues satisfy 
\[
0 < \lambda_{r}(\pA^*\pA) \leq \dots \leq \lambda_{1}(\pA^*\pA) < \infty \quad \text{almost surely}.
\]
Since $\pA^*\pA$ is finite-dimensional, $\sum_{i=1}^{r} \frac{1}{\sigma_i^2(\pA)}$ is finite almost surely.
\end{proof}

\subsection{Proof Sketch of Proposition \ref{pro3}}
\begin{proof}[Proof Sketch of Proposition \ref{pro3}]
We set $\pv(t) = \left( e^{2\pi  \mathrm{i}kt} \right)_{k\in \Gamma} \in \mathbb{C}^N$ and 
$\pX =  \pv(t_l)\pv(t_l)^* - \pI_N$.  
Since $t_l$ is chosen independently and uniformly at random, $\pv(t_l)$ and $X_l$ are independent for different $l$. We have
\[
\mathbb{E} \left[ \pv(t_l)\pv(t_l)^* \right]_{k_1,k_2} = \mathbb{E}\left[e^{2\pi  \mathrm{i}(k_1-k_2)t} \right] = \left\{
\begin{array}{lr}
1, &k_1=k_2,\\
0, &k_1 \ne k_2.
\end{array}
\right.
\]
We have
\[
\| \pX_l\| \leq \| \pv(t_l)\pv(t_l)^* \| + \|\pI_N\| = N+1 \triangleq  R
\]
and $\sigma^2 \triangleq \left\| \sum_{l=1}^m \mbE \pX_l^2 \right\|$. 
Since $
\pX_l^2 = [\pv(t_l)\pv(t_l)^* - \pI_N]^2 \leq 2\pv(t_l)\pv(t_l)^*$
and $\mbE \pX_l^2 \preceq 2N \pI_N$,
we have $\sigma^2 \leq 2mN$.
\end{proof}

\subsection{Proof of Proposition \ref{pro:sigmin}}
\begin{proof}[Proof of Proposition \ref{pro:sigmin}]
For pairwise distinct nodes $\{t_l\}_{l=1}^{m}$, 
denote $(\bar{i},\bar{j}) = \arg\min_{i\ne j} \dist_T(t_{i},t_j)$. 
Set $\pa(t_i) = \left( e^{2\pi \mathrm{i} kt_i} \right)_{k=-q}^{q} \in \mathbb{C}^N$ as the $i$-row of $\pA$.
Consider the vector $\pz=\pe_{\bar{i}}-\pe_{\bar{j}}\in\mathbb C^m$, for which $\|\pz\|_2=\sqrt{2}$.
Then, we have $\pA\pz = \pa(t_{\bar{i}})-\pa(t_{\bar{j}})$. 
We compute
\begin{align*}
&\|  \pa(t_{\bar{i}})-\pa(t_{\bar{j}}) \|_2^2 = \sum_{k=-q}^q |e^{2\pi \mri k \Delta_{\min}}-1|^2 \\
\le&
(2\pi)^2 \Delta_{\min}^2\sum_{k=-q}^q k^2 
= \frac{4\pi^2q(q+1)(2q+1)}{3} \Delta_{\min}^2
\lesssim N^3\Delta_{\min}^2,
\end{align*}
where the first inequality comes from $|e^{\mri x}-1|\leq |x|$. By the variational characterization of the smallest singular value, we have
\[
\sigma_{r}(\pA) = \min_{\|\px\|_2=1} \|\pA\px\|_2
\le \frac{\|\pA\pz\|_2}{\|\pz\|_2}  \lesssim N^{\frac{3}{2}} \Delta_{\min}.
\]
For any nodes $\{t_l\}_{l=1}^{m} \subset [0,1)^m$, the argument follows in exactly the same manner 
by removing coincident nodes $t_l$.
\end{proof}

\subsection{Proof of Proposition \ref{prop3_8}}
\begin{proof}[Proof of Proposition \ref{prop3_8}]
Set $X =  \delta_2 -  \delta_1$. It is easy to get its Probability Density Function (PDF): 
\begin{align}
f_X(x) = \left\{
\begin{array}{lr}
\frac{T - |x|}{T^2}, &y \in (-T, T),\\
0, &\text{otherwise}.
\end{array}
\right.
\end{align}
This is a symmetrically triangular distribution, supported by $(-T, T)$, with a peak at $x = 0$ and a peak value of $\frac{1}{T}$. 
Since $Y = c + \delta_2 -  \delta_1 \geq 0$, we can obtain the PDF of $Y$
\begin{align} \label{PDF_Y}
f_Y(y) = f_X\Big(y-\frac{1}{m}\Big) = \left\{
\begin{array}{lr}
\frac{T - |y-c|}{T^2}, &y \in (c -T, c+T),\\
0, &\text{otherwise}.
\end{array}
\right.
\end{align}
The support interval is $(c-T, c+T)$. The PDF is a triangular distribution symmetric about $y=c$, with a peak value of $1/T$ at $y=c$. Based on \eqref{PDF_Y}, we can get 
\begin{itemize}
\item When $y < c-T$, 
\[
F_Y(y) = 0;
\]
\item When $c-T \leq y \leq c$, 
\[
F_Y(y) = \int_{c-T}^{y} \frac{T-(c-t)}{T^2} \mrd t = \frac{(y-c+T)^2}{2 T^2};
\]
\item When $c < y \leq c+T$, 
\[
F_Y(y) = 1- \int^{c+T}_{y} \frac{T-(t-c)}{T^2} \mrd t = 1 - \frac{(y-c-T)^2}{2 T^2};
\]
\item When $y > c+T$, 
\[
F_Y(y) = 1.
\]
\end{itemize}
Setting $c = \frac{1}{m}$, we can complete the proof.
\end{proof}

\subsection{Proof of Corollary \ref{cor3_11}}
\begin{proof}[Proof of Corollary \ref{cor3_11}]
If $m\ge N$, then all $r=N$ nonzero singular values are equal to $m$.

If $m<N$, then $r=m$, and by Theorem \ref{Equidistant}, there are exactly $b$ singular values equal to $m(a+1)$ and $m-b$ singular values equal to $ma$. Therefore, we have
\vspace{-0.2cm}
\[
 \sum_{i=1}^{r}\frac1{\sigma_i(\pA\pA^*)} = \frac{b}{m(a+1)} + \frac{m-b}{ma} = \frac{m+N-2b}{ma(a+1)}.
\]
\end{proof}

\subsection{Proof of Corollary \ref{cor3_12}}
\begin{proof}[Proof of Corollary \ref{cor3_12}]
If $m<N$, then $r=m$, and by Theorem \ref{Equidistant}, there are exactly $b$ singular values equal to $m(a+1)$ and $m-b$ singular values equal to $ma$. Therefore, we have
\vspace{-0.2cm}
\[
ma \le \sigma_i(\pA\pA^*) \le m(a+1) \quad \text{and}\quad \frac{1}{m(a+1)} \le \frac{1}{\sigma_i(\pA\pA^*)} \le \frac{1}{ma}.
\vspace{-0.2cm}
\]
Then we can get 
\vspace{-0.2cm}
\[
 \frac{1}{\Bigl\lfloor\frac{N}{m}\Bigr\rfloor+1}=\frac{1}{a+1} \le \sum_{i=1}^{r} \frac{1}{\sigma_i(\pA\pA^*)} \le \frac{1}{a}= \frac{1}{\Bigl\lfloor\frac{N}{m}\Bigr\rfloor}.
\vspace{-0.2cm}
\]
\end{proof}

\end{appendices}


\bmhead{Funding}
This work was  supported in part by the NSFC under grant numbers 12371101, 12401124 and the Natural Science Foundation of Zhejiang Province under grant numbers LQN25A010002.

\bibliography{reference}

@article{farrell2011limiting,
  title={Limiting empirical singular value distribution of restrictions of discrete Fourier transform matrices},
  author={Farrell, Brendan},
  journal={Journal of Fourier Analysis and Applications},
  volume={17},
  number={4},
  pages={733--753},
  year={2011},
  publisher={Springer}
}

@article{potts2003fast,
  title={Fast summation at nonequispaced knots by NFFT},
  author={Potts, Daniel and Steidl, Gabriele},
  journal={SIAM Journal on Scientific Computing},
  volume={24},
  number={6},
  pages={2013--2037},
  year={2003},
  publisher={SIAM}
}

@book{christensen2003introduction,
  title={An introduction to frames and Riesz bases},
  author={Christensen, Ole and others},
  volume={7},
  year={2003},
  publisher={Springer},
  address={Boston: Birkh{\"a}user}
}

@article{kunis2020smallest,
  title={On the smallest singular value of multivariate Vandermonde matrices with clustered nodes},
  author={Kunis, Stefan and Nagel, Dominik},
  journal={Linear Algebra and its Applications},
  volume={604},
  pages={1--20},
  year={2020},
  publisher={Elsevier}
}

@article{barnett2022exponentially,
  title={How exponentially ill-conditioned are contiguous submatrices of the Fourier matrix?},
  author={Barnett, Alex H},
  journal={Siam Review},
  volume={64},
  number={1},
  pages={105--131},
  year={2022},
  publisher={SIAM}
}

@article{xie2022overparameterization,
  title={Overparameterization and generalization error: weighted trigonometric interpolation},
  author={Xie, Yuege and Chou, Hung-Hsu and Rauhut, Holger and Ward, Rachel},
  journal={SIAM Journal on Mathematics of Data Science},
  volume={4},
  number={2},
  pages={885--908},
  year={2022},
  publisher={SIAM}
}

@article{chen2024conditioning,
  title={Conditioning of random Fourier feature matrices: double descent and generalization error},
  author={Chen, Zhijun and Schaeffer, Hayden},
  journal={Information and Inference: A Journal of the IMA},
  volume={13},
  number={2},
  pages={iaad054},
  year={2024},
  publisher={Oxford University Press}
}

@article{belkin2020two,
  title={Two models of double descent for weak features},
  author={Belkin, Mikhail and Hsu, Daniel and Xu, Ji},
  journal={SIAM Journal on Mathematics of Data Science},
  volume={2},
  number={4},
  pages={1167--1180},
  year={2020},
  publisher={SIAM}
}

@article{muthukumar2020harmless,
  title={Harmless interpolation of noisy data in regression},
  author={Muthukumar, Vidya and Vodrahalli, Kailas and Subramanian, Vignesh and Sahai, Anant},
  journal={IEEE Journal on Selected Areas in Information Theory},
  volume={1},
  number={1},
  pages={67--83},
  year={2020},
  publisher={IEEE}
}

@article{mei2022generalization,
  title={The generalization error of random features regression: Precise asymptotics and the double descent curve},
  author={Mei, Song and Montanari, Andrea},
  journal={Communications on Pure and Applied Mathematics},
  volume={75},
  number={4},
  pages={667--766},
  year={2022},
  publisher={Wiley Online Library}
}

@book{grochenig2001foundations,
  title={Foundations of time-frequency analysis},
  author={Gr{\"o}chenig, Karlheinz},
  volume={359},
  year={2001},
  publisher={Springer},
  address={Boston: Birkh{\"a}user}
}

@article{feichtinger1992irregular,
  title={Irregular sampling theorems and series expansions of band-limited functions},
  author={Feichtinger, Hans G and Gr{\"o}chenig, Karlheinz},
  journal={Journal of Mathematical Analysis and Applications},
  volume={167},
  number={2},
  pages={530--556},
  year={1992},
  publisher={Elsevier}
}

@inproceedings{kadets1964exact,
  title={The exact value of the Paley--Wiener constant},
  author={Kadets, Mikhail Iosifovich},
  booktitle={Doklady Akademii Nauk},
  volume={155},
  number={6},
  pages={1253--1254},
  year={1964},
  organization={Russian Academy of Sciences}
}

@article{vershynin2019high,
  title={High-dimensional probability},
  author={Vershynin, Roman},
  journal={Cambridge Series in Statistical and Probabilistic Mathematics},
  volume={47},
  year={2019}
}

@article{tao2010random,
  title={Random matrices: The distribution of the smallest singular values},
  author={Tao, Terence and Vu, Van},
  journal={Geometric And Functional Analysis},
  volume={20},
  number={1},
  pages={260--297},
  year={2010},
  publisher={Springer}
}

@article{rudelson2008littlewood,
  title={The Littlewood--Offord problem and invertibility of random matrices},
  author={Rudelson, Mark and Vershynin, Roman},
  journal={Advances in Mathematics},
  volume={218},
  number={2},
  pages={600--633},
  year={2008},
  publisher={Elsevier}
}

@article{edelman1988eigenvalues,
  title={Eigenvalues and condition numbers of random matrices},
  author={Edelman, Alan},
  journal={SIAM Journal on Matrix Analysis and Applications},
  volume={9},
  number={4},
  pages={543--560},
  year={1988},
  publisher={SIAM}
}

@article{adcock2012generalized,
  title={A generalized sampling theorem for stable reconstructions in arbitrary bases},
  author={Adcock, Ben and Hansen, Anders C},
  journal={Journal of Fourier Analysis and Applications},
  volume={18},
  number={4},
  pages={685--716},
  year={2012},
  publisher={Springer}
}

@article{rauhut2007random,
  title={Random sampling of sparse trigonometric polynomials},
  author={Rauhut, Holger},
  journal={Applied and Computational Harmonic Analysis},
  volume={22},
  number={1},
  pages={16--42},
  year={2007},
  publisher={Elsevier}
}

@article{candes2006robust,
  title={Robust uncertainty principles: Exact signal reconstruction from highly incomplete frequency information},
  author={Cand{\`e}s, Emmanuel J and Romberg, Justin and Tao, Terence},
  journal={IEEE Transactions on Information Theory},
  volume={52},
  number={2},
  pages={489--509},
  year={2006},
  publisher={IEEE}
}

@article{averbuch2001fast,
  title={Fast slant stack: A notion of radon transform for data in a cartesian grid which is rapidly computible, algebraically exact, geometrically faithful and invertible},
  author={Averbuch, Amir and Coifman, RR and Donoho, DL and Israeli, Moshe and Walden, Johan},
  journal={SIAM Scientific Computing},
  volume={37},
  number={3},
  pages={192--206},
  year={2001},
  publisher={to appear}
}

@article{dutt1993fast,
  title={Fast Fourier transforms for nonequispaced data},
  author={Dutt, Alok and Rokhlin, Vladimir},
  journal={SIAM Journal on Scientific Computing},
  volume={14},
  number={6},
  pages={1368--1393},
  year={1993},
  publisher={SIAM}
}

@article{strohmer1997computationally,
  title={Computationally attractive reconstruction of bandlimited images from irregular samples},
  author={Strohmer, Thomas},
  journal={IEEE Transactions on Image Processing},
  volume={6},
  number={4},
  pages={540--548},
  year={1997},
  publisher={IEEE}
}

@article{rauth1998smooth,
  title={Smooth approximation of potential fields from noisy scattered data},
  author={Rauth, Michael and Strohmer, Thomas},
  journal={Geophysics},
  volume={63},
  number={1},
  pages={85--94},
  year={1998},
  publisher={Society of Exploration Geophysicists}
}

@inproceedings{strohmer1996recover,
  title={How to recover smooth object boundaries in noisy medical images},
  author={Strohmer, Thomas and Binder, Thomas and Sussner, M},
  booktitle={Proceedings of 3rd IEEE International Conference on Image Processing},
  volume={1},
  pages={331--334},
  year={1996},
  organization={}
}

@article{potts2001fast,
  title={Fast Fourier transforms for nonequispaced data: A tutorial},
  author={Potts, Daniel and Steidl, Gabriele and Tasche, Manfred},
  journal={Modern Sampling Theory: Mathematics and Applications},
  pages={247--270},
  year={2001},
  publisher={Springer}
}

@article{grochenig1993discrete,
  title={A discrete theory of irregular sampling},
  author={Gr{\"o}chenig, Karlheinz},
  journal={Linear Algebra and Its Applications},
  volume={193},
  pages={129--150},
  year={1993},
  publisher={Elsevier}
}

@article{bass2005random,
  title={Random sampling of multivariate trigonometric polynomials},
  author={Bass, Richard F and Gr{\"o}chenig, Karlheinz},
  journal={SIAM Journal on Mathematical Analysis},
  volume={36},
  number={3},
  pages={773--795},
  year={2005},
  publisher={SIAM}
}

@article{tropp2012user,
  title={User-friendly tail bounds for sums of random matrices},
  author={Tropp, Joel A},
  journal={Foundations of Computational Mathematics},
  volume={12},
  number={4},
  pages={389--434},
  year={2012},
  publisher={Springer}
}

@inproceedings{moitra2015super,
  title={Super-resolution, extremal functions and the condition number of Vandermonde matrices},
  author={Moitra, Ankur},
  booktitle={Proceedings of the forty-seventh annual ACM Symposium on Theory of Computing},
  pages={821--830},
  year={2015}
}

@article{candes2014towards,
  title={Towards a mathematical theory of super-resolution},
  author={Cand{\`e}s, Emmanuel J and Fernandez-Granda, Carlos},
  journal={Communications on Pure and Applied Mathematics},
  volume={67},
  number={6},
  pages={906--956},
  year={2014},
  publisher={Wiley Online Library}
}

\end{document}